\DeclareSymbolFont{symbolsC}{U}{npxsyc}{m}{n}
\DeclareMathSymbol{\multimapdotboth}{\mathrel}{symbolsC}{22}
\newcommand{\dbot}{\mathbin{\text{$\bot\mkern-8mu\bot$}_{\text{CI}}}}
\let\oldperp\perp
\renewcommand{\perp}{\mathbin{\oldperp_{\text{d}}}}
\newcommand{\eperp}{\mathbin{\oldperp_{\text{e}}}}
\newcommand{\NLF}{{\textsc{NALF}}\xspace}
\newcommand{\interesting}{\textsc{Non-Algebraic}\xspace}
\newcommand{\interestingness}{\textsc{Non-Algebraicness}\xspace}
\newcommand{\noninteresting}{\textsc{Algebraic}\xspace}
\newcommand{\noninterestingness}{\textsc{Algebraicness}\xspace}
\definecolor{googleblue}{RGB}{34, 0, 204}
\definecolor{panblue}{RGB}{0,24,150}
\definecolor{carmine}{RGB}{150, 0, 24}
\newcommand{\specialcell}[2][c]{%
\begin{tabular}[#1]{@{}c@{}}#2\end{tabular}}
\renewenvironment{proof}[1][\proofname]{\par
  \pushQED{\qed}
  \topsep6\p@\@plus6\p@\relax
  \trivlist
  \item[\hskip\labelsep
    \itshape #1\@addpunct{.}]\ignorespaces
}{\popQED\endtrivlist\@endpefalse}
\newcommand{\undirectededge}{\multimapdotboth}
\tikzset{
c1/.style={draw, regular polygon, regular polygon sides=3, minimum height=2em, inner sep=0, fill=GreenYellow}, 
q1/.style={draw, circle, minimum height=1.8em, inner sep=0, fill=Melon}, 
c/.style={draw, regular polygon, regular polygon sides=3, minimum height=2.8em, inner sep=0, fill=GreenYellow}, 
q/.style={draw, circle, minimum height=2.4em, inner sep=0, fill=Melon}, 
r/.style={draw, rectangle, minimum height=2em, fill= SpringGreen, inner sep=0.5em}, 
e/.style={{Stealth[scale=1.5]}-}, thick}
\tikzset{nv/.style={circle, color=red, fill=red, inner sep=0.5mm}}
\tikzset{rv/.style={circle, draw, thick, minimum size=7mm, inner sep=0.5mm}}
\tikzset{fv/.style={rectangle, draw, thick, minimum size=7mm, inner sep=0.5mm}}
\tikzset{lv/.style={circle, color=red, fill=gray!30, draw, thick, minimum size=7mm, inner sep=0.5mm}}
\tikzset{rve/.style={ellipse, draw, thick, minimum size=7mm, inner sep=0.5mm}}
\tikzset{rvs/.style={circle, draw, thick, minimum size=6mm, inner sep=0.5mm}}
\tikzset{fvs/.style={rectangle, draw, thick, minimum size=6mm, inner sep=0.5mm}}
\tikzset{lvs/.style={circle, color=red, fill=gray!30, draw, thick, minimum size=6mm, inner sep=0.5mm}}
\tikzset{rves/.style={ellipse, draw, thick, minimum size=6mm, inner sep=0.5mm}}
\tikzset{deg/.style={->, very thick, color=blue}}
\tikzset{degl/.style={->, very thick, color=red}}
\tikzset{beg/.style={<->, very thick, color=red}}
\tikzset{cdeg/.style={{Circle[length=+2pt 2.5,width=+2pt 2.5, fill=none]}->, very thick, color=blue}}
\tikzset{cceg/.style={{Circle[length=+2pt 2.5,width=+2pt 2.5, fill=none]}-{Circle[length=+2pt 2.5,width=+2pt 2.5, fill=none]}, very thick}}
\tikzset{uceg/.style={{Circle[length=+2pt 2.5,width=+2pt 2.5, fill=none]}-, very thick}}
\tikzset{ueg/.style={very thick}}
\newtheorem{theorem}{Theorem}
\newtheorem{conjecture}[theorem]{Conjecture}
\newtheorem{definition}[theorem]{Definition}
\newtheorem{corollary}[theorem]{Corollary}
\newtheorem{lemma}[theorem]{Lemma}
\newtheorem{remark}[theorem]{Remark}
\newtheorem{proposition}[theorem]{Proposition}
\newcommand{\settheoremtag}[1]{%
  \let\oldthetheorem\thetheorem%
  \renewcommand{\thetheorem}{\oldthetheorem\hspace{0.6ex}[#1]}%
  \g@addto@macro\endtheorem{%
    \global\let\thetheorem\oldthetheorem}%
  }
\newcommand*{\hyperlinkcite}[1]{\hyper@link{cite}{cite.#1\@extra@b@citeb}}%
\newcommand{\HLP}{\hyperlinkcite{HLP_2014}{\textsf{HLP}}\xspace}
\begin{document}
\title{Classifying Causal Structures: Ascertaining when Classical Correlations are Constrained by Inequalities}
\author{Shashaank Khanna}
\affiliation{Department of Mathematics,  University of York, Heslington, York, YO10 5DD, United Kingdom}
\author{Marina Maciel Ansanelli}
\affiliation{Perimeter Institute for Theoretical Physics, 31 Caroline Street North, Waterloo, Ontario Canada N2L 2Y5}
\affiliation{University of Waterloo, Waterloo, Ontario, Canada, N2L 3G1}
\author{Matthew F. Pusey}
\affiliation{Department of Mathematics,  University of York, Heslington, York, YO10 5DD, United Kingdom}
\author{Elie Wolfe}
\affiliation{Perimeter Institute for Theoretical Physics, 31 Caroline Street North, Waterloo, Ontario Canada N2L 2Y5}
\affiliation{University of Waterloo, Waterloo, Ontario, Canada, N2L 3G1}
\maketitle

\begin{abstract}
	The classical causal relations between a set of variables, some observed and some latent, can induce both equality constraints (typically conditional independencies) as well as inequality constraints (Instrumental and Bell inequalities being prototypical examples) on their compatible distribution over the observed variables. Enumerating a causal structure's implied inequality constraints is generally far more difficult than enumerating its equalities. Furthermore, only inequality constraints ever admit violation by quantum correlations. For both those reasons, it is important to classify causal scenarios into those which impose inequality constraints versus those which do not. Here we develop methods for detecting such scenarios by appealing to \mbox{\(d\)-separation}, \mbox{\(e\)-separation}, and incompatible supports. Many (perhaps all?) scenarios with exclusively equality constraints can be detected via a condition articulated by Henson, Lal and Pusey (HLP). Considering all scenarios with up to 4 observed variables, which number in the thousands, we are able to resolve all but three causal scenarios, providing evidence that the HLP condition is, in fact, exhaustive.
\end{abstract}

\newpage
\tableofcontents

\section{Introduction}

Many times in science we are not only interested in the observed correlations between events, but also in the underlying causal explanations that govern such observed phenomena. In the mathematical framework for causal inference~\cite{pearl, Verma1988, instrument, spirtes2001causation, lauritzen1996}, the candidates for those causal explanations are represented by directed acyclic graphs (DAGs), where each node is associated with a variable and each edge represents direct causal influence.

A DAG imposes causal compatibility constraints on the probability distributions that can be causally explained by it. For example, a probability distribution over variables $\{A, B, C\}$ where $A$ and $C$ remain correlated even after a value of $B$ is conditioned upon \emph{cannot} be causally explained by the DAG of Figure~\ref{fig_DAG_ABC}. All of the distributions over $\{A, B, C\}$ that can be explained by Figure~\ref{fig_DAG_ABC} need to satisfy $P(AC|B)=P(A|B)P(C|B)$.\footnote{We will denote probabilities by the concise notation $P(XY)\equiv P_{XY}(X=x,Y=y)$.}

\begin{figure}[h]
	\centering
	\begin{tikzpicture}
		\node[c](A) at (1,0){$A$};
		\node[c](B) at (3,0){$B$}
		edge[e] (A);
		\node[c](C) at (5,0){$C$}
		edge[e] (B);
	\end{tikzpicture}
	\caption{Example of DAG.}
	\label{fig_DAG_ABC}
\end{figure}
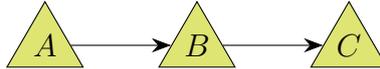

The causal compatibility constraint described above, denoted by $A\dbot C|B$, is a \emph{conditional independence relation}. That is, it says that one set of variables becomes independent of a second set of variables when conditioning on a third. In general, however, a DAG can impose more complicated types of constraints on the compatible probability distributions. This only happens for DAGs that have latent nodes, i.e. nodes associated with variables that do not appear in the probability distributions of interest.

From now on, the variables that do appear in the probability distributions of interest will be called \emph{observed variables}, while the ones that do not will be called \emph{latent variables}. Nodes associates with observed variables will be called \emph{observed nodes} and will be depicted by triangles, while nodes associated with latent variables will be called \emph{latent nodes} and will be depicted by circles.

An example of a DAG that imposes more complicated types of constraints on the compatible distributions over observed variables is the Bell DAG, presented in Figure~\ref{fig_Bell_DAG}. This DAG is of interest to physicists, as it encompasses the causal assumptions of Bell's theorem.

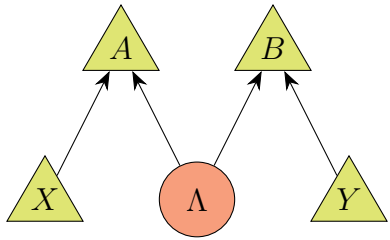
\begin{figure}[h]
	\centering
	\begin{tikzpicture}
		\node[q](A) at (2,0){$\Lambda$};
		\node[c](B) at (0,0){$X$};
		\node[c](C) at (4,0){$Y$};
		\node[c](D) at (3,2){$B$}
		edge[e] (A)
		edge[e] (C);
		\node[c](E) at (1,2){$A$}
		edge[e] (A)
		edge[e] (B);
	\end{tikzpicture}
	\caption{The Bell DAG. It encompasses the assumptions of Bell's theorem for a Bell Scenario where $X$ and $Y$ are the measurement settings of Alice and Bob, $A$ and $B$ are their outcomes and $\Lambda$ is a classical hidden variable. The probability distributions that are classically compatible with this DAG are those that decompose as in Equation~\eqref{eq_Bell_Markov}.}
	\label{fig_Bell_DAG}
\end{figure}

Bell's theorem~\cite{Bell, Bell1989} is central to the foundations of quantum mechanics~\cite{Norsen2017, Bricmont2016, Drr2020, PhysRev.47.777}, as it says that no locally causal hidden-variable theory in which the observers can choose their measurements independently of the source can ever be capable of reproducing all the operational predictions of quantum theory~\cite{Norsen2011, Norsen2009, Goldstein:2011, Norsen2007, Norsen2005}. These assumptions are encoded in the Bell DAG, as there the settings $X$ as $Y$ are not causally connected to the source $\Lambda$, as well as the setting in one wing does not causally influence the outcome in the other wing.

As it turns out, the Bell DAG imposes causal compatibility constraints on the compatible distributions over $\{A,B,X,Y\}$ that take the form of inequalities, which are precisely the Bell inequalities. This means that all the distributions which violate Bell's inequalities, including some of the quantum predictions for this scenario, \emph{cannot} be causally explained by the Bell DAG.

With the goal of causally explaining the violation of Bell's inequalities without changing the causal assumptions embedded in the structure of the Bell DAG, Henson, Lal and Pusey (\HLP)~\cite{HLP_2014} developed a generalization of Pearl's causal inference. In this generalized framework, the latent nodes can be associated with quantum or other generalized probabilistic theory (GPT) systems. \HLP also proved that the conditional independence constraints remain the same independently of the theory that describes the latent nodes; other types of constraints, like Bell's inequalities, can be violated.

The Bell DAG is just one of the many causal structures for which inequality constraints can be violated when the latent nodes are associated with nonclassical systems~\cite{HLP_2014, tobias, tobias2, chaves, PhysRevLett.123.140401}. In the Bell scenario, correlations that violate Bell inequalities have cryptographic applications precisely because of their non-classicality, so it seems reasonable to hope that other scenarios allowing non-classical correlations will have similar applications. Finding which causal structures imply inequalities which potentially admit quantum violation is a critical step towards such potential applications.

In \HLP, the concept of \interestingness of a causal structure was defined. There, this concept was called \emph{interestingness}.\footnote{
	Our use of \noninteresting and \interesting is equivalent to the distinction between \textsc{boring} and \textsc{interesting} causal structures in the terminology of Ref.~\cite{HLP_2014}. We recognize that, in general, renaming existing technical jargon is counterproductive. However, we feel that our terminology of \noninteresting and \interesting has semantic connotations so much closer to their underlying meanings as to warrant  the renaming. The authors acknowledge Robert W. Spekkens for suggesting these terms.} A causal structure is said to be \noninteresting when all of its causal compatibility constraints are of the form of conditional independence relations, even in the classical case. Conversely, if the causal structure imposes more complicated constraints on the compatible probability distributions, it is said to be \interesting. As proven in \HLP, only the \interesting scenarios are passive of witnessing a difference between the sets of classically and quantumly achievable probability distributions.

In \HLP, a sufficient condition for \noninterestingness of a causal structure was developed; it is referred to here as the \emph{HLP criterion}. A central motivation behind this work is that, at present, it is not known whether the \HLP criterion is also necessary for \noninterestingness. That is, if a DAG cannot be proven \noninteresting by virtue of the \HLP criterion, is the DAG necessarily \interesting? The conjecture that the \HLP criterion is indeed necessary for \noninterestingness will be referred to as the \emph{HLP conjecture}.

How might we evaluate the \HLP conjecture? To disprove it, we need to find only one DAG for which the \HLP criterion does not apply, but which can be proven \noninteresting by some other method. We did not pursue a search for such a counterexample, simply because we are unaware of any means to prove \noninterestingness when the \HLP criterion does not apply. We therefore concentrate on providing evidence in support of the conjecture being true. Namely, we show that as one considers ``larger and larger" DAGs, we can still certify the \interestingness of (almost) all DAGs for which the \HLP criterion does not apply.

To accomplish these goals we must clarify two preliminary questions. Firstly, how should we \emph{enumerate} over DAGs? One enumeration style --- the original enumeration choice employed by \HLP --- is to count DAGs by their total number of nodes. We can thus consider DAGs with 5 total nodes, with 6 total nodes, with 7 total nodes, etc. While this enumeration style has the advantage of simplicity, the arguably more natural enumeration which will be used here is to count by the total number of observed nodes. We can thus consider DAGs with 2 observed nodes, 3 observed nodes, 4 observed nodes, etc. From naive structural considerations alone, however, one might imagine that there are infinitely many DAGs with any fixed number of observed nodes. Motivated in-part by avoiding such infinities, when enumerating by the number of observed nodes we elect to work within Evans' framework of marginalized DAGs (mDAGs)~\cite{Evans2015}, which will be explained in Section~\ref{section_mDAG}.

The second preliminary question is how to prove that a DAG for which the \HLP criterion does not apply is indeed \interesting? Since we here seek to consider hundreds if not thousands of such DAGs, we are heavily invested in identifying \emph{algorithmic} techniques for proving \interestingness, within which we deprioritize computationally expensive methods. In stark contrast to the approach of \HLP, we extensively leverage a new result due to Evans~\cite{Evans_2022}, who has shown that a DAG $G$ is \noninteresting if and only if it is observationally equivalent to \emph{some} DAG that does not have latent variables --- where two DAGs are said to be (classically) observationally equivalent when their sets of (classically) compatible probability distributions are the same. As such, we herein primarily exploit necessary conditions for a graph to be observationally \emph{in}equivalent to every latent-free graph.

One such condition is that when the DAG is \emph{nonmaximal}, i.e. when it has a pair of nodes that are not adjacent (not connected by an arrow nor by a shared latent common cause) but are nevertheless not \(d\)-separated by any set of observed nodes, then the DAG is \emph{not} observationally equivalent to any latent-free DAG. Another condition says that for two DAGs to be observationally equivalent, they must admit the same \(e\)-separation relations over their observed nodes. A third condition says that for two DAGs to be observationally equivalent, they must admit the same set of compatible \emph{supports}. Although the latter condition subsumes the two former ones (as discussed in Section~\ref{subsec:supports}), the former conditions can be evaluated more efficiently. The latter condition involving supports (remarkably!) can be assessed using Fraser's algorithm~\cite{Fraser2020}, which generally requires higher computational overhead. All of these conditions can be utilized to prove the \interestingness of a given DAG, via proving the classical observational inequivalence of said DAG with \emph{every} latent-free graph which has the same number of observed nodes.

It is worth contrasting the tools we employ here to certify \interestingness with those employed in prior literature by \HLP and Pienaar~\cite{Pienaar2017}.\footnote{In doing so, we expose a critical error in one of the theorem's in Ref.~\cite{Pienaar2017} which we then rectify here (Happily, the handful of explicit DAGs which were declared as \interesting pursuant to the fallacious theorem in Ref.~\cite{Pienaar2017} are ultimately indeed \interesting nevertheless, and moreover their \interestingness follows from our replacement nonmaximality theorem here). This discussion is made in Appendix~\ref{app:pienaar}.} \HLP themselves attempted to explore all DAGs with 6 total nodes for which their sufficient condition for \noninterestingness did not apply. For all but five of these DAGs, \HLP proved \interestingness by means of deriving DAG-specific entropic inequalities and showing that those entropic inequalities could be violated by a DAG-specific distribution which nevertheless satisfied the conditional independence constraints of the DAG. One of the remaining five cases is the Bell DAG, which is long-since established as \interesting. Another one of the five is the so-called triangle scenario, whose \interestingness was proven using a one-off proof technique which \HLP did not generalize. The remaining three DAGs with 6 total nodes were eventually established as \interesting in a separate work by Pienaar~\cite{Pienaar2017}, who employed a proof technique using fine-grained entropic inequalities.

At first glance, the algorithmic proofs of \interestingness we employ here may seem unrelated to those utilized by \HLP or Pienaar. However, our theorem relating \(e\)-separation to \interestingness turns out to recover all but one of the \interestingness results that \HLP achieved by appealing to entropic inequalities. Additionally, our application of Fraser's algorithm further witnesses the \interestingness of every other DAG conjectured by \HLP to be \interesting, including the three of which were only \emph{proven} \interesting in the later work of Pienaar~\cite{Pienaar2017}.
We are therefore confident that our plethora of techniques likely supersede those earlier employed by \HLP and Pienaar~\cite{Pienaar2017}, despite not having a formal proof yet.

We summarize our ultimate findings in Table~\ref{tab:introsummary}. We interpret these results as evidence in favour of the \HLP conjecture: among the thousands of analyzed mDAGs, there are only 3 potential counterexamples.

\begin{table}[h!]
	\centering
	\renewcommand{\arraystretch}{1.5}
	\begin{tabular}{|p{8cm}|c|c|}
		\hline
		\small Category & \specialcell{\small mDAGs with \\\textbf{3} observed nodes} & \specialcell{\small mDAGs with \\\small \textbf{4} observed nodes}\\
		\hline
		{\small Total Count}                                                                & 46 & 2809 \\
		{\small  remaining \# for which the \HLP criterion does not apply}                  & 5  & 996  \\
		{\small  remaining \# for which our nonmaximality condition does not apply}         & 1  & 186  \\
		{\small  remaining \# for which our setwise nonmaximality condition does not apply} & 0  & 78   \\
		{\small  remaining \# for which our \(d\)-separation condition does not apply}      & 0  & 60   \\
		{\small  remaining \# not resolved by our use of Fraser's algorithm}                & 0  & 3    \\
		\hline
	\end{tabular}
	\caption{A summary of our findings: apart from the HLP criterion, which shows \noninterestingness, all of the other conditions listed show \interestingness. Note that here we are counting by \emph{unlabelled} DAGs. That is, two labelled DAGs which are equivalent under a relabelling of the observed nodes and/or a relabelling of the hidden nodes are represented by a single unlabelled DAG in these enumerations. }\label{tab:introsummary}
\end{table}

\begin{sloppypar}
	As discussed, this work is of interest to quantum physicists because the \interesting DAGs are the possible candidates to explore quantum advantages in device independent information processing protocols~\cite{acin2007device, vazirani2019fully}. These DAGs are also the ones that should be looked into to compare quantum theory to more general probabilistic theories (GPTs)~\cite{PhysRevA.75.032304, plavala2021general}.
\end{sloppypar}

On the other hand, our problem is also of central interest for purely classical causal inference. The set of probability distributions which are classically compatible with an \noninteresting DAG is constrained only by conditional independence relations, which can be obtained from the \(d\)-separation relations of the DAG. Isolating a sufficient set of \(d\)-separation relations in a graph\footnote{A set of \(d\)-separation relations is said to be sufficient for a DAG if all other \(d\)-separation relation in the DAG can be inferred from the sufficient subset by application of semigraphoid axioms.} is a well-studied problem. It is of paramount value to a classical data scientist, therefore, to know if the causal hypothesis encoded in a DAG may or may not be falsified by accounting for nontrivial inequality constraints. Such inequality constraints, when present, are often difficult to explicitly characterize.

\subsection*{Structure of the paper}

In Section~\ref{sec_preliminary_causal_structures} we present an introduction to the formalism used in Causal Inference, and proceed to explain what \interesting and \noninteresting precisely mean in this context in Section~\ref{sec:interestingness}. We state and prove our \(e\)-separation condition in Section~\ref{methods_for_interestingness}. In Section~\ref{sec:results} we present our computational results and also discuss the methods we tried to confirm the \interestingness of the three left mDAGs of 4 observed nodes. Our conclusions can be found in Section~\ref{con}.

\section{Preliminaries}

\subsection{Causal Explanations of Observational Data}
\label{sec_preliminary_causal_structures}

The area of causal inference is concerned with finding potential causal explanations for observed events. For example, imagine that we want to find out what is the causal relationship between three events: a cloudy sky, rain and the floor being wet. Figure~\ref{fig_clouds} depicts two possible causal structures between these three events; in~\ref{fig_clouds}(a) we hypothesize that the clouds cause the rain and the rain makes the floor wet. In~\ref{fig_clouds}(b), on the other hand, we hypothesize that the wet floor causes the rain, and the rain makes the sky cloudy.

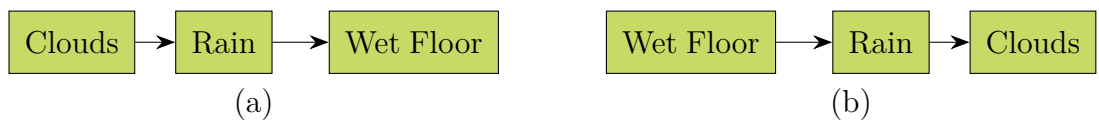
\begin{figure}[h]
	\hspace{-10pt}
	\setlength{\tabcolsep}{20pt}
	\begin{tabular}{c  c}
		\begin{tikzpicture}
			\node[r](A) at (0,0){Clouds};
			\node[r](B) at (2,0){Rain}
			edge[e] (A);
			\node[r](C) at (4.5,0){Wet Floor}
			edge[e] (B);
		\end{tikzpicture}

		    &
		\begin{tikzpicture}
			\node[r](A) at (0,0){Wet Floor};
			\node[r](B) at (2.5,0){Rain}
			edge[e] (A);
			\node[r](C) at (4.5,0){Clouds}
			edge[e] (B);
		\end{tikzpicture} \\
		(a) & (b)
	\end{tabular}
	\caption{Two hypothesis for the causal relationships between observing clouds, rain and wet floor. By intervening on the experiment we can attest that (a) is the correct explanation, but we cannot attest that if we only passively look at the correlations between those events.}
	\label{fig_clouds}
\end{figure}

There is an easy way we can check that Figure~\ref{fig_clouds}(b) is not the correct causal hypotheses: if we pour water on the floor in a sunny day, it will not start raining.

Note that this method presupposes that we can \emph{intervene} on our experiment, meaning that we can force one of the variables of the experiment (wet floor) to assume the value we want. However, it is not always possible to do that; sometimes it is unethical or there are technical or fundamental limitations to do so.

If the experimenter cannot make interventions on the variables of interest, it is still possible for them to draw some conclusions from a passive observation of the correlations between the events of interest. As we will see, each causal structure imposes constraints on which probability distributions obtained from passive observations can be classically explained by it. These constraints can be tested against the observed probability distribution to see if the given causal structure is a valid causal hypothesis for the observed phenomena. In Figure~\ref{fig_clouds} it happens that both causal structures impose the same constraints on probability distributions, so they are not distinguishable by passive observations.

In this work we will be mainly concerned with passive observations. The sets defined in Section~\ref{sec:interestingness} that are of relevance to us make reference to distributions obtained from passive observations only.

\subsection{Directed Acyclic Graphs}

In the framework we use here, the mathematical object that describes a causal structure is a directed acyclic graph (DAG). A directed graph $G$ is a pair $(\boldsymbol{A}, \boldsymbol{E})$, where $\boldsymbol{A}$ is a set of nodes and $\boldsymbol{E} \subseteq \boldsymbol{A}\times \boldsymbol{A}$ is a set of directed edges. A directed acyclic graph (DAG) is a directed graph that has no directed cycles. Below, we introduce some definitions regarding DAGs that will be useful later.

\begin{definition}[\textbf{Children, Parents, Descendants, Ancestors}]

	Let $X$ be a node of a DAG $G$. If $Y$ is another node of $G$ such that there is a directed edge $X \rightarrow Y$, then $Y$ is called a \underline{child} of $X$, and $X$ is called a \underline{parent} of $Y$. The set of all children of $X$ is denoted as $\textrm{CH}_G(X)$, and the set of all parents of $X$ is denoted as $\textrm{PA}_G(X)$.

	A directed path is a sequence of nodes $X^1, X^2, X^3...... X^n$ such that $X^i \rightarrow X^{i+1}$ for $i=1,...,n$. The \underline{descendants} of $X$ in $G$ are all the nodes in $G$ that can be reached from $X$ by a directed path. Conversely, all the nodes that have $X$ as a descendant are called \underline{ancestors} of $X$. The set of all ancestors of $X$ is denoted as $\textrm{ANC}_G(X)$, meanwhile the set of all descendants of $X$ can be denoted as $\textrm{DES}_G(X)$.\footnote{In this article we follow the convention of Refs.~\cite{Richardson1997, Steudel2015, vanderZander2019} and others, namely, the convention in which $X\in \textrm{ANC}_G(X)$ and $X\in \textrm{DEC}_G(X)$ but $X\not\in \textrm{PA}_G(X)$ and $X\not\in \textrm{CH}_G(X)$. That is, a node is considered it own ancestor and its own descendant, but not its own parent or its own child.}
\end{definition}

For example, in the DAG of Figure~\ref{fig_example_DAG} we have that $\text{PA}_G(B)=\{A,D\}$, and that $E$ is a descendant of $D$, even if it is not its child.

\begin{figure}[h]
	\centering
	\begin{tikzpicture}
		\node[c](A) at (1,0){$D$};
		\node[c](B) at (-1,2){$A$}
		edge[e] (A);
		\node[c](D) at (3,2){$C$}
		edge[e] (A);
		\node[c](C) at (5,2){$E$}
		edge[e] (D);
		\node[c](E) at (1,2){$B$}
		edge[e] (A)
		edge[e] (B);
	\end{tikzpicture}
	\caption{Example of a directed acyclic graph (DAG). The probability distributions that are classically compatible with this DAG are those that can be decomposed as in Equation~\eqref{eq_markov_example}.}
	\label{fig_example_DAG}
\end{figure}
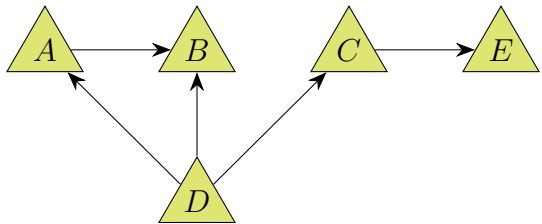

\subsection{DAGs as Causal Structures}
\label{cbn}
When we associate each node of a DAG with an event of interest, the DAG is a representation of a causal structure: an edge $X \rightarrow Y$ shows a possibility of a direct causal influence of $X$ on $Y$. Here, we will indicate the variable associated with a node by the same capital letter as the node itself. If all the events of interest are described by classical random variables, the idea that a probability distribution ``can be causally explained" by a certain DAG is formalized through the \emph{Markov condition}:

\begin{definition}[\textbf{Markov Condition}]
	\label{markov}
	Let $G$ be a DAG with nodes $\boldsymbol{A}$.
	A probability distribution $P$ over the variables $\boldsymbol{A}$ is said to be \emph{Markov with respect to $G$} if  it can be factorized as:
	\begin{equation}
		P(\boldsymbol{A})=\prod_i P\left(X^i|\text{PA}_G\left(X^i\right)\right)
	\end{equation}

	Where $\boldsymbol{A}=\{X^1,...,X^n\}$ and $\text{PA}_G\left(X^i\right)$ is the set of parents of the node $X^i$ in $G$.

	If $P$ is Markov with respect to $G$, we also say that it is `` classically compatible" with $G$.
\end{definition}

As an example of this definition, a joint probability distribution $P_{ABCDE}$ over the random variables $A$, $B$, $C$, $D$ and $E$ is Markov with respect to the DAG of Figure~\ref{fig_example_DAG} if it can be decomposed as:

\begin{equation}
	\label{eq_markov_example}
	P(ABCDE) = P(A|D)P(B|A,D)P(C|D)P(D)P(E|C)
\end{equation}

Therefore, through the Markov condition, each DAG imposes constraints on the probability distributions that are classically compatible with it. A DAG which is classically compatible with \emph{every} probability distribution, that is, a DAG that does not impose any constraint on the classically compatible distributions, is said to be a \emph{saturated} DAG.

\subsection{\(d\)-separation: A graphical criterion for conditional independence}

If $P$ is a probability distribution over a certain set of random variables $\boldsymbol{A}$, we say that the variables of the subset $\boldsymbol{X}\subseteq \boldsymbol{A}$ are \emph{conditionally independent} of the variables of the subset $\boldsymbol{Y}\subseteq \boldsymbol{A}$ given the subset $\boldsymbol{Z}\subseteq \boldsymbol{A}$ if $P$ can be factorized as $P(\boldsymbol{X},\boldsymbol{Y}|\boldsymbol{Z})=P(\boldsymbol{X}|\boldsymbol{Z})P(\boldsymbol{Y}|\boldsymbol{Z})$. This is denoted by $\boldsymbol{X}\dbot\boldsymbol{Y}|\boldsymbol{Z}$.

Some of the constraints that a DAG imposes on the probability distributions that are classically compatible with it are of the form of conditional independence: if a probability distribution $P$ \emph{cannot} be factorized according to a certain conditional independence that is imposed by the DAG, then it is not classically compatible with the DAG. As it turns out, there exists a graphical algorithm to obtain all the conditional independence constraints that are imposed by a DAG. This algorithm is called ``\(d\)-separation", and we will now describe it.

If $G$ is a DAG with nodes $\boldsymbol{A}$ and $\boldsymbol{X},\boldsymbol{Y},\boldsymbol{Z}\subseteq \boldsymbol{A}$ are sets of nodes in $G$, the \(d\)-separation algorithm says whether $\boldsymbol{X}$ and $\boldsymbol{Y}$ are ``\(d\)-separated" by $\boldsymbol{Z}$. This happens if all the undirected paths (paths that ignore the direction of the arrows) from $\boldsymbol{X}$ to $\boldsymbol{Y}$ are \emph{blocked} by $\boldsymbol{Z}$. A path is blocked if one or more of the following is true:
\begin{enumerate}
	\item There is a chain of nodes along the path:  $i\rightarrow m\rightarrow j$ such that $m\in Z$.
	\item There is a fork along the path:   $i\leftarrow m\rightarrow j$ such that $m\in Z$.
	\item There is a collider along the path:  $i\rightarrow m\leftarrow j$ such that $m\notin Z$ and $d\notin Z$ for all the descendants $d$ of $m$.
\end{enumerate}
If $X$ is \(d\)-separated of $Y$ given $Z$ in the DAG under consideration, we denote it as $X \perp Y |Z$.

For example, for the DAG of Figure~\ref{fig_DAG_ABC} the \(d\)-separation criterion says that $A\perp C|B$. This means that the event $A$ should become independent of the event $C$ upon knowledge of $B$; if your distribution $P$ \emph{does not} satisfy the constraint $P(AC|B)=P(A|B)P(C|B)$, Figure~\ref{fig_DAG_ABC} is \emph{not} a valid causal explanation for it. Interpreting the variables $\{A,B,C\}$ as respectively the clouds, rain and wet floor (such as in Figure~\ref{fig_clouds}(a)), this \(d\)-separation relation says that the occurrence of clouds becomes independent of the floor being wet if we already know whether it is raining.

The following theorem, proven in Ref.~\cite{Verma1988}, makes explicit the connection between \(d\)- separation relations and conditional independence relations:

\begin{theorem}[\(d\)-separation and conditional independence]
	\label{th1}
	Let $G$ be a DAG with nodes $\boldsymbol{A}$, and let $\boldsymbol{X}\subseteq \boldsymbol{A}$, $\boldsymbol{Y}\subseteq \boldsymbol{A}$ and $\boldsymbol{Z}\subseteq \boldsymbol{A}$ be three disjoint subsets of $\boldsymbol{A}$. Then:

	\begin{enumerate}
		\item If $G$ has the \(d\)-separation relation $\boldsymbol{X}\perp \boldsymbol{Y}|\boldsymbol{Z}$, then all of the probability distributions over the variables $\boldsymbol{A}$ which are Markov with respect to $G$ need to satisfy $\boldsymbol{X}\dbot \boldsymbol{Y}|\boldsymbol{Z}$.
		\item If $G$ \emph{does not have} the \(d\)-separation relation $\boldsymbol{X}\perp \boldsymbol{Y}|\boldsymbol{Z}$, then there exists some probability distribution over the variables $\boldsymbol{A}$ which is Markov with respect to $G$ and \emph{does not} satisfy $\boldsymbol{X}\dbot \boldsymbol{Y}|\boldsymbol{Z}$.
	\end{enumerate}
\end{theorem}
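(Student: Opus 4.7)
The plan is to prove the two parts separately: part 1 (soundness of \(d\)-separation) via the standard moralization argument, and part 2 (completeness) by explicitly constructing a distribution that transmits correlation along an active path.

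For part 1, I would first pass to the induced ancestral subgraph $G_{\text{An}}$ on $\textrm{ANC}_G(\boldsymbol{X}\cup \boldsymbol{Y}\cup \boldsymbol{Z})$. The Markov factorization of $P$ with respect to $G$ implies that the marginal of $P$ on these ancestors factorizes according to $G_{\text{An}}$, because every variable outside this ancestral set is a descendant only, and summing it out telescopes with its own conditional factor. Any conditional independence on $\boldsymbol{X},\boldsymbol{Y},\boldsymbol{Z}$ therefore only depends on $G_{\text{An}}$. Next, I would form the moral graph $G_{\text{An}}^m$ by marrying co-parents and forgetting arrow directions, and invoke the well-known equivalence: $\boldsymbol{X}\perp \boldsymbol{Y}\mid \boldsymbol{Z}$ in $G$ iff $\boldsymbol{Z}$ separates $\boldsymbol{X}$ from $\boldsymbol{Y}$ in the undirected graph $G_{\text{An}}^m$. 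Once separation in the moral graph is in hand, the factorized joint can be regrouped into a product of two factors, one supported on the $\boldsymbol{X}$-side connected component together with $\boldsymbol{Z}$ and one on the $\boldsymbol{Y}$-side together with $\boldsymbol{Z}$, which immediately yields $\boldsymbol{X}\dbot \boldsymbol{Y}\mid \boldsymbol{Z}$.

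For part 2, suppose that $\boldsymbol{X}\perp \boldsymbol{Y}\mid \boldsymbol{Z}$ does \emph{not} hold in $G$. Then by the definition of \(d\)-separation there exists an undirected path $\pi$ between some $X_i\in \boldsymbol{X}$ and some $Y_j\in \boldsymbol{Y}$ that is active given $\boldsymbol{Z}$, meaning every non-collider on $\pi$ is outside $\boldsymbol{Z}$ and every collider on $\pi$ has itself or some descendant in $\boldsymbol{Z}$. I would then construct a binary-valued Markov distribution tailored to $\pi$: each conditional $P(X^i\mid \textrm{PA}_G(X^i))$ is chosen so that a node on $\pi$ strongly (but noisily) copies its $\pi$-neighbour parent, colliders on $\pi$ are set to a noisy XOR of their two $\pi$-parents (so that conditioning on the collider or any descendant of it opens the path, as required by rule~3 of \(d\)-separation), nodes in $\boldsymbol{Z}\setminus\pi$ receive uniformly random values, and every edge off $\pi$ is declared independent noise. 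By construction $P$ is Markov with respect to $G$; tracking the transmitted signal segment by segment along $\pi$ then shows that $X_i$ and $Y_j$ have a nonzero conditional correlation given $\boldsymbol{Z}$, so $\boldsymbol{X}\not\dbot \boldsymbol{Y}\mid \boldsymbol{Z}$.

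The main obstacle I anticipate is in part 2: I need to verify rigorously that the collider/descendant conditioning actually activates correlation rather than washing it out, and that the noisy-copy and noisy-XOR parameters can be chosen generically so no cancellation occurs at any intermediate segment of $\pi$. A cleaner but less self-contained alternative would be to invoke the genericity (d-faithfulness) results for DAG parameterizations, which state that for almost every choice of conditional probability tables $P$ satisfies precisely those conditional independencies dictated by \(d\)-separation and no others; this would bypass the explicit construction at the cost of importing a deeper theorem not needed elsewhere in this paper.
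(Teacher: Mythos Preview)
The paper does not actually prove this theorem; it simply states it as a known result ``proven in Ref.~\cite{Verma1988}'' and moves on. So there is no in-paper proof to compare against, and your proposal supplies what the paper omits.

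Your outline is the standard one and is essentially correct. Part~1 via the ancestral-subgraph/moralization route is exactly the Lauritzen--Dawid--Larsen--Leimer argument and goes through as you describe. For part~2, your active-path construction is in the spirit of the Verma--Pearl/Geiger--Verma--Pearl completeness proofs. It is worth noting that the paper itself, much later in Appendix~\ref{appendix_proof_support_esep} (proof of Lemma~\ref{lemma:d-sep_supps}), carries out precisely this kind of construction: fork nodes on the path are uniform Bernoulli, chain nodes deterministically copy their path-parent, and colliders are the deterministic XOR (agreement indicator) of their two path-parents. That deterministic version already suffices to produce a Markov distribution in which conditioning on $\boldsymbol{Z}$ leaves $X$ and $Y$ perfectly correlated along the path, so you do not need the ``noisy'' variants or a genericity argument; the cancellation worry you flag disappears once you use exact copies and exact XOR. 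The one subtlety you should handle explicitly is the case where a collider on $\pi$ is activated not by being in $\boldsymbol{Z}$ itself but via a \emph{descendant} in $\boldsymbol{Z}$: then you must also set each node on the directed path from the collider down to that descendant to copy its parent, so that conditioning on the descendant pins down the collider's value. With that addition the construction is complete and the faithfulness detour is unnecessary.
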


When a DAG does not have latent nodes, the \emph{only} constraints that it imposes on the compatible distributions are the conditional independence relations associated with the \(d\)-separation relations of the DAG~\cite[Theorem 3]{dsep_complete}:
\begin{theorem}[\(d\)-separation is complete for DAGs without latent nodes]
	\label{th_latent_free}
	Let $G$ be a DAG with nodes $\boldsymbol{A}$. A probability distribution over the variables $\boldsymbol{A}$ is Markov with respect to $G$ if and only if it satisfies all of the conditional independence relations associated with the \(d\)-separation relations of $G$.
\end{theorem}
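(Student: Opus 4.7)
The forward implication is immediate from Theorem~\ref{th1}, item~1: if $P$ is Markov with respect to $G$, then every \(d\)-separation relation of $G$ gives rise to a conditional independence that $P$ satisfies. So the entire content of the theorem lies in the converse, and that is what my plan will focus on.

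For the converse, suppose $P$ satisfies every conditional independence relation coming from a \(d\)-separation statement in $G$. The plan is to fix a topological ordering $X^1,\dots,X^n$ of the nodes, i.e.\ one in which every parent precedes its child, and then unfold the joint distribution via the chain rule,
\begin{equation*}
P(X^1,\dots,X^n) \;=\; \prod_{i=1}^{n} P\bigl(X^i \,\bigm|\, X^1,\dots,X^{i-1}\bigr).
\end{equation*}
The key structural step is then to show that, in any topological order, each node $X^i$ is \(d\)-separated in $G$ from the set of its earlier non-parents $\{X^1,\dots,X^{i-1}\}\setminus \mathrm{PA}_G(X^i)$ given its parents $\mathrm{PA}_G(X^i)$. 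Once that is established, the hypothesis that $P$ respects all \(d\)-separations implies $P(X^i\mid X^1,\dots,X^{i-1})=P(X^i\mid \mathrm{PA}_G(X^i))$, and substituting into the chain rule yields exactly the Markov factorization of Definition~\ref{markov}.

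To prove the \(d\)-separation claim itself, I would argue by contradiction: assume there is an unblocked undirected path in $G$ from $X^i$ to some $X^j$ with $j<i$ and $X^j\notin \mathrm{PA}_G(X^i)$, conditioned on $\mathrm{PA}_G(X^i)$. Walk along the path starting at $X^i$. The first edge cannot leave $X^i$ outward toward a child (since that child would be a descendant of $X^i$ and hence would come after $X^i$ in the topological order, so the path would need to re-enter the earlier region, producing either a collider without a conditioned descendant or a non-collider in $\mathrm{PA}_G(X^i)$ that blocks the path). A careful case analysis of chain, fork, and collider nodes along the path — using that conditioning is only on parents of $X^i$, that no descendant of $X^i$ lies in the conditioning set, and that indices along the path must be consistent with the topological order — shows the path must in fact be blocked, contradicting the assumption.

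The main technical obstacle is precisely this structural lemma that non-descendants are \(d\)-separated from a node by its parents; everything else is bookkeeping via the chain rule. I would carry it out by the path-analysis sketched above, which is essentially the standard "ordered local Markov" argument: once that is in hand, the factorization follows without further effort and the theorem is complete.
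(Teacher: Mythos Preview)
Your proposal is correct and follows the standard ``ordered local Markov'' route: chain rule in a topological order, plus the structural fact that each node is \(d\)-separated from its earlier non-parents given its parents. The sketch of the path argument for that structural lemma is a bit loose (in particular, the case analysis you describe would need to be tightened to handle paths that leave $X^i$ via an incoming edge from a non-parent ancestor, and to verify carefully that any collider on the path lacks a conditioned descendant), but the overall strategy is sound and is exactly how this result is proved in the literature.

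However, note that the paper does \emph{not} supply its own proof of this statement: Theorem~\ref{th_latent_free} is simply quoted from an external reference (\cite[Theorem~3]{dsep_complete}) as a known foundational result. So there is no ``paper's proof'' to compare against; your argument is the standard one that the cited reference would contain.
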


Importantly, in general this is \emph{not} true for DAGs that include latent nodes, as we will see now.

\subsection{Latent-Permitting DAGs}
\label{section_latents}

As discussed in the introduction, sometimes we want to allow for causal explanations that include \emph{latent variables}, i.e. variables that do not appear in the final probability distribution we are trying to explain. When our DAG of interest can have latent nodes, we call it a \emph{latent-permitting DAG}, as opposed to the \emph{latent-free DAGs} that only have observed nodes.

Let $G$ be a DAG that has the set of nodes $\boldsymbol{A}=\boldsymbol{V}\cup \boldsymbol{L}$, $\boldsymbol{V}$ and $\boldsymbol{L}$ disjoint, where $\boldsymbol{V}$ are the observed nodes and $\boldsymbol{L}$ are the latent nodes. Mimicking the terminology used for the latent-free case, we will say that a probability distribution $P(\boldsymbol{V})$ over the observed variables is \emph{classically compatible with $G$} if there exists some probability distribution $P(\boldsymbol{V}, \boldsymbol{L})$ over $\boldsymbol{V}\cup \boldsymbol{L}$ such that:
\begin{itemize}
	\item $P(\boldsymbol{V}, \boldsymbol{L})$ is Markov with respect to $G$.
	\item The marginal of $P(\boldsymbol{V}, \boldsymbol{L})$ over $\boldsymbol{V}$ is the original probability distribution that we are interested in, i.e. $\sum_{\boldsymbol{L}}P(\boldsymbol{V}, \boldsymbol{L})=\sum_{\boldsymbol{L}}P(\boldsymbol{V}|\boldsymbol{L})P(\boldsymbol{L})=P(\boldsymbol{V})$.
\end{itemize}

As an example, a probability distribution $P_{ABXY}$ over the random variables $A$, $B$, $X$ and $Y$ is classically compatible with the Bell DAG (Figure~\ref{fig_Bell_DAG}) if it can be decomposed as:

\begin{equation}
	\label{eq_Bell_Markov}
	P(ABXY) = \sum_\Lambda P(A|X,\Lambda)P(B|Y,\Lambda)P(X)P(Y)P(\Lambda)
\end{equation}

When dealing with latent-permitting DAGs, we will call the conditional independence relations that only involve observed variables the \emph{observed conditional independence relations}, and similarly the \(d\)-separation relations that only involve observed nodes will be called \emph{observed \(d\)-separation relations}. From the definition of classical compatibility with a latent-permitting DAG, we can see that the conclusions of Theorem~\ref{th1} are also valid for latent-permitting DAGs:

\begin{theorem}[Observed \(d\)-separation in latent-permitting DAGs]
	\label{th_dsep_latent_permitting}
	Let $G$ be a DAG with nodes $\boldsymbol{A}=\boldsymbol{V}\cup\boldsymbol{L}$, where $\boldsymbol{V}$ are observed nodes and $\boldsymbol{L}$ are latent nodes. Let $\boldsymbol{X}\subseteq \boldsymbol{V}$, $\boldsymbol{Y}\subseteq \boldsymbol{V}$ and $\boldsymbol{Z}\subseteq \boldsymbol{V}$ be three disjoint sets of observed nodes of $G$. Then:

	\begin{enumerate}
		\item If $G$ has the \(d\)-separation relation $\boldsymbol{X}\perp \boldsymbol{Y}|\boldsymbol{Z}$, then all of the probability distributions over the variables $\boldsymbol{V}$ which are classically compatible with $G$ need to satisfy $\boldsymbol{X}\dbot \boldsymbol{Y}|\boldsymbol{Z}$.
		\item If $G$ \emph{does not have} the \(d\)-separation relation $\boldsymbol{X}\perp \boldsymbol{Y}|\boldsymbol{Z}$, then there exists some probability distribution over the variables $\boldsymbol{V}$ which is classically compatible with $G$ and \emph{does not} satisfy $\boldsymbol{X}\dbot \boldsymbol{Y}|\boldsymbol{Z}$.
	\end{enumerate}
\end{theorem}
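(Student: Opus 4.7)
The plan is to reduce both claims to Theorem~\ref{th1} applied to the full DAG $G$ (treating all nodes, observed and latent alike, as variables), combined with the simple observation that marginalizing out variables not appearing in a conditional independence statement cannot affect whether that statement holds. I would begin by recording the following preliminary fact, stated without proof since it is elementary: if $Q$ is a joint distribution over variables $\boldsymbol{W}$, and if $\boldsymbol{X},\boldsymbol{Y},\boldsymbol{Z}\subseteq\boldsymbol{W}$ are disjoint, then $Q$ satisfies $\boldsymbol{X}\dbot\boldsymbol{Y}|\boldsymbol{Z}$ if and only if the marginal of $Q$ on $\boldsymbol{X}\cup\boldsymbol{Y}\cup\boldsymbol{Z}$ does. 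This is because the conditional independence statement only constrains $Q(\boldsymbol{X},\boldsymbol{Y},\boldsymbol{Z})$, and summing over other variables does not change that marginal.

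For the first part of the theorem, suppose $G$ has the $d$-separation relation $\boldsymbol{X}\perp \boldsymbol{Y}|\boldsymbol{Z}$, and let $P(\boldsymbol{V})$ be classically compatible with $G$. By definition there exists a joint distribution $P(\boldsymbol{V},\boldsymbol{L})$ which is Markov with respect to $G$ and marginalizes to $P(\boldsymbol{V})$. Applying Theorem~\ref{th1} to $G$ viewed as a latent-free DAG over the full node set $\boldsymbol{V}\cup\boldsymbol{L}$, the distribution $P(\boldsymbol{V},\boldsymbol{L})$ satisfies $\boldsymbol{X}\dbot \boldsymbol{Y}|\boldsymbol{Z}$. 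By the marginalization fact above (applied first to sum out $\boldsymbol{L}$ and then to sum out $\boldsymbol{V}\setminus(\boldsymbol{X}\cup\boldsymbol{Y}\cup\boldsymbol{Z})$), the marginal $P(\boldsymbol{V})$ also satisfies $\boldsymbol{X}\dbot \boldsymbol{Y}|\boldsymbol{Z}$, as required.

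For the second part, suppose $G$ lacks the $d$-separation relation $\boldsymbol{X}\perp \boldsymbol{Y}|\boldsymbol{Z}$. Applying part~2 of Theorem~\ref{th1} to the full node set, there exists a distribution $P(\boldsymbol{V},\boldsymbol{L})$ that is Markov with respect to $G$ and violates $\boldsymbol{X}\dbot \boldsymbol{Y}|\boldsymbol{Z}$. Let $P(\boldsymbol{V})$ be its marginal over $\boldsymbol{L}$; by definition, $P(\boldsymbol{V})$ is classically compatible with $G$. The marginalization fact again guarantees that $P(\boldsymbol{V})$ inherits the violation of $\boldsymbol{X}\dbot \boldsymbol{Y}|\boldsymbol{Z}$ from $P(\boldsymbol{V},\boldsymbol{L})$, since both restrict to the same distribution on $\boldsymbol{X}\cup\boldsymbol{Y}\cup\boldsymbol{Z}$.

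I do not anticipate a real obstacle here: the theorem is essentially a corollary of Theorem~\ref{th1} that repackages it for the observed marginal. The only point meriting care is the direction of the marginalization lemma used in part~1 — one must confirm that a conditional independence among observed variables is preserved (not merely "not introduced") when latent variables are summed out, which is immediate from the fact that the relation depends only on the joint distribution restricted to $\boldsymbol{X}\cup\boldsymbol{Y}\cup\boldsymbol{Z}$.
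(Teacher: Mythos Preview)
Your proposal is correct and matches the paper's approach: the paper does not give a detailed proof at all, merely remarking that the result follows ``from the definition of classical compatibility with a latent-permitting DAG'' applied to Theorem~\ref{th1}. Your argument spells out exactly this reduction, and the marginalization observation you isolate is the only nontrivial step needed to make the paper's one-line justification rigorous.
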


As seen in Theorem~\ref{th_latent_free}, the only constraints that latent-free DAGs impose on the compatible probability distributions are the conditional independence relations, that can be obtained from \(d\)-separation. Consequently, if a DAG $G$ has nodes $\boldsymbol{A}=\boldsymbol{V}\cup \boldsymbol{L}$, all the constraints that it imposes on the compatible joint probability distributions $P(\boldsymbol{A})=P(\boldsymbol{V}, \boldsymbol{L})$ are the conditional independence relations. However, if we are interested only on distributions over the observed variables $\boldsymbol{V}$, sometimes the conditional independence constraints that involve the latent variables $\boldsymbol{L}$ might induce more complicated extra constraints on the distributions over the observed variables $\boldsymbol{V}$.  If a probability distribution $P(\boldsymbol{V})$ over the observed variables satisfies both the observed conditional independence relations (obtained from \(d\)-separation) and the extra constraints derived from the conditional independence relations that involve the latent variables $\boldsymbol{L}$, then it is classically compatible with $G$.

Therefore, in principle one could find all the \(d\)-separation relations of a DAG (involving both observed and latent nodes), thus getting conditional independence constraints on $P(\boldsymbol{V},\boldsymbol{L})$, and from there infer the constraints on the probability distribution over the observed variables, $P(\boldsymbol{V})$. This process will be exemplified with the Bell DAG in the beginning of the next section. Inferring constraints on $P(\boldsymbol{V})$ from the conditional independence relations of $P(\boldsymbol{V},\boldsymbol{L})$, however, is in general very complicated.

\section{The problem: Looking for \interesting Scenarios}
\label{sec:interestingness}

The goal of this work is to classify DAGs as \noninteresting or not, following the concept introduced in \HLP. Before establishing this concept in full generality, we will explore the example of the Bell DAG (Figure~\ref{fig_Bell_DAG}).

As discussed in Section~\ref{section_latents}, the \(d\)-separation criterion gives us all the classical compatibility constraints imposed on the probability distribution over \emph{all the nodes}, $P(\boldsymbol{V},\boldsymbol{L})$. In the case of the Bell DAG, we have $P(\boldsymbol{V},\boldsymbol{L})=P(ABXY\Lambda)$, while $P(\boldsymbol{V})=P(ABXY)$.

The conditional independence relations of $P(ABXY\Lambda)$, that come from the \(d\)-separation relations of the Bell DAG, are:
\begin{align}
	P(\Lambda|XY)   & =P(\Lambda) \label{eq_desep_bell_1}                              \\
	P(AB|XY\Lambda) & =P(A|X\Lambda)P(B|Y\Lambda) \label{eq_desep_bell_2}              \\
	                & \hspace{-13ex}\begin{cases} P(A|XY) & =P(A|X)   \\
		P(B|XY) & =P(B|Y)   \\
		P(XY)   & =P(X)P(Y)\end{cases} \label{eq_desep_bell_3}
\end{align}

These equations will give rise to the constraints imposed by the Bell DAG on $P(ABXY)$ through the marginalization of $\Lambda$. Equations~\eqref{eq_desep_bell_3}, that do not involve $\Lambda$, are automatically transported as observed conditional independence constraints imposed by the Bell DAG on $P(ABXY)$. Equations~\eqref{eq_desep_bell_1} and~\eqref{eq_desep_bell_2}, that do involve the latent variable $\Lambda$, will give rise to another type of constraint on $P(ABXY)$: the Bell's inequality. In fact, Equation~\eqref{eq_desep_bell_1} encodes the no-superdeterminism assumption and Equation~\eqref{eq_desep_bell_2} encodes the local causality assumption that are used to derive Bell's inequality.

Therefore, to study the Bell DAG, it is \emph{not enough} to just look at the conditional independence constraints on the compatible distributions $P(ABXY)$ (Equations~\eqref{eq_desep_bell_3}). If one does that, they would miss important information that is encoded in Bell's inequality. This is so because the set of probability distributions $P(ABXY)$ that satisfy only the conditional independence relations of Equations~\eqref{eq_desep_bell_3} is \emph{strictly larger} than the set of probability distributions that satisfies these conditional independence relations \emph{and} Bell's inequality. In other words, Bell's inequality is not implied by Equations~\eqref{eq_desep_bell_3}.

This is the core of the concept of \interestingness: a \interesting DAG imposes nontrivial inequality constraints on the classically compatible distributions.  A ``nontrivial" inequality constraint is an inequality that is not implied by the observed conditional independence relations of the DAG, along with nonnegativity of all probabilities and normalization.\footnote{There is another type of equality constraint that a DAG can impose on the compatible distributions, apart from conditional independence relations: the ``nested Markov constraints". In Ref.~\cite{Evans_2022} it was proven that every DAG that presents nested Markov equality constraints also present nontrivial inequalities. Consequently, \interestingness may be equivalently defined relative to \emph{all implied equality constraints} or relative to \emph{all implied conditional independence relations}.} Note that an \noninteresting DAG can impose \emph{trivial} inequality constraints on the compatible distributions: for example, if the node $\Lambda$ in the Bell DAG was treated as an observed node, then the Bell inequalities would still be satisfied by the compatible distributions $P(ABXY\Lambda)$. However, in this case the Bell inequalities would be trivial, because they would be implied by \emph{observed} conditional independence relations (Eqs.~\eqref{eq_desep_bell_1} and~\eqref{eq_desep_bell_2}).

To formalize the idea of \interestingness, we will introduce a few definitions:

\begin{definition}
	\label{def2}
	Let $G$ be a DAG. The sets \textit{$\mathcal{C}_G$} and and \textit{$\mathcal{I}_G$}  of probability distributions over observed variables are defined as follows:

	\begin{enumerate}
		\item \textit{$\mathcal{C}_G$}: Set of probability distributions that are classically compatible with $G$.
		\item \textit{$\mathcal{I}_G$}: Set of probability distributions that satisfy all the conditional independence constraints that follow from the observed \(d\)-separation relations of $G$.
	\end{enumerate}
\end{definition}

For the case of the Bell DAG, $\mathcal{I}_{\text{Bell}}$ represents the set of distributions that obey the Equations~\eqref{eq_desep_bell_3}. By contrast, $\mathcal{C}_{\text{Bell}}$ consists of a strict subset of $\mathcal{I}_{\text{Bell}}$, where we additionally restrict the conditional probabilities $P(AB|XY)$ to satisfy Bell's inequalities and thus lie in the local polytope. %

Theorem~\ref{th_dsep_latent_permitting} shows that $\mathcal{C}_G \subseteq \mathcal{I}_G$ for every DAG $G$. This is so because all the probability distributions that are classically compatible with $G$ have to satisfy the conditional independence constraints that come from its observed \(d\)-separation relations. When the observed conditional independence relations are the only constraints imposed by a DAG on the compatible probability distributions over observed variables, the DAG is said to be \noninteresting:

\begin{definition}[\textbf{\noninterestingness}]
	Let $G$ be a DAG. If $\mathcal{C}_G = \mathcal{I}_G$, then $G$ is said to be \noninteresting. Conversely, if $\mathcal{C}_G \subsetneq \mathcal{I}_G$, then $G$ is said to be \interesting.
\end{definition}
We borrow the terminology of \interesting and \noninteresting from algebraic geometry: An algebraic set is defined by polynomial equalities (or more generally, by some finite union of sets each of which is defined by polynomial equalities). Semialgebraic sets, by contrast, are characterised by both polynomial equalities and polynomial inequalities. To emphasise that a DAG's set of (classically) compatible distributions is defined by \emph{more} that just the conditional independence (notably, equality) constraints, we therefore elect to speak of such a DAG as \interesting.

As proven by \HLP, the observed conditional independence constraints imposed by a DAG on the compatible probability distributions do not change when the latent variables of the DAG are associated with quantum systems or other GPT systems. Therefore, if one is interested in studying causal structures that provide any quantum or GPT observational advantage, then there is only hope among the \interesting scenarios. If a DAG is \noninteresting, then \emph{all} the probability distributions that exhibit the conditional independence relations associated with its observed \(d\)-separation relations can be explained by this DAG \emph{classically}.

Theorem~\ref{th_latent_free} implies that every latent-free DAG is \noninteresting. In \HLP, a stronger sufficient condition for \noninterestingness is provided, together with an algorithmic strategy to check it. This condition, called the \emph{HLP criterion}, will be discussed in Section \ref{sec_HLP_criterion}. It is still not known whether the \HLP criterion is also necessary for \noninterestingness; its possible outcomes for a given DAG are either that it is \noninteresting or that it is ``unresolved". The unresolved DAGs thus need to be assessed by some other method.

Based on the \HLP criterion and certain types of entropic inequalities, \HLP and Pienaar~\cite{Pienaar2017} classified the \noninterestingness or \interestingness of all DAGs of up to 6 \emph{total} nodes (observed and latent), thus leaving no DAGs of 6 total nodes with inconclusive status. In this work, however, we elect to count DAGs not by their \emph{total} node count but rather by their number of \emph{observed} nodes. It turns out that \HLP's complete classification of DAGs with \emph{total} node count up to 6 meant they resolved \emph{all} DAGs with 3 observed nodes, a \emph{few} DAGs with 4 observed nodes, and \emph{no} DAGs with 5 or more observed nodes. Here we attempt to tackle the \noninterestingness classification of all causal structures with up to 4 observed nodes.\footnote{Note that the set of all DAGs with 4 observed nodes \emph{includes} the set of all DAGs with 7 total nodes which persist under \HLP's reduction techniques. Thus, this work can \emph{also} be considered an extension of \HLP's classification from up-to-6 to up-to-7 total nodes. As noted in Section~\ref{sec:results}, we ulitmately resolve the \noninterestingness or \interestingness of \emph{all but three} DAGs of 4 observed nodes. Only \emph{one} of those 3 has 7 total nodes. Thus, we ultimately resolve \emph{all but one} DAG with 7 total nodes.} To do so, we will utilize the \emph{mDAG formalism} introduced by Evans in Ref.~\cite{Evans2015}.

\subsection{Simplifying the problem by using mDAG formalism}
\label{section_mDAG}

Two DAGs $G$ and $H$ are said to be \emph{classically observationally equivalent} when $\mathcal{C}_G=\mathcal{C}_H$. Note that $ \mathcal{C}_G= \mathcal{C}_H$ implies that $\mathcal{I}_G=\mathcal{I}_{H}$: by Theorem~\ref{th_dsep_latent_permitting}, if a certain \(d\)-separation relation is \emph{not} present in a DAG, it is always possible to find a probability distribution that violates the conditional independence corresponding to this \(d\)-separation relation  and is classically compatible with the DAG. In other words, if two DAGs are classically compatible with the same sets of distributions, they have to present the same \(d\)-separation relations.\footnote{On the other hand, if DAGs $G$ and $H$ are such that $\mathcal{I}_G=\mathcal{I}_H$, this does \emph{not} imply that  $\mathcal{C}_G=\mathcal{C}_H$.}

In short,\begin{equation}
	\mathcal{C}_G = \mathcal{C}_H \implies \mathcal{I}_G = \mathcal{I}_H,\end{equation}
with contrapositive
\begin{equation}
	\mathcal{I}_G \neq \mathcal{I}_H \implies \mathcal{C}_G \neq \mathcal{C}_H.
	\label{eq_different_I}
\end{equation}

In particular, this means that if a DAG $G$ is \noninteresting (\interesting), then all of the DAGs $H$ that are compatible with it are also \noninteresting (\interesting).

In this section, we will present two results of~\cite{Evans2015} that prove classical observational equivalence, thus reducing the number of DAGs that have to be examined for \noninterestingness. After presenting the two results, we will show that they allow for a definition of a new object, called mDAG, that encompasses this simplification.

To do so, we start with the definition of \emph{exogenization}. It might be easier to understand this definition by following Figure~\ref{fig3}, where DAG~\ref{fig3}(b) is obtained from DAG~\ref{fig3}(a) by exogenizing node $B$.

\begin{definition}[\textbf{Exogenized DAG}]
	\label{exg}
	Let $G$ be a DAG and let $\lambda$ be a latent variable of $G$. We define the \textit{exogenized DAG} $\mathcal{E}(G,\lambda)$ as follows: take the vertices and edges of $G$ and (1) add an edge $m\rightarrow n$ from every $m\in \text{PA}_G(\lambda)$ to every $n \in CH_G(\lambda)$ and (2) delete edges $m\rightarrow \lambda$ for every $l \in \text{PA}_{G}(\lambda)$. All other edges remain the same. \end{definition}

With this definition at hand, we state the Lemma 3.7 of~\cite{Evans2015}:

\begin{lemma}[Exogenization]
	\label{lemma_exogenization}
	Let $G$ be DAG with observed nodes $V$ and latent nodes $L$, and let $\lambda\in L$ be a latent node of $G$. Furthermore, let $\Tilde{G}=\mathcal{E}(G,\lambda)$. Then,  $\mathcal{C}_G=\mathcal{C}_{\Tilde{G}}$.
\end{lemma}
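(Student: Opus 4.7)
My strategy is to prove both inclusions $\mathcal{C}_G \subseteq \mathcal{C}_{\Tilde{G}}$ and $\mathcal{C}_{\Tilde{G}} \subseteq \mathcal{C}_G$ by explicit reparametrizations of the latent node $\lambda$. First I would verify that $\Tilde{G}$ is itself a DAG: since every $m \in \text{PA}_G(\lambda)$ is an ancestor in $G$ of every $n \in \text{CH}_G(\lambda)$ via $\lambda$, the newly inserted shortcut edges $m\to n$ cannot introduce a directed cycle. The factorizations prescribed by $G$ and $\Tilde{G}$ agree on every node other than $\lambda$ itself and $\text{CH}_G(\lambda)$, so the entire task reduces to matching what those factorizations produce after summing out $\lambda$.

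For $\mathcal{C}_G \subseteq \mathcal{C}_{\Tilde{G}}$, given $P(V,L)$ Markov w.r.t.\ $G$, I would let $\lambda$ in $\Tilde{G}$ range over the set of ``response functions'' $f:\text{dom}(\text{PA}_G(\lambda)) \to \text{dom}(\lambda^G)$ equipped with the exogenous product distribution
\begin{equation}
P'(\lambda = f) \;=\; \prod_{\text{pa}} P\bigl(\lambda^G = f(\text{pa}) \,\big|\, \text{PA}_G(\lambda) = \text{pa}\bigr).
\end{equation}
For each child $n \in \text{CH}_G(\lambda)$, whose parent set in $\Tilde{G}$ enlarges to $\text{PA}_G(n)\cup\text{PA}_G(\lambda)$, I would define $P'(n\mid\text{PA}_{\Tilde{G}}(n))$ by substituting $f(\text{pa})$ for $\lambda^G$ inside the original conditional $P(n\mid\text{PA}_G(n))$, where $\text{pa}$ is the realized value of $\text{PA}_G(\lambda)$. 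All other conditionals are copied verbatim from $P$. Marginalizing $f$ out of $\Tilde{G}$'s factorization makes the product across non-realized parent values collapse, leaving exactly the $\lambda$-marginalized $G$-factorization.

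For the reverse inclusion $\mathcal{C}_{\Tilde{G}} \subseteq \mathcal{C}_G$, given $P'(V,L)$ Markov w.r.t.\ $\Tilde{G}$, I would use a pair-encoding: let $\lambda^G$ in $G$ take values in $\text{dom}(\lambda^{\Tilde{G}}) \times \text{dom}(\text{PA}_G(\lambda))$, with conditional
\begin{equation}
P\bigl(\lambda^G = (\ell,\text{pa}) \,\big|\, \text{PA}_G(\lambda) = \text{pa}^\ast\bigr) \;=\; \delta_{\text{pa},\text{pa}^\ast}\;P'(\lambda^{\Tilde{G}} = \ell),
\end{equation}
so that one coordinate of $\lambda^G$ faithfully records the realized values of its parents, while the other independently samples a $\Tilde{G}$-value of $\lambda$. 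Each child's conditional $P(n\mid\text{PA}_G(n))$ reads both coordinates off $\lambda^G$ and invokes the stored $\Tilde{G}$-conditional $P'(n\mid\text{PA}_{\Tilde{G}}(n))$. Summing over the pair reproduces the $\lambda^{\Tilde{G}}$-marginalized $\Tilde{G}$-factorization.

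The main obstacle is purely bookkeeping: verifying that the constructed conditionals factorize \emph{precisely} according to the target DAG rather than merely some supergraph, and that marginalizing over all latents on both sides yields the same $P(V)$. The conceptual content is that $\Tilde{G}$ only re-routes the influence of $\text{PA}_G(\lambda)$ on $\text{CH}_G(\lambda)$ — channeled through $\lambda$ in $G$, but split between a direct edge and an enlarged-alphabet exogenous $\lambda$ in $\Tilde{G}$. The response-function and pair-encoding tricks are the two canonical ways to shuttle information between these two channels without disturbing the observed marginal.
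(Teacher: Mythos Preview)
Your argument is correct. The paper itself does not prove this lemma --- it is simply cited as Lemma~3.7 of Evans~\cite{Evans2015} with no argument given --- so there is no paper-proof to compare against. The response-function reparametrization for $\mathcal{C}_G\subseteq\mathcal{C}_{\Tilde G}$ and the pair-encoding for the reverse inclusion are the standard constructions for this result, and your check that marginalizing the reparametrized $\lambda$ reproduces the other graph's $\lambda$-marginalized factorization is accurate.
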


Now, we state the Lemma 3.8 of~\cite{Evans2015}. This Lemma is also illustrated in Figure~\ref{fig3}, where it is applied to go from~\ref{fig3}(b) to~\ref{fig3}(c).

\begin{lemma}[No redundant latents]
	\label{lemma_no_redundancy}
	Let $G$ be a DAG with observed nodes $V$ and latent nodes $L$. Let $\lambda\in L$ and $\mu\in L$ be latent nodes of $G$ such that $\lambda \neq \mu$, $\text{PA}_G(\lambda)=\text{PA}_G(\mu)=\emptyset$ and $CH_G(\lambda) \subseteq CH_G(\mu)$. In this case, we say that the node $\lambda$ is ``redundant". Let $G'$ be the DAG obtained after deleting the node $\lambda$. Then,  $\mathcal{C}_G=\mathcal{C}_{G'}$.
\end{lemma}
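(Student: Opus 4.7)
The plan is to prove the two inclusions $\mathcal{C}_{G'} \subseteq \mathcal{C}_G$ and $\mathcal{C}_G \subseteq \mathcal{C}_{G'}$ separately, since both are needed for the claimed equality of sets of classically compatible distributions.

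For the easy direction $\mathcal{C}_{G'} \subseteq \mathcal{C}_G$, given any $P'(\boldsymbol{V}, \boldsymbol{L}\setminus\{\lambda\})$ that is Markov with respect to $G'$, I would extend it by tensoring with an arbitrary distribution $Q(\lambda)$ over $\lambda$, e.g.\ a point mass. The resulting distribution on $\boldsymbol{V}\cup\boldsymbol{L}$ factorizes according to $G$ because: (i) in $G$ the node $\lambda$ has no parents, so its factor $Q(\lambda)$ is in the correct form; (ii) every child $c\in \mathrm{CH}_G(\lambda)$ also has $\mu$ as a parent in $G$ (using $\mathrm{CH}_G(\lambda)\subseteq \mathrm{CH}_G(\mu)$), and by assigning a conditional $P'(c\,|\,\mathrm{PA}_{G'}(c))$ that simply ignores the new $\lambda$-argument, the Markov condition of $G$ is satisfied; (iii) all other factors are unchanged. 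Marginalizing $\lambda$ recovers $P'$ on $\boldsymbol{V}$, so $P'\in \mathcal{C}_G$.

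The interesting direction is $\mathcal{C}_G \subseteq \mathcal{C}_{G'}$. Given $P(\boldsymbol{V},\boldsymbol{L})$ Markov with respect to $G$, I would absorb $\lambda$ into $\mu$ by defining a new latent $\tilde{\mu}$ taking values in the Cartesian product of the alphabets of $\mu$ and $\lambda$. Since both $\mu$ and $\lambda$ are parentless in $G$, the Markov factorization of $P$ contains the independent factors $P(\mu)P(\lambda)$, which I package as the parentless factor $P(\tilde{\mu}) := P(\mu)P(\lambda)$ in $G'$. For every child $c\in \mathrm{CH}_{G'}(\tilde{\mu}) = \mathrm{CH}_G(\mu)$, I define the new conditional by projecting out the appropriate components of $\tilde{\mu}$: if $c\in \mathrm{CH}_G(\lambda)$, set $P'(c\,|\,\tilde{\mu}, \text{others}) = P(c\,|\,\mu,\lambda,\text{others})$; if $c\notin \mathrm{CH}_G(\lambda)$, set $P'(c\,|\,\tilde{\mu}, \text{others}) = P(c\,|\,\mu,\text{others})$, ignoring the $\lambda$-coordinate. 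All remaining nodes (those not adjacent to $\lambda$ or $\mu$) retain their conditionals from $P$. The resulting distribution is Markov with respect to $G'$ by construction, and summing over $\tilde{\mu}$ in $G'$ reproduces the same marginal $P(\boldsymbol{V})$ as summing over $\mu$ and $\lambda$ in $G$.

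The main obstacle will be bookkeeping rather than any deep idea: one must verify that reparenting children of $\lambda$ onto the expanded latent $\tilde{\mu}$ faithfully reproduces the original joint on $\boldsymbol{V}$, and that the modifications do not clash with the conditional factors of nodes whose parent sets happen to intersect both $\mathrm{CH}_G(\lambda)$ and $\mathrm{CH}_G(\mu)$. The hypotheses $\mathrm{PA}_G(\lambda)=\mathrm{PA}_G(\mu)=\emptyset$ are essential here --- without them the factor $P(\tilde{\mu})$ would not be a legitimate parentless factor in $G'$ --- and the hypothesis $\mathrm{CH}_G(\lambda)\subseteq \mathrm{CH}_G(\mu)$ is exactly what guarantees that every node that used to read off $\lambda$ in $G$ still has access to that information through $\tilde{\mu}$ in $G'$.
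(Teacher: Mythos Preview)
Your argument is correct and is essentially the standard proof of this result. Note, however, that the paper does not actually supply its own proof of this lemma: it merely states the result and attributes it to Lemma~3.8 of Evans~\cite{Evans2015}. Your two-inclusion argument --- trivially extending by a point-mass $\lambda$ for $\mathcal{C}_{G'}\subseteq\mathcal{C}_G$, and absorbing $\lambda$ into $\mu$ via the Cartesian-product latent $\tilde\mu=(\mu,\lambda)$ for $\mathcal{C}_G\subseteq\mathcal{C}_{G'}$ --- is precisely the construction Evans uses in that reference, so there is no meaningful difference in approach to report.
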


Like Lemma~\ref{lemma_exogenization}, Lemma~\ref{lemma_no_redundancy}  also shows conditions under which proving the \;\;\interestingness of one DAG automatically gives you the \interestingness of another.

For example, Lemmas~\ref{lemma_exogenization} and~\ref{lemma_no_redundancy}  show that all the three DAGs of Figure~\ref{fig3} have the same sets $\mathcal{C}$ and $\mathcal{I}$. Thus, we need only to examine one DAG out of the these. It makes sense to pick the DAG of Figure~\ref{fig3}(c), as it is the simpler.

Following this same idea, we can work with the concept of an mDAG, first defined in Ref.~\cite{Evans2015}. The definition below is different than, but equivalent to the one presented in Ref.~\cite{Evans2015}.

\begin{definition}[\textbf{mDAG}]
	An mDAG is a DAG where none of the latent nodes is redundant (as defined in Lemma~\ref{lemma_no_redundancy}) nor has any parents.
\end{definition}

For example, the DAG of Figure~\ref{fig3}(c) is an mDAG. For a fixed number of observed nodes, there is a finite number of mDAGs. In particular, for 3 observed nodes there are 46 mDAGs, while for 4 observed nodes there are 2809 mDAGs. Lemmas~\ref{lemma_exogenization} and~\ref{lemma_no_redundancy} show that the mDAG encodes all the necessary information of the DAG if you only want to talk about the sets $\mathcal{C}$ and $\mathcal{I}$.

\begin{flushleft}
	\begin{figure}[h!]
		\hspace{-2mm}\begin{tabular}{ccc}
			\begin{tikzpicture}[scale=0.8]
				\node[c](A) at (0,0){$A$};
				\node[q](C) at (4,0){$C$};
				\node[q](B) at (2,0){$B$}
				edge[e] (A)
				edge[e] (C);
				\node[c](D) at (0,2){$D$}
				edge[e] (B);
				\node[c](E) at (4,2){$E$}
				edge[e] (B);
				\node[c](F) at (2,-2){$F$}
				edge[e] (C);
			\end{tikzpicture} $\longrightarrow$
			&
			\begin{tikzpicture}[scale=0.8]
				\node[c](A) at (0,0){$A$};
				\node[q](C) at (4,0){$C$};
				\node[q](B) at (2,0){$B$};
				\node[c](D) at (0,2){$D$}
				edge[e] (A)
				edge[e] (C)
				edge[e] (B);
				\node[c](E) at (4,2){$E$}
				edge[e] (A)
				edge[e] (C)
				edge[e] (B);
				\node[c](F) at (2,-2){$F$}
				edge[e] (C);
			\end{tikzpicture} $\longrightarrow$
			&
			\begin{tikzpicture}[scale=0.8]
				\node[c](A) at (0,0){$A$};
				\node[q](C) at (4,0){$C$};
				\node[c](D) at (0,2){$D$}
				edge[e] (A)
				edge[e] (C);
				\node[c](E) at (4,2){$E$}
				edge[e] (A)
				edge[e] (C);
				\node[c](F) at (2,-2){$F$}
				edge[e] (C);
			\end{tikzpicture}  \\ { } &{ } & { } \\
			(a) & (b) & (c)
		\end{tabular}
		\caption{The DAGs (a) and (b) are classically observationally equivalent to the mDAG of (c). The step (a)$\rightarrow$(b) can be shown by Lemma~\ref{lemma_exogenization}, and the step (b)$\rightarrow$(c) can be shown by Lemma~\ref{lemma_no_redundancy}.}
		\label{fig3}
	\end{figure}
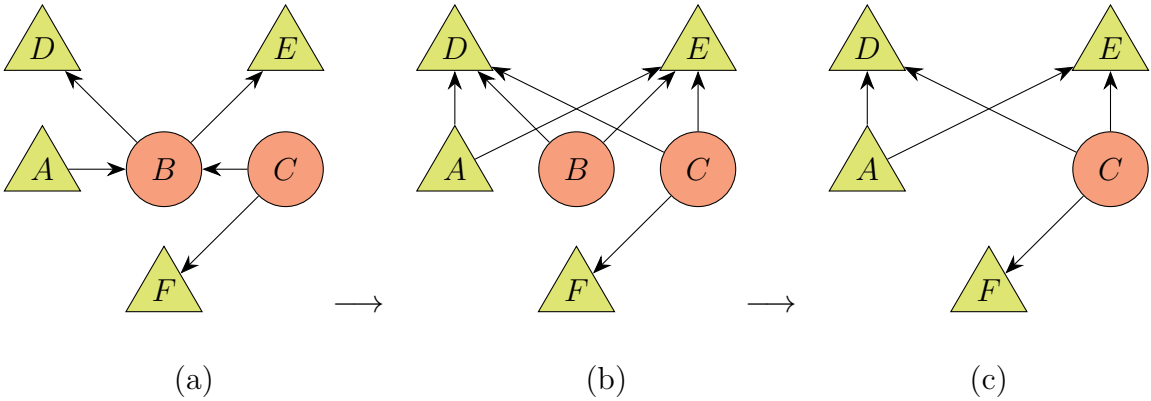
\end{flushleft}

The Lemmas here presented also give an argument in favour of counting the causal structures in terms of the number of observed nodes instead of in terms of the total number of nodes; Lemma~\ref{lemma_no_redundancy} shows that DAGs~\ref{fig3}(a) and~\ref{fig3}(b), that have 6 total nodes, actually do not have to be analyzed; their \interestingness can be examined by looking at~\ref{fig3}(c), that has 5 total nodes.

\subsection{The HLP criterion}
\label{sec_HLP_criterion}

In this section, we describe the sufficient criterion for \noninterestingness that was developed in \HLP. The criterion gives some transformations that take a DAG $G$ to another DAG $H$ such that $\mathcal{C}_H\subseteq \mathcal{C}_G$ while $\mathcal{I}_H=\mathcal{I}_G$. If the final DAG $H$ is known to be \noninteresting (for example, by being latent-free), then the original DAG $G$ is also \noninteresting.

These transformations, adjusted to the language of mDAGs, are presented in Theorem~\ref{hlp_theorem}.

\begin{theorem}
	Let $G$ and $H$ be two mDAGs. Suppose that $H$ can be obtained by starting from $G$ and applying one or more of the following transformations:
	\begin{enumerate}
		\item Removal of an edge.
		\item Addition of an edge ${X \rightarrow Y}$, where previously ${PA(X) \subseteq PA(Y)}$ and $PA(X)$ contained at least one latent node.
	\end{enumerate}

	Then, ${\mathcal{C}_{H} \subseteq \mathcal{C}_G}$.
	\label{hlp_theorem}
\end{theorem}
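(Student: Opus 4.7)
The plan is to decompose the proof into the two transformation types and handle each separately; because set-theoretic inclusion composes transitively, proving the single-transformation case for each type yields the statement for any sequence of transformations by iteration. Throughout, I will argue at the level of joint distributions over both observed and latent variables: if every joint distribution Markov with respect to $H$ can be realized (possibly after enlarging the sample space of some latent) as a joint Markov with respect to $G$ having the same observed marginal, then $\mathcal{C}_H \subseteq \mathcal{C}_G$ is immediate from the definition of classical compatibility given in Section~\ref{section_latents}.

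For the edge-removal transformation, suppose $H$ is obtained from $G$ by deleting an edge $A \to B$. Starting from any joint $P_H(\mathbf{V},\mathbf{L})$ Markov with respect to $H$, I take $P_G := P_H$ unchanged and verify Markov factorization for $G$. All factors coincide with those of $H$ except at $B$, where the $G$-factor is $P_H(B \mid \text{PA}_G(B)) = P_H(B \mid \text{PA}_H(B), A)$. Since the removed edge $A \to B$ was present in the acyclic $G$, $B$ cannot be an ancestor of $A$ in $G$ nor in $H$ (the latter having strictly fewer edges), so the local Markov property applied to $P_H$ in $H$ yields $B \dbot A \mid \text{PA}_H(B)$. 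The two $B$-conditionals therefore agree, the $G$-factorization collapses onto the $H$-factorization, and $P_H$ is Markov with respect to $G$ as well.

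For the edge-addition transformation, suppose $H$ is obtained from $G$ by adding $X \to Y$ with $\text{PA}_G(X) \subseteq \text{PA}_G(Y)$ and some latent $\lambda \in \text{PA}_G(X)$ (guaranteed by the hypothesis). The core idea is that the influence of $X$ on $Y$ that $H$ carries through the explicit edge can equivalently be routed through the shared latent $\lambda$ in $G$. Given $P_H$ Markov with respect to $H$, I would enlarge the sample space of $\lambda$ in $G$ to values $\tilde{\lambda} = (\lambda, \phi)$, where $\phi$ is a random ``response function'' from $\text{dom}(\text{PA}_G(X) \setminus \{\lambda\})$ to $\text{dom}(X)$ whose values $\phi(s')$ are conditionally independent given $\lambda$ and each individually distributed as $X \mid \lambda, \text{PA}_G(X) \setminus \{\lambda\} = s'$ under $P_H$. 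I then make $X$ a deterministic function of its parents in $G$ via $X := \phi(s')$; I replace the $Y$-conditional by $P_G(Y \mid \tilde{\lambda}, T') := P_H(Y \mid \lambda, T', X = \phi(s'))$, which is well-defined because $\text{PA}_G(X) \setminus \{\lambda\} \subseteq T' := \text{PA}_G(Y) \setminus \{\lambda\}$; and I leave every other conditional of $G$ (including those for children of $\lambda$ distinct from $X$ and $Y$, which depend only on the $\lambda$-coordinate of $\tilde\lambda$) equal to its $P_H$ counterpart. Summing the unused coordinates of $\phi$ at fixed $(\lambda, X)$ returns the $H$-factor $P_H(X \mid \lambda, s')$, so $P_G$ reassembles the $P_H$-factorization and has the same observed marginal.

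The main obstacle I anticipate is the correctness of the construction in case (2): the response function $\phi$ must be defined on the full product of domains of all non-$\lambda$ parents of $X$, some of which may themselves be latent, and one must confirm that the enlargement of $\lambda$ preserves the $G$-factorization of every other node. A minor side check is that the added edge $X \to Y$ does not create a cycle in $G$: any hypothetical directed path $Y \to \cdots \to X$ would terminate at a parent of $X$ which, by $\text{PA}_G(X) \subseteq \text{PA}_G(Y)$, is also a parent of $Y$ — producing a cycle already present in $G$ and contradicting that $G$ is a DAG.
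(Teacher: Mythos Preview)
The paper does not supply its own proof of this theorem: it is stated as a result imported from \HLP and then used to derive Corollary~\ref{hlp_criterion}. So there is no in-paper argument to compare against; I can only assess your proposal on its merits.

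Your argument is correct. For transformation~1, the local-Markov step is fine (and could be shortened: since $\text{PA}_H(B)\subset\text{PA}_G(B)$, any $H$-factorisation is literally a $G$-factorisation in which the $B$-factor happens to ignore $A$). For transformation~2, your response-function construction is precisely the standard device used in \HLP: absorb the randomness of $X$ into the shared latent $\lambda$ so that $X$ becomes deterministic in its $G$-parents, whence the hypothesis $\text{PA}_G(X)\subseteq\text{PA}_G(Y)$ lets $Y$ read off $X$ from $\text{PA}_G(Y)$ alone. Your marginalisation check (summing out the unused coordinates of $\phi$ collapses back to the $H$-factor $P_H(X\mid\lambda,s')$) is the key computation and it goes through exactly as you sketch. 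The worries you flag are not genuine obstacles: other children of $\lambda$ simply ignore the $\phi$-coordinate, and latent entries among $\text{PA}_G(X)\setminus\{\lambda\}$ cause no difficulty since you are working with the full joint. Your acyclicity side-check is also sound: a hypothetical directed path $Y\to\cdots\to W\to X$ in $G$ would force $W\in\text{PA}_G(X)\subseteq\text{PA}_G(Y)$, closing a cycle $Y\to\cdots\to W\to Y$ already in $G$.
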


Now we state the \emph{HLP criterion} as a corollary:

\begin{corollary}[HLP Criterion]
	\label{hlp_criterion}
	Let $G$ be an mDAG. Suppose that by a sequence of the transformations defined in Theorem~\ref{hlp_theorem} it is possible to start from $G$ and reach another mDAG $H$ such that:
	\begin{enumerate}
		\item $H$ does not have latent nodes.
		\item The set of observed \(d\)-separation relations of $H$ and $G$ is the same, i.e. $\mathcal{I}_H=\mathcal{I}_G$.
	\end{enumerate}

	Then, the original mDAG $G$ is \noninteresting.
\end{corollary}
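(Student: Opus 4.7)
The plan is to derive the corollary as an immediate chain of set inclusions built from Theorem~\ref{hlp_theorem}, Theorem~\ref{th_latent_free}, Theorem~\ref{th_dsep_latent_permitting}, and the hypothesis $\mathcal{I}_H = \mathcal{I}_G$. Nothing new needs to be proved about the transformations themselves; the work is entirely bookkeeping on the two relevant sets of distributions.

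First, I would invoke Theorem~\ref{hlp_theorem} directly on the sequence of transformations that produces $H$ from $G$. Since the theorem preserves the inclusion $\mathcal{C}_{(\cdot)} \subseteq \mathcal{C}_{(\cdot)}$ through each individual transformation, a routine induction on the length of the sequence yields $\mathcal{C}_H \subseteq \mathcal{C}_G$. Next, because $H$ is latent-free (hypothesis 1), Theorem~\ref{th_latent_free} applies and gives $\mathcal{C}_H = \mathcal{I}_H$, i.e.\ every distribution satisfying the conditional independences of $H$ is actually Markov with respect to $H$.

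I then combine this with hypothesis 2, namely $\mathcal{I}_H = \mathcal{I}_G$, and with the always-valid inclusion $\mathcal{C}_G \subseteq \mathcal{I}_G$ from Theorem~\ref{th_dsep_latent_permitting}, to produce the chain
\begin{equation}
\mathcal{I}_G \;=\; \mathcal{I}_H \;=\; \mathcal{C}_H \;\subseteq\; \mathcal{C}_G \;\subseteq\; \mathcal{I}_G.
\end{equation}
Since the leftmost and rightmost sets coincide, equality must hold throughout. In particular $\mathcal{C}_G = \mathcal{I}_G$, which is exactly the definition of $G$ being \noninteresting.

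There is essentially no technical obstacle: the only step requiring a little care is the induction that lifts Theorem~\ref{hlp_theorem} (stated for ``one or more'' transformations) to arbitrary-length sequences of transformations, and the verification that each intermediate object is still a valid mDAG so that the theorem can be reapplied. Everything else is a direct substitution using the quoted results.
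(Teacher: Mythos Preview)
Your proposal is correct and follows essentially the same argument as the paper's own proof: invoke Theorem~\ref{hlp_theorem} for $\mathcal{C}_H\subseteq\mathcal{C}_G$, use Theorem~\ref{th_latent_free} on the latent-free $H$ to get $\mathcal{C}_H=\mathcal{I}_H$, and close the chain with $\mathcal{I}_H=\mathcal{I}_G$ and $\mathcal{C}_G\subseteq\mathcal{I}_G$. Your remark about induction is unnecessary since Theorem~\ref{hlp_theorem} is already stated for ``one or more'' transformations, and the worry about intermediate objects being valid mDAGs is likewise moot for the same reason.
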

\begin{proof}
	From Theorem~\ref{hlp_theorem}, $\mathcal{C}_H \subseteq \mathcal{C}_G$. Since $H$ is latent-free, by Theorem~\ref{th_latent_free} it is \noninteresting, i.e. $\mathcal{C}_H=\mathcal{I}_H$. Therefore, we have ${\mathcal{I}_G =\mathcal{I}_H=\mathcal{C}_H\subseteq \mathcal{C}_G}$. As noted before, ${\mathcal{C}\subseteq\mathcal{I}}$ is valid for any DAG, due to Theorem~\ref{th_dsep_latent_permitting}. Therefore, ${\mathcal{C}_G=\mathcal{I}_G}$, meaning that $G$ is \noninteresting.
\end{proof}

Figure~\ref{hlpfig}(a) exemplifies an mDAG that can be shown \noninteresting by the \HLP Criterion: by a sequence of transformations defined in Theorem~\ref{hlp_theorem} we can obtain the mDAG shown in Figure~\ref{hlpfig}(c), that obeys the conditions presented in Corollary~\ref{hlp_criterion}.

\begin{flushleft}
	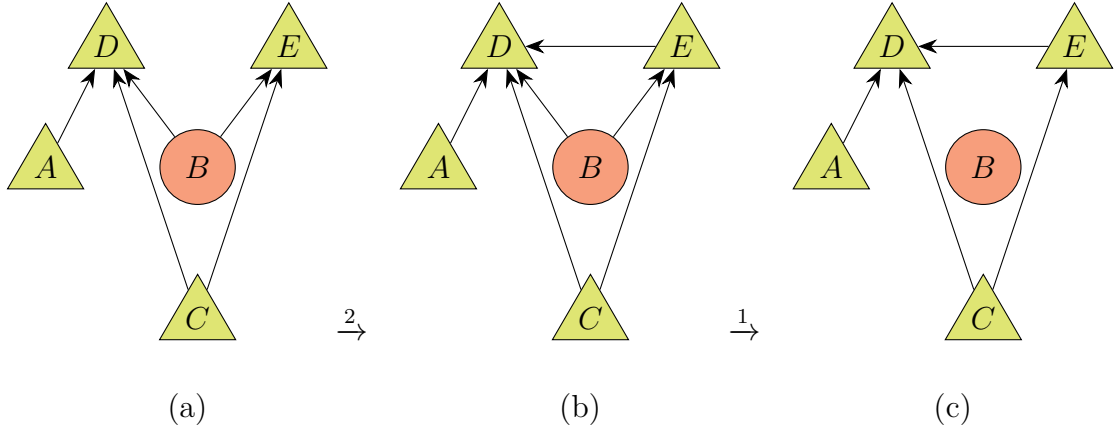
\begin{figure}[h!]
		\begin{tabular}{ccc}
			\begin{tikzpicture}[scale=0.8]
				\node[c](A) at (0,0){$A$};
				\node[c](C) at (2.5,-2.5){$C$};
				\node[q](B) at (2.5,0){$B$};
				\node[c](D) at (1,2){$D$}
				edge[e](A)
				edge[e] (C)
				edge[e] (B);
				\node[c](E) at (4,2){$E$}
				edge[e] (C)
				edge[e] (B);
			\end{tikzpicture} $\xrightarrow{2}$
			&
			\begin{tikzpicture}[scale=0.8]
				\node[c](A) at (0,0){$A$};
				\node[c](C) at (2.5,-2.5){$C$};
				\node[q](B) at (2.5,0){$B$};
				\node[c](D) at (1,2){$D$}
				edge[e](A)
				edge[e](C)
				edge[e](E)
				edge[e] (B);
				\node[c](E) at (4,2){$E$}
				edge[e](C)
				edge[e](B);
			\end{tikzpicture} $\xrightarrow{1}$
			&
			\begin{tikzpicture}[scale=0.8]
				\node[c](A) at (0,0){$A$};
				\node[q](B) at (2.5,0){$B$};
				\node[c](C) at (2.5,-2.5){$C$};
				edge[e] (C);
				\node[c](D) at (1,2){$D$}
				edge[e](A)
				edge[e](E)
				edge[e](C);
				\node[c](E) at (4,2){$E$}
				edge[e](C);
			\end{tikzpicture}  \\ { } &{ } & { } \\
			(a) & (b) & (c)
		\end{tabular}
		\caption{An example of the application of the HLP criterion. All of these three mDAGs have the same set of \(d\)-separation relations: $A\perp E$, $A\perp E|C$,$A\perp C$ and $A\perp C|D$. Since (c) is latent-free, we can conclude that (a) is \noninteresting.}
		\label{hlpfig}
	\end{figure}
\end{flushleft}

The natural question to ask here is whether the \HLP criterion is also necessary for an mDAG to be \noninteresting. The conjecture that the \HLP criterion is also necessary will be called the \emph{HLP conjecture}:

\begin{conjecture}[HLP Conjecture]
	\label{hlp_conjecture1}
	Let $G$ be an \noninteresting mDAG. Then, by a sequence of transformations defined in Theorem~\ref{hlp_theorem}, it is possible to start from $G$ and reach another mDAG $H$ such that:
	\begin{enumerate}
		\item $H$ does not have latent nodes.
		\item The set of observed \(d\)-separation relations of $H$ and $G$ is the same, i.e. $\mathcal{I}_H=\mathcal{I}_G$.
	\end{enumerate}
\end{conjecture}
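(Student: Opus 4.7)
The plan is to start from Evans' result~\cite{Evans_2022} that an mDAG $G$ is \noninteresting if and only if it is classically observationally equivalent to \emph{some} latent-free DAG $H'$, i.e.\ $\mathcal{C}_G = \mathcal{C}_{H'}$. Because $\mathcal{C}_G = \mathcal{C}_{H'}$ forces $\mathcal{I}_G = \mathcal{I}_{H'}$ (Equation~\eqref{eq_different_I}), this already furnishes a candidate target that a sequence of \HLP transformations ought to reach: $H'$ is latent-free and has the same observed $d$-separation relations as $G$. The whole conjecture thus collapses to showing that the two transformations of Theorem~\ref{hlp_theorem}---edge deletion and the ``parent-dominated'' edge addition---are expressive enough to bridge the gap between $G$ and $H'$ (or any latent-free $H$ with $\mathcal{I}_H = \mathcal{I}_G$) without ever altering $\mathcal{I}$ along the way.

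\textbf{Saturate then prune.} Given \noninteresting $G$, I would first enlarge $G$ to a ``saturated'' intermediate $\widetilde{G}$ by repeatedly applying Transformation~2: whenever there are nodes $X,Y$ with $\text{PA}(X)\subseteq \text{PA}(Y)$ and $\text{PA}(X)$ containing a latent, add the edge $X\to Y$. Since Transformation~2 was designed to preserve $d$-separation, it keeps $\mathcal{I}$ invariant; moreover it only \emph{shrinks} $\mathcal{C}$, so $\widetilde{G}$ remains \noninteresting. Next, with a latent-free $H'$ obtained from Evans' theorem as a template, I would try to show that every edge of $H'$ has been installed during saturation---so that the observed skeleton of $\widetilde{G}$ dominates that of $H'$---and then apply Transformation~1 to delete every edge out of a latent. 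The residual latent nodes, now childless, can be discarded as redundant in the sense of Lemma~\ref{lemma_no_redundancy}, yielding a latent-free $H$ with $\mathcal{I}_H = \mathcal{I}_G$ as required.

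\textbf{Main obstacle.} The delicate step is proving that saturation indeed reproduces all the observed-to-observed edges of $H'$ before any pruning occurs. The precondition $\text{PA}(X)\subseteq \text{PA}(Y)$ is restrictive, and one can easily imagine \noninteresting graphs whose latent-free observational twin $H'$ contains an edge $X\to Y$ for which no intermediate mDAG ever witnesses the requisite parent-containment. One could attempt an induction on the latents of $G$, showing that for each latent $\lambda$ the observed edges of $H'$ among $\text{CH}_G(\lambda)$ can be installed via Transformation~2 from $\lambda$'s presence, after which $\lambda$ becomes dispensable. The central difficulty is consistency of the induction hypothesis: installing the edges required by the \emph{next} latent should not rely on an edge one has already deleted, nor should it require a parent-containment that held only \emph{before} some earlier step. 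I expect that this is precisely where either a clean proof emerges from a well-chosen canonical order on the latents, or else a counterexample hides---plausibly among the three unresolved 4-observed-node mDAGs of Table~\ref{tab:introsummary}. Absent such a scheduling argument, the conjecture will remain out of reach of a purely graph-theoretic proof and must be attacked by combining Evans' theorem with an explicit \HLP construction tailored to the structure of the equivalent latent-free graph.
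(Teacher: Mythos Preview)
The statement you are attempting to prove is labeled a \emph{Conjecture} in the paper, and the paper does not prove it. Immediately after stating Conjecture~\ref{hlp_conjecture1} the authors write: ``As we will see, our current results do not prove this conjecture, but give hints towards its validity.'' The entire paper is devoted to amassing computational evidence for the conjecture by certifying \interestingness case-by-case, not to proving the conjecture in general. So there is no ``paper's own proof'' to compare against; your proposal is an attack plan on an open problem.

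Your outline is a sensible framing---Evans' Theorem~\ref{equivalence_evans} does hand you a latent-free target $H'$ with $\mathcal{I}_{H'}=\mathcal{I}_G$, and the conjecture does reduce to a reachability claim about the \HLP moves---but there are two issues worth flagging. First, a small logical slip: you write that after saturation ``$\widetilde{G}$ remains \noninteresting'' because Transformation~2 only shrinks $\mathcal{C}$. That inference goes the wrong way: shrinking $\mathcal{C}$ while keeping $\mathcal{I}$ fixed can only \emph{create} a gap, not close one. Fortunately this does not matter for the strategy, since you only need the terminal $H$ to be latent-free (hence automatically \noninteresting), not the intermediates. Second, you assert that Transformation~2 ``was designed to preserve \(d\)-separation''; Theorem~\ref{hlp_theorem} only asserts $\mathcal{C}_H\subseteq\mathcal{C}_G$, and the \HLP criterion (Corollary~\ref{hlp_criterion}) explicitly \emph{checks} \(d\)-separation agreement at the end rather than getting it for free at each step, so this would need independent justification.

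Your ``main obstacle'' paragraph correctly isolates the actual difficulty: there is no known scheduling of Transformation~2 moves that is guaranteed to install every observed edge of the Evans target $H'$ before the latents are stripped. You are right that this is where a counterexample could lurk, and the paper itself singles out the three unresolved mDAGs in Table~\ref{table3} as the only candidates among all 4-observed-node mDAGs. In short, your proposal is not a proof but a coherent research plan, and you have accurately diagnosed why it does not close.
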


As we will see, our current results do not prove this conjecture, but give hints towards its validity.

In Ref.~\cite{Evans_2022}, Evans has shown that:

\begin{theorem}
	\label{equivalence_evans}
	A latent-permitting DAG $G$ is \noninteresting if and only if it is classically observationally equivalent to a latent-free DAG $H$.
\end{theorem}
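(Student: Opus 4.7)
The forward direction is essentially a chase through machinery already on the page. Suppose $G$ is classically observationally equivalent to a latent-free DAG $H$, so $\mathcal{C}_G = \mathcal{C}_H$. The implication $\mathcal{C}_G = \mathcal{C}_H \Rightarrow \mathcal{I}_G = \mathcal{I}_H$ established just above equation~\eqref{eq_different_I} gives $\mathcal{I}_G = \mathcal{I}_H$, while Theorem~\ref{th_latent_free} gives $\mathcal{C}_H = \mathcal{I}_H$ because $H$ has no latent nodes. Chaining, $\mathcal{C}_G = \mathcal{C}_H = \mathcal{I}_H = \mathcal{I}_G$, so $G$ is \noninteresting.

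The reverse direction is where the real work sits. Assuming $\mathcal{C}_G = \mathcal{I}_G$, my plan would be to construct a latent-free DAG $H$ on the observed vertex set $\boldsymbol{V}$ of $G$ whose \(d\)-separation structure coincides \emph{exactly} with the observed \(d\)-separation structure of $G$. Granted such an $H$, we would have $\mathcal{I}_H = \mathcal{I}_G$ by construction, Theorem~\ref{th_latent_free} would upgrade this to $\mathcal{C}_H = \mathcal{I}_H$, and the algebraicness hypothesis would close the loop: $\mathcal{C}_H = \mathcal{I}_H = \mathcal{I}_G = \mathcal{C}_G$.

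The natural candidate for $H$ is a minimal I-map built greedily. Fix a topological order $V^1, \ldots, V^n$ of $\boldsymbol{V}$ consistent with the directed part of $G$, and for each $V^i$ let $\mathrm{PA}_H(V^i)$ be any inclusion-minimal subset $S$ of $\{V^1,\ldots,V^{i-1}\}$ such that $V^i \perp (\{V^1,\ldots,V^{i-1}\} \setminus S) \mid S$ holds in $G$. By standard semigraphoid arguments this construction is guaranteed to produce $\mathcal{I}_H \supseteq \mathcal{I}_G$, i.e., every \(d\)-separation read off from $H$ is an observed \(d\)-separation of $G$.

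The hard part, and where the algebraicness hypothesis must be used nontrivially, is upgrading this inclusion to the equality $\mathcal{I}_H = \mathcal{I}_G$. In general, the observed \(d\)-separation structure of a latent-permitting DAG need not be faithfully reproduced by any latent-free DAG, and without equality we would obtain only $\mathcal{C}_G \subseteq \mathcal{C}_H$ rather than the required equality. I would expect the crucial lemma to assert that whenever the observed conditional independence structure of $G$ is \emph{unrealisable} by any latent-free DAG, $G$ necessarily imposes some equality constraint beyond ordinary conditional independence --- a nested Markov constraint --- which by the main result of Ref.~\cite{Evans_2022} in turn forces a genuine inequality on $\mathcal{C}_G$ and thus contradicts $\mathcal{C}_G = \mathcal{I}_G$. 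Turning this intuition into a rigorous argument, by showing that ``no nested Markov constraints beyond CI'' suffices to realise the CI structure via a latent-free DAG, is the expected crux and the locus of essentially all the technical difficulty.
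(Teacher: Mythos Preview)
The paper does not actually prove Theorem~\ref{equivalence_evans}; it is quoted as a result of Evans and attributed wholesale to Ref.~\cite{Evans_2022}. So there is no in-paper proof to compare your attempt against.

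Your forward direction is correct and is indeed the easy half: $\mathcal{C}_G=\mathcal{C}_H$, latent-freeness of $H$ giving $\mathcal{C}_H=\mathcal{I}_H$, and the implication $\mathcal{C}_G=\mathcal{C}_H\Rightarrow\mathcal{I}_G=\mathcal{I}_H$ combine exactly as you say.

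Your reverse direction, however, is an outline rather than a proof, and it has a circularity problem. You explicitly invoke ``the main result of Ref.~\cite{Evans_2022}'' (that nested Markov equalities force genuine inequalities) in order to establish a theorem that is itself drawn from Ref.~\cite{Evans_2022}; in Evans' paper these two facts are not independent pieces one can borrow separately, so as written the argument assumes what it sets out to prove. Even setting that aside, the implication you label as ``the expected crux'' --- that an observed \(d\)-separation pattern unrealisable by any latent-free DAG forces a nontrivial nested Markov constraint --- is asserted but not argued, and this is precisely the content of the theorem. The hints visible in Appendix~\ref{app_rapid_dsep} indicate that Evans' actual route is structural-graph-theoretic rather than the semigraphoid\slash I-map approach you sketch: he works through the associated partial ancestral graph and shows that a \NLF \(d\)-separation pattern forces the PAG to contain a locally unshielded collider path or a discriminating path of length~3, from which concrete nested constraints (and thence inequalities) are extracted. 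Your minimal-I-map opening is a reasonable first move, but the substantive graph-theoretic analysis that does the work is absent.
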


This result allows us to restate the \HLP conjecture in a different manner:

\begin{conjecture}[Reformulation of the HLP conjecture]
	Let $G$ be an mDAG. $G$ is classically observationally equivalent to a latent-free mDAG $H$ if and only if:
	\begin{enumerate}
		\item The set of observed \(d\)-separation relations of $H$ and $G$ is the same, i.e. $\mathcal{I}_H=\mathcal{I}_G$.
		\item Through the transformations defined in Theorem~\ref{hlp_theorem}, it is possible to go from $G$ to some latent-free mDAG $H'$ which has the same set of observed \(d\)-separation relations as $G$ and $H$, i.e. $\mathcal{I}_{H'}=\mathcal{I}_H=\mathcal{I}_G$.
	\end{enumerate}
\end{conjecture}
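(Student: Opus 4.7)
The plan is to prove the biconditional by treating each direction separately, implicitly using Evans' Theorem~\ref{equivalence_evans} to translate between \noninterestingness and classical observational equivalence to a latent-free mDAG. The forward direction will invoke the original HLP conjecture (Conjecture~\ref{hlp_conjecture1}), while the reverse direction will rely on the HLP criterion (Corollary~\ref{hlp_criterion}) together with Theorem~\ref{th_latent_free}. The point of the argument is to repackage Conjecture~\ref{hlp_conjecture1} in a form that cleanly separates a relational requirement on the target $H$ (condition~1) from an intrinsic requirement on $G$ alone (condition~2).

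For the forward direction, I would assume that $G$ is classically observationally equivalent to the latent-free mDAG $H$, so $\mathcal{C}_G = \mathcal{C}_H$. The contrapositive in Eq.~\eqref{eq_different_I} then gives $\mathcal{I}_G = \mathcal{I}_H$, which is exactly condition~1. Since $H$ is latent-free, Theorem~\ref{th_latent_free} yields $\mathcal{C}_H = \mathcal{I}_H$, so $\mathcal{C}_G = \mathcal{I}_G$ and hence $G$ is \noninteresting. Applying the original HLP conjecture to $G$ then produces a latent-free mDAG $H'$ reachable from $G$ via the transformations of Theorem~\ref{hlp_theorem} with $\mathcal{I}_{H'} = \mathcal{I}_G$; combined with condition~1 this gives $\mathcal{I}_{H'} = \mathcal{I}_G = \mathcal{I}_H$, which is condition~2.

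For the reverse direction, assume both conditions hold. Condition~2 together with the HLP criterion immediately forces $G$ to be \noninteresting, so $\mathcal{C}_G = \mathcal{I}_G$. Since $H$ is latent-free, Theorem~\ref{th_latent_free} gives $\mathcal{C}_H = \mathcal{I}_H$, and condition~1 supplies $\mathcal{I}_G = \mathcal{I}_H$, so chaining the equalities yields $\mathcal{C}_G = \mathcal{I}_G = \mathcal{I}_H = \mathcal{C}_H$, which is precisely classical observational equivalence of $G$ and $H$.

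The main obstacle is purely conceptual rather than technical: the forward direction essentially \emph{is} the HLP conjecture, so the proof does not establish the reformulation as a theorem but only verifies its logical equivalence to Conjecture~\ref{hlp_conjecture1}. Every other ingredient --- Eq.~\eqref{eq_different_I}, Theorem~\ref{th_latent_free}, and Corollary~\ref{hlp_criterion} --- is already in hand, so the genuine difficulty is entirely absorbed into the HLP conjecture itself.
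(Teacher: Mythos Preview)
The paper does not provide a proof of this statement: it is presented as a \emph{conjecture}, specifically as a reformulation of Conjecture~\ref{hlp_conjecture1} made possible by Evans' Theorem~\ref{equivalence_evans}. Your proposal correctly supplies the missing justification that this is indeed a faithful reformulation --- i.e., that the two conjectures are logically equivalent --- and your closing remark is exactly right: the forward direction absorbs the full strength of the original HLP conjecture, so nothing here is proven outright.
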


Therefore, proving the \HLP conjecture would also be of relevance to the problem of classifying causal structures into classical observational equivalence classes.

\section{Methods to determine \interestingness}
\label{methods_for_interestingness}

In this section we discuss the methods we used to prove the \interestingness of a large number of DAGs that do not respect the \HLP criterion.

\subsection{Using nonmaximality to prove \interestingness}
\label{sec_nonmaximal}

The first method to show \noninterestingness that we will present relies on the concept of \emph{maximality}~\cite{Richardson2002, Evans_2022}. To define this, we will first define what are \emph{adjacent} and \emph{\(d\)-separable} pairs of nodes.

\begin{definition}[\textbf{Adjacency}]
	Let $G$ be an mDAG, and let $A$ and $B$ be a pair of nodes observed nodes of $G$. We say that $A$ and $B$ are \emph{adjacent} in $G$ if $A$ is a parent of $B$, or $B$ is a parent of $A$, or $A$ and $B$ share a common latent parent.\footnote{For DAGs which are not exogenized (see Definition~\ref{exg}) we additionally say that $A$ and $B$ are adjacent if there is some \emph{latent treck} from on to the other. Which is to say, $A$ and $B$ are said to be \emph{adjacent} in a non-exogenized DAG $G$ iff they are adjacent in the mDAG resulting from exogenizing $G$ per see Definition~\ref{exg}.}
\end{definition}

\begin{definition}[\textbf{\(\boldsymbol{d}\)-(un)separable pair of observed nodes}]
	Let $G$ be a DAG with nodes ${\boldsymbol{A}=\boldsymbol{V}\cup \boldsymbol{L}}$, where $\boldsymbol{V}$ are observed nodes and $\boldsymbol{L}$ are latent nodes. A pair of observed nodes $\{A,B\}$ for $A\in \boldsymbol{V}$, $B\in\boldsymbol{V}$ is said to be \emph{\(d\)-separable} if there is some subset $\boldsymbol{Z}\subseteq \boldsymbol{V}$ of the remaining observed nodes such that $(A\perp B|\boldsymbol{Z})$, otherwise, $A$ and $B$ are said to be \(d\)-unseparable.
\end{definition}

These two definitions are related, in that they are criteria with relate to the compatibility of a particular distribution, namely, perfect correlation between one pair of nodes while all other nodes are point distributed. Consider the following distribution:
\begin{equation}
    P(A,B,C,D...)= \frac{1}{2}(\delta_{A,0}\delta_{B,0} + \delta_{A,1}\delta_{B,1})\delta_{C,0}\delta_{D,0}...
    \label{explicit_distribution_esep}
\end{equation}

Eq.~\eqref{explicit_distribution_esep} describes a probability distribution in which $A$ and $B$ are random but perfectly correlated, and every other observed node is point-distributed at the value $0$. Then,

\begin{proposition}[Adjacency $\Leftrightarrow P_{\text{\eqref{explicit_distribution_esep}}}\in \mathcal{C}_G$]\label{prop:pairwise_adjacency}
    The distribution in Eq.~\eqref{explicit_distribution_esep} is classically compatible with the graph $G$ if and only if $A$ and $B$ are adjacent nodes in $G$.
\end{proposition}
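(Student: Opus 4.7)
The equivalence splits into a sufficiency (adjacency $\Rightarrow$ compatibility) and a necessity (compatibility $\Rightarrow$ adjacency) direction. Sufficiency is a quick construction. Adjacency of $A$ and $B$ arises from one of three structures: $A\to B$ in $G$, $B\to A$ in $G$, or a shared latent parent $\Lambda$ with $\Lambda\to A$ and $\Lambda\to B$. For the shared-parent case I would take $\Lambda$ uniform on $\{0,1\}$, have both $A$ and $B$ deterministically copy $\Lambda$ (the conditionals $P(A\mid\textrm{PA}(A))$ and $P(B\mid\textrm{PA}(B))$ being chosen to depend only on $\Lambda$), and set every other node, observed or latent, deterministically to $0$. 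The two directed-edge cases are analogous, copying along the edge instead. Each construction is manifestly Markov with respect to $G$ and marginalizes to Eq.~\eqref{explicit_distribution_esep}.

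The nontrivial direction is necessity, which I would argue by contraposition: assume $A$ and $B$ are non-adjacent, and deduce $A\dbot B$ in any putative $P\in\mathcal{C}_G$, contradicting the perfect correlation of Eq.~\eqref{explicit_distribution_esep}. Write $V':=\boldsymbol{V}\setminus\{A,B\}$, let $L_A$ and $L_B$ denote the sets of latent parents of $A$ and $B$, and let $\ell$ range over joint assignments to all latents. Non-adjacency delivers two crucial facts: no direct edge between $A$ and $B$, so $B\notin\textrm{PA}(A)$ and $A\notin\textrm{PA}(B)$; and no shared latent parent, so $L_A\cap L_B=\emptyset$. Because every observed parent of $A$ then lies in $V'$, evaluating the Markov factorization of $P$ at $V'=0$ collapses $P(a\mid\textrm{PA}(A))$ to a function $f_A(a,\ell_A)$ depending only on $a$ and the values $\ell_A$ of $L_A$; analogously $P(b\mid\textrm{PA}(B))$ collapses to $f_B(b,\ell_B)$.

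The crux is a pointwise comparison. Setting $h(a,b,\ell):=\prod_{X\in V'}P(X{=}0\mid\textrm{PA}(X))$ evaluated at $A=a,B=b,V'=0$, one has
\begin{equation}
P(a,b,V'{=}0)=\sum_{\ell}P(\ell)\,f_A(a,\ell_A)\,f_B(b,\ell_B)\,h(a,b,\ell)\;\le\;\tilde{P}(a,b),
\end{equation}
where $\tilde{P}(a,b):=\sum_{\ell}P(\ell)f_A(a,\ell_A)f_B(b,\ell_B)$ and the inequality uses $h\le 1$. Latents are root nodes in an mDAG, so $P(\ell)=\prod_\lambda P(\lambda)$, and combined with $L_A\cap L_B=\emptyset$ this makes $\tilde{P}(a,b)=\tilde{P}(a)\tilde{P}(b)$ with marginals obtained by summing out $\ell_A$ or $\ell_B$ alone. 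A direct check shows both $\tilde{P}(a,b)$ and $P(a,b,V'{=}0)$ are probability distributions on $(a,b)$ summing to $1$, so the pointwise inequality must be saturated at every $(a,b)$. Hence $P(a,b)=\tilde{P}(a)\tilde{P}(b)$, establishing $A\dbot B$ and contradicting Eq.~\eqref{explicit_distribution_esep}.

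The main obstacle is precisely the factor $h$: nodes in $V'$ may be children of $A$ and/or $B$, and may also be common descendants of distinct latents in $L_A$ and $L_B$ acting as colliders, so $h$ genuinely depends on $(a,b)$ and couples $\ell_A$ with $\ell_B$. A naive attempt to split the sum over $\ell$ into $A$-side and $B$-side factors therefore fails. The pointwise-bound-with-matched-normalization trick sidesteps this entirely: by upgrading the pointwise inequality to an equality via matched totals, it effectively removes $h$ from the picture without requiring any case analysis of the collider structure on $V'$.
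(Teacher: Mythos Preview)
Your argument is correct. Both directions go through: the sufficiency construction is fine, and the necessity argument via the pointwise bound $P(a,b,V'{=}0)\le \tilde P(a,b)$ combined with matching normalization is valid. The key checks all pass: $\sum_a f_A(a,\ell_A)=1$ since $f_A$ is a conditional probability of $A$ given fixed parent values; $\sum_{a,b}P(a,b,V'{=}0)=1$ because the target distribution has $V'$ point-distributed; and $\tilde P$ factorizes because latents are roots in an mDAG and $L_A\cap L_B=\emptyset$. The squeeze then forces $P(a,b)=\tilde P(a)\tilde P(b)$, contradicting perfect correlation.

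Your route differs from the paper's. The paper argues the necessity direction by citation: non-adjacency of $A$ and $B$ means $A$ and $B$ are \(e\)-separated upon deletion of $\boldsymbol{V}\setminus\{A,B\}$, and then one invokes an entropic inequality from Ref.~\cite{finkelstein2021entropic} (together with Refs.~\cite{Evans2015,Evans2012}) to conclude incompatibility. Conceptually this is the same mechanism you exploit---point-distributing $V'$ effectively passes to the subgraph $G_{\text{del}\,V'}$, where $A$ and $B$ are \(d\)-separated by the empty set---but the paper outsources the analytic step. What your approach buys is self-containment: you prove the needed special case of the \(e\)-separation theorem from scratch, and your normalization trick handles the collider coupling in $h$ without any structural case analysis or appeal to entropy. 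What the paper's approach buys is brevity and a direct link to the general \(e\)-separation machinery used elsewhere in the article. One small remark: you tacitly assume $G$ is an mDAG (latents parentless and independent); this is harmless since adjacency for general DAGs is defined via the mDAG reduction and $\mathcal{C}_G$ is invariant under that reduction, but it is worth stating.
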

\begin{proof}See Refs.~\cite{Evans2015, Evans2012}. Clearly $P_{\text{\eqref{explicit_distribution_esep}}}\in \mathcal{C}_G$ when $A$ and $B$ \emph{are} adjacent. Whenever $A$ and $B$ are \emph{not} adjacent, then evidently $A$ and $B$ are \(e\)-separated\footnote{The concept of \(e\)-separation will be elaborated upon in Section~\ref{subsec:esep} here.} upon removal of $\boldsymbol{V}\setminus \{A,B\}$, which also means that $P_{\text{\eqref{explicit_distribution_esep}}}$ would violate the entropic inequality in Theorem 5 of Ref.~\cite{finkelstein2021entropic}.
\end{proof}
\begin{proposition}[\(d\)-unseparability $\Leftrightarrow P_{\text{\eqref{explicit_distribution_esep}}}\in \mathcal{I}_G$]\label{prop:pairwise_d_unseparability}
    The distribution in Eq.~\eqref{explicit_distribution_esep} satisfies all the conditional independence constraints that follow from the observed \(d\)-separation relations relations of graph $G$ if and only if $A$ and $B$ are \(d\)-unseparable in $G$.
\end{proposition}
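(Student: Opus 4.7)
The plan is to analyze directly which conditional independences hold for the explicit distribution $P_{\text{\eqref{explicit_distribution_esep}}}$, and then use the elementary monotonicity of \(d\)-separation to re-express the conclusion as a statement about the pair $\{A,B\}$ alone. Throughout, note that every observed variable other than $A$ and $B$ is point-distributed at $0$, and the marginal of $P$ on $\{A,B\}$ is the uniform distribution on $\{(0,0),(1,1)\}$; these two facts are the entirety of the distributional input needed.

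First I would carry out a brief case analysis over disjoint subsets $\boldsymbol{X},\boldsymbol{Y},\boldsymbol{Z}\subseteq\boldsymbol{V}$ determining when the CI $\boldsymbol{X}\dbot\boldsymbol{Y}|\boldsymbol{Z}$ fails under $P$. If at most one of $A,B$ appears in $\boldsymbol{X}\cup\boldsymbol{Y}$, or both appear on the same side, then one of $\boldsymbol{X}$ or $\boldsymbol{Y}$ consists entirely of point-distributed variables and the CI holds trivially. If one of $A,B$ lies in $\boldsymbol{Z}$, then conditioning on $\boldsymbol{Z}$ determines the other of $A,B$, and any CI involving them together with point-distributed variables is again trivially satisfied. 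The only remaining case is $A\in\boldsymbol{X}$ and $B\in\boldsymbol{Y}$ (or vice versa) with $\{A,B\}\cap\boldsymbol{Z}=\emptyset$, in which case $P(A,B\mid\boldsymbol{Z}) = \tfrac{1}{2}(\delta_{A,0}\delta_{B,0}+\delta_{A,1}\delta_{B,1})$ while $P(A\mid\boldsymbol{Z})\,P(B\mid\boldsymbol{Z}) = \tfrac{1}{4}$ on each of the four pairs, so the CI genuinely fails.

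Second, I would combine this with monotonicity of \(d\)-separation: if $\boldsymbol{X}\perp\boldsymbol{Y}|\boldsymbol{Z}$ holds in $G$, then so does $\boldsymbol{X}'\perp\boldsymbol{Y}'|\boldsymbol{Z}$ for any nonempty $\boldsymbol{X}'\subseteq\boldsymbol{X}$ and $\boldsymbol{Y}'\subseteq\boldsymbol{Y}$, since any undirected path between $\boldsymbol{X}'$ and $\boldsymbol{Y}'$ is already a path between $\boldsymbol{X}$ and $\boldsymbol{Y}$ and is therefore blocked. Specializing to singletons, the case analysis implies that $P$ violates some CI coming from an observed \(d\)-separation of $G$ if and only if there exists $\boldsymbol{Z}\subseteq\boldsymbol{V}\setminus\{A,B\}$ with $\{A\}\perp\{B\}|\boldsymbol{Z}$ in $G$. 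By definition, this is precisely the statement that $A$ and $B$ are \(d\)-separable. Taking the contrapositive yields $P_{\text{\eqref{explicit_distribution_esep}}}\in\mathcal{I}_G$ if and only if $A$ and $B$ are \(d\)-unseparable.

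The main obstacle is just making the case analysis exhaustive and well organized: the subcase in which one of $A,B$ lies in the conditioning set $\boldsymbol{Z}$ rests on a slightly different rationale (conditioning on one of $A,B$ renders the other deterministic) than the subcases in which one side contains only point masses, and the two must be treated separately to avoid gaps. Once the case analysis is complete, monotonicity of \(d\)-separation dispatches the rest.
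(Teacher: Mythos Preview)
Your proposal is correct and follows essentially the same approach as the paper's proof: identify that the only conditional independences violated by $P_{\text{\eqref{explicit_distribution_esep}}}$ are those of the form $A\dbot B|\boldsymbol{Z}$, and then observe that such a relation is implied by the observed \(d\)-separation relations of $G$ precisely when $A$ and $B$ are \(d\)-separable. The paper simply asserts these two facts in two sentences, whereas you spell out the full case analysis and make explicit the monotonicity step reducing setwise \(d\)-separations $\boldsymbol{X}\perp\boldsymbol{Y}|\boldsymbol{Z}$ with $A\in\boldsymbol{X}$, $B\in\boldsymbol{Y}$ to the singleton relation $\{A\}\perp\{B\}|\boldsymbol{Z}$; your version is therefore the same argument, just more carefully unpacked.
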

\begin{proof}
    The only conditional independence relations that the distribution in Eq.~\eqref{explicit_distribution_esep} \emph{fails} to satisfy are those of the form $A\dbot B|\boldsymbol{Z}$. Such conditional independence relations follow from the observed \(d\)-separation relations relations of graph $G$ if and only if $A$ and $B$ are \(d\)-separable in $G$.
\end{proof}

Putting Propositions~\ref{prop:pairwise_adjacency} and~\ref{prop:pairwise_d_unseparability} together, we find that $P_{\text{\eqref{explicit_distribution_esep}}}\in \mathcal{I}_G$ but $P_{\text{\eqref{explicit_distribution_esep}}}\not\in \mathcal{C}_G$ whenever a graph has $A$ and $B$ nonadjacent but also \(d\)-unseparable. This leads us to the concept of \emph{maximality}~\cite{Richardson2002, Evans_2022}:

\begin{definition}[\textbf{Maximal DAG}]
	\label{def_maximal}
	Let $G$ be a DAG. If all of the pairs of nodes of $G$ which are \emph{not} \(d\)-separable are also adjacent then $G$ is said to be \emph{maximal}, otherwise $G$ is said to be \emph{nonmaximal}.
\end{definition}

Figure~\ref{fig_maximal} shows an example of a maximal DAG (\ref{fig_maximal}a) and an example of a nonmaximal DAG (\ref{fig_maximal}b).

\begin{flushleft}
	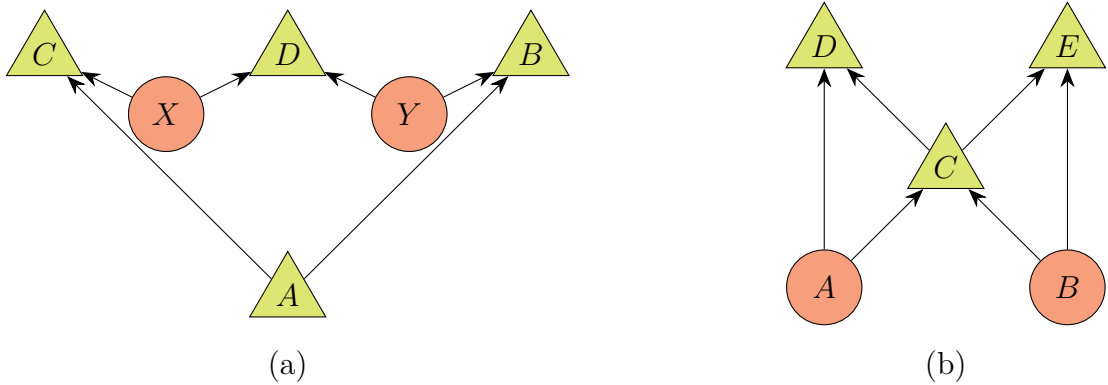
\begin{figure}[h!]
		\centering
		\begin{tabular}{ccc}
			\makecell{\begin{tikzpicture}[scale=0.8]
					\node[c](F) at (0,){$A$};
					\node[q](A) at (-2,4){$X$};
					\node[q](B) at (2,4){$Y$};
					\node[c](C) at (0,5){$D$}
					edge[e](A)
					edge[e](B);
					\node[c](D) at (-4,5){$C$}
					edge[e](A)
					edge[e](F);
					\node[c](E) at (4,5){$B$}
					edge[e](B)
					edge[e](F);
				\end{tikzpicture}}
			    & \hspace{2cm} &
			\makecell{\begin{tikzpicture}[scale=0.8]
					\node[q](A) at (0,0){$A$};
					\node[q](B) at (4,0){$B$};
					\node[c](C) at (2,2){$C$}
					edge[e] (A)
					edge[e] (B);
					\node[c](E) at (4,4){$E$}
					edge[e] (B)
					edge[e] (C);
					\node[c](D) at (0,4){$D$}
					edge[e] (A)
					edge[e] (C);
				\end{tikzpicture}}
			\\
			(a) & \hfill       & (b)
		\end{tabular}
		\caption[.]{(a) An example of maximal DAG, which so happens to be \interesting. (b) An example of an nonmaximal DAG, namely, The Unrelated Confounders scenario introduced in Ref.~\cite{Evans2012}. It is not maximal because nodes $D$ and $E$ are not \(d\)-separable, but are nevertheless not adjacent. Like \emph{all} nonmaximal DAGs, the Unrelated Confounders scenario in \interesting.}
		\label{fig_maximal}
	\end{figure}
\end{flushleft}

\settheoremtag{Nonmaximal}
\begin{theorem}\label{thm:elie_esep}
	Every \noninteresting DAG is maximal, that is, every nonmaximal DAG is \interesting.
\end{theorem}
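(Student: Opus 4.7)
The plan is to prove the contrapositive: if $G$ is nonmaximal, then $G$ is \interesting. The entire argument has effectively been pre-staged by Propositions~\ref{prop:pairwise_adjacency} and~\ref{prop:pairwise_d_unseparability}, so the proof reduces to combining them with the definition of nonmaximality and the definition of \interestingness.

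First I would unpack the definition of nonmaximality: if $G$ is nonmaximal, then by Definition~\ref{def_maximal} there exists a pair of observed nodes $A, B$ in $G$ which are simultaneously (i) \emph{not adjacent}, and (ii) \emph{\(d\)-unseparable}, meaning no subset $\boldsymbol{Z}$ of the remaining observed nodes achieves $A \perp B | \boldsymbol{Z}$. Next I would point to the specific witness distribution already furnished in Eq.~\eqref{explicit_distribution_esep}, instantiated for this particular pair $A, B$, with all other observed nodes point-distributed at $0$.

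Then I would invoke the two propositions in sequence. Because $A$ and $B$ are not adjacent, Proposition~\ref{prop:pairwise_adjacency} yields $P_{\text{\eqref{explicit_distribution_esep}}} \notin \mathcal{C}_G$. Because $A$ and $B$ are \(d\)-unseparable, Proposition~\ref{prop:pairwise_d_unseparability} yields $P_{\text{\eqref{explicit_distribution_esep}}} \in \mathcal{I}_G$. Together these give $\mathcal{C}_G \subsetneq \mathcal{I}_G$, which is precisely the definition of $G$ being \interesting. Taking the contrapositive recovers the stated form of the theorem.

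Since both non-trivial ingredients have been established as preceding propositions, there is no real obstacle here; the theorem is essentially a one-line corollary. The only thing worth being careful about is making sure the witness distribution is well-defined regardless of which specific pair $\{A,B\}$ we select from the nonmaximal graph, i.e.\ that the construction in Eq.~\eqref{explicit_distribution_esep} is genuinely available for every candidate pair and does not rely implicitly on any auxiliary feature of $G$. This is immediate from the explicit product form of the distribution, so no further subtleties arise.
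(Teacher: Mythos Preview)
Your proof is correct and matches the paper's primary argument exactly: combine Propositions~\ref{prop:pairwise_adjacency} and~\ref{prop:pairwise_d_unseparability} to exhibit the distribution of Eq.~\eqref{explicit_distribution_esep} as a witness lying in $\mathcal{I}_G \setminus \mathcal{C}_G$. The paper additionally sketches an alternative route via Theorem~\ref{equivalence_evans} (a nonmaximal DAG cannot be observationally equivalent to any latent-free DAG, since latent-free DAGs are always maximal and observational equivalence preserves both \(d\)-separability and adjacency structure), but this is presented as a second proof rather than the main one.
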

\begin{proof}
	As we have seen from Propositions~\ref{prop:pairwise_adjacency} and~\ref{prop:pairwise_d_unseparability}, if a graph $G$ is nonmaximal, then there is some pair of observed nodes such that the distribution given by those-two-nodes-are-random-and-perfectly-correlated-while-all-other-observed-nodes-are-point-distributed lies in some \emph{gap} between $\mathcal{I}_G$ and $\mathcal{C}_G$.
	Alternatively, note that nonmaximality implies \interestingness also follows from the fact that all latent-free graphs are maximal~\cite[Prop. 3.19]{Richardson2002}. We then simply note that a nonmaximal graph cannot be observationally equivalent to any maximal graph: It follows from Eq.~\ref{eq_different_I} that every pair of observationally equivalent DAGs must agree on their sets of \(d\)-(un)separable observed-node pairs. Furthermore, pursuant to Lemma~\ref{lem:skeleton_equivalence} (discussed in Appendix~\ref{app:skeleton}), agreement with respect to adjacency structure is \emph{also} a prerequisite for observational equivalence~\cite{Evans2015, Evans2012}. These facts imply that if a DAG $G$ is nonmaximal, then it is \emph{not} going to be observationally equivalent to any latent-free DAG (and will thus be \interesting by Theorem~\ref{equivalence_evans}).
\end{proof}

Figure~\ref{fig8} shows an example of a 4-observed-nodes mDAG whose \interestingness may be shown by Theorem~\ref{thm:elie_esep}.

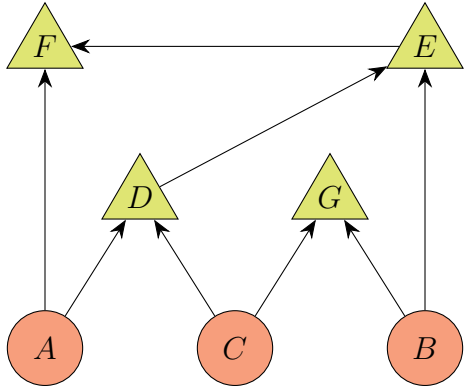
\begin{figure}
	\centering
	\begin{tikzpicture}
		\node[q](A) at (0,0){$A$};
		\node[q](B) at (5,0){$B$};
		\node[q](C) at (2.5,0){$C$};
		\node[c](D) at (1.25,2){$D$}
		edge[e] (A)
		edge[e] (C);
		\node[c](G) at (3.75,2){$G$}
		edge[e] (B)
		edge[e] (C);
		\node[c](E) at (5,4){$E$}
		edge[e] (B)
		edge[e] (D);
		\node[c](F) at (0,4){$F$}
		edge[e] (A)
		edge[e] (E);
	\end{tikzpicture}
	\caption{A DAG with 4 observed nodes that is shown to be \interesting per Theorem~\ref{thm:elie_esep}. This can be seen because $G$ and $F$ not \(d\)-separable, i.e. none of the \(d\)-separation relations $(G\perp F|E)$, $(G\perp F|D)$ or $(G\perp F|D,E)$ hold, but they are not adjacent.}
	\label{fig8}
\end{figure}

To find out which sets of pairs of observed nodes are \(d\)-separable in $G$ and then check whether it is maximal, is it necessary to first obtain \emph{all} the \(d\)-separation relations of $G$? The following accessory lemma shows that this is not the case, thus simplifying the application of Theorem~\ref{thm:elie_esep} in practice.

\begin{lemma}[Rapid test for \(d\)-separability of pairs]
	\label{lemma_dseparability}
	A pair of node subsets $\boldsymbol{A}$ and $\boldsymbol{B}$ of a DAG $G$ are \(d\)-separable by \emph{some} set of observed nodes if and only if they are \(d\)-separable by a \emph{particular} set of observed nodes, namely, the set of all and only those observed nodes which are ancestors of either $\boldsymbol{A}$ or of $\boldsymbol{B}$. In other words, ${\boldsymbol{A} \perp \boldsymbol{B} \mid \operatorname{ANC_G}(\boldsymbol{A})\cup\operatorname{ANC_G}(\boldsymbol{B})}$ whenever there exists some set $\boldsymbol{Z}$ such that ${\boldsymbol{A} \perp \boldsymbol{B} | \boldsymbol{Z}}$.
\end{lemma}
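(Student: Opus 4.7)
The ``if'' direction is immediate because $\boldsymbol{Z}^* := (\text{ANC}_G(\boldsymbol{A}) \cup \text{ANC}_G(\boldsymbol{B})) \cap \boldsymbol{V} \setminus (\boldsymbol{A} \cup \boldsymbol{B})$ is itself an admissible choice of ``some'' observed set, so I would focus on the nontrivial direction: assuming $\boldsymbol{A} \perp \boldsymbol{B} \mid \boldsymbol{Z}$ holds for at least one $\boldsymbol{Z} \subseteq \boldsymbol{V}\setminus(\boldsymbol{A}\cup\boldsymbol{B})$, the goal is to deduce $\boldsymbol{A} \perp \boldsymbol{B} \mid \boldsymbol{Z}^*$. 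I would argue by contrapositive: suppose instead that $\boldsymbol{Z}^*$ fails to $d$-separate, witnessed by a shortest active trail $\pi$ from some $a \in \boldsymbol{A}$ to some $b \in \boldsymbol{B}$ given $\boldsymbol{Z}^*$; by minimality $\pi$ is simple and has no interior vertex in $\boldsymbol{A} \cup \boldsymbol{B}$.

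The first key step invokes the standard fact that every vertex on an active trail with conditioning set $\boldsymbol{C}$ lies in $\text{ANC}_G(\{\text{endpoints}\}\cup\boldsymbol{C})$: walking outward from any interior vertex of $\pi$ in the descendant direction, one either reaches an endpoint (so that vertex is an ancestor of an endpoint) or reaches a collider, which by the active-trail condition must itself lie in $\text{ANC}_G(\boldsymbol{C})$. Since $\boldsymbol{Z}^* \subseteq \text{ANC}_G(\boldsymbol{A}\cup\boldsymbol{B})$, every vertex of $\pi$ thereby lies in $\text{ANC}_G(\boldsymbol{A}\cup\boldsymbol{B})$. Combining this with the active-trail requirement that interior non-colliders avoid the conditioning set $\boldsymbol{Z}^*$, I would conclude that every interior non-collider of $\pi$ must be latent: any \emph{observed} ancestor of $\boldsymbol{A}\cup\boldsymbol{B}$ outside $\boldsymbol{A}\cup\boldsymbol{B}$ sits in $\boldsymbol{Z}^*$ by construction. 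Consequently, no interior non-collider of $\pi$ can ever be blocked by the observed conditioning set $\boldsymbol{Z}\subseteq\boldsymbol{V}$.

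The final step invokes the classical equivalence (Lauritzen--Dawid--Larsen--Leimer) between $d$-separation in a DAG and ordinary vertex-separation in the moralized ancestral subgraph: $\boldsymbol{A}\perp\boldsymbol{B}\mid\boldsymbol{C}$ in $G$ iff $\boldsymbol{A}$ and $\boldsymbol{B}$ are separated by $\boldsymbol{C}$ in the undirected moralization of $G$ restricted to $\text{ANC}_G(\boldsymbol{A}\cup\boldsymbol{B}\cup\boldsymbol{C})$. Applied with $\boldsymbol{C}=\boldsymbol{Z}^*$, the trail $\pi$ corresponds to an undirected path $P$ from $\boldsymbol{A}$ to $\boldsymbol{B}$ in the moralization of $G[\text{ANC}_G(\boldsymbol{A}\cup\boldsymbol{B})]$ avoiding $\boldsymbol{Z}^*$, whose interior is entirely latent (non-colliders of $\pi$ are latent by the preceding step, while colliders of $\pi$ collapse into moral edges directly between their $\pi$-neighbours). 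Enlarging the induced subgraph to $\text{ANC}_G(\boldsymbol{A}\cup\boldsymbol{B}\cup\boldsymbol{Z})$ can only add moral edges, so $P$ survives as a path in the larger moral graph, and its latent interior is automatically disjoint from the observed set $\boldsymbol{Z}$. This yields $\boldsymbol{A}\not\perp\boldsymbol{B}\mid\boldsymbol{Z}$, contradicting the hypothesis and completing the proof.

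The main obstacle, and the reason I would reach for the moralization shortcut, is that a purely direct path-rewiring argument is surprisingly delicate: at each collider $c$ of $\pi$ that happens to be blocked by $\boldsymbol{Z}$, one is tempted to splice in a directed descendant-path from $c$ down to some $v\in\boldsymbol{A}\cup\boldsymbol{B}$ (which exists, avoids $\boldsymbol{Z}$, and restores activeness locally), but $v$ may land in $\boldsymbol{A}$ exactly when one needs $\boldsymbol{B}$, or vice versa. Combining the leftmost and rightmost such detours does not cleanly resolve the case where distinct blocked colliders are ancestors of disjoint endpoint sets, and naive iterations risk re-creating blocked colliders inside the middle of the rewired trail. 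The moral-graph reformulation sidesteps this combinatorial case analysis entirely by letting latent colliders act as transparent ``shortcut'' vertices joining their parents.
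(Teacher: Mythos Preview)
Your argument is correct. The paper, by contrast, does not give any argument at all: its entire proof is a pointer to Theorem~6 and Algorithm~3 of Tian and Pearl (1998), which establish that the observed ancestral set of $\boldsymbol{A}\cup\boldsymbol{B}$ is a valid $d$-separator whenever any separator exists. So you have supplied a genuinely self-contained proof where the paper defers to the literature.

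The route you take is also somewhat different in spirit from Tian--Pearl's. Their treatment works by iteratively pruning an arbitrary separator down to one contained in the ancestral set (and then shows the full ancestral set also works), whereas you bypass separator manipulation entirely by passing to the Lauritzen--Dawid--Larsen--Leimer moralization criterion. The payoff of your approach is that the crux becomes a clean structural observation---interior non-colliders of the offending trail are forced to be latent---after which moralization turns the trail into an all-latent undirected path that no observed $\boldsymbol{Z}$ can ever cut. Your closing paragraph correctly identifies why the ``obvious'' path-rewiring alternative is messy: splicing descendant paths at blocked colliders can land in the wrong endpoint set, and the moral-graph detour is exactly the right tool to avoid that case analysis. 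This is a nice, reusable argument and arguably more informative than the bare citation the paper provides.
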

\begin{proof}
	This follows from Theorem 6 of Ref.~\cite{tian1998finding}. In particular, it follows from the use of Ref.~\cite{tian1998finding}'s Algorithm 3 to solve Ref.~\cite{tian1998finding}'s Problem 4.
\end{proof}

In their Appendix E, \HLP effectively utilized Shannon-type inequalities to certify the \interestingness of many DAGs with 5 or 6 total nodes. In particular, they found that the Shannon-type inequalities for those DAGs could by violated by distributions where two particular variables are perfectly correlated and all other observable variables are point distributed, identical in nature to the construction of Eq~\eqref{explicit_distribution_esep}. For each of these examples, \HLP highlighted the two nodes corresponding to the perfectly correlated variables in yellow. We observe here that the node pairs highlighted by \HLP are precisely pairs of nodes which are not \(d\)-separable but also not adjacent. Accordingly, our Theorem \ref{thm:elie_esep} similarly certifies the \interestingness of all those examples.

\subsection{Using setwise nonmaximality to prove \interestingness}

In the previous subsection we found that that membership in $\mathcal{I}_G$ or $\mathcal{C}_G$ of the distribution in which \emph{pair} of observed nodes is perfectly correlated while all other observed nodes are point distributed is directly related to the concepts of \(d\)-unseparability and adjacency, respectively. Here we formulate analogous criteria in order to assess perfect correlation of three-or-more variables when all other variables in the distribution are point-distributed.

\begin{definition}[\textbf{Setwise adjacency}]
    Let $G$ be a DAG with nodes ${\boldsymbol{A}=\boldsymbol{V}\cup \boldsymbol{L}}$, where $\boldsymbol{V}$ are observed nodes and $\boldsymbol{L}$ are latent nodes. Then, the subset of observed nodes $\{V_1...V_k\}$ is setwise adjacent in $G$ if and only if there is some node $X$ (possibly but not necessarily within  $\{V_1...V_k\}$) such that $X$ is an ancestor of \emph{every} node in $\{V_1...V_k\}$ not only in the DAG $G$ but also in the \emph{subgraph} of $G$ formed by deleting all nodes ${\boldsymbol{V}\setminus\{V_1...V_k\}}$ from $G$.\footnote{Note that we are using the convention that a node counts as its own ancestor, a la Refs.~\cite{Richardson1997, Steudel2015, vanderZander2019}.}
\end{definition}
\begin{definition}[\textbf{Setwise \(d\)-unrestriction}]
    Let $\boldsymbol{V}$ be the set of all observed nodes in some DAG $G$, and let $\boldsymbol{S}$ be some subset of $\boldsymbol{V}$. Then, the nodes $\boldsymbol{S}$ are  setwise \(d\)-unrestricted in $G$ if and only if there does \emph{not} exist any pair of nodes $\{S_i, S_j\}\subset \boldsymbol{S}$ along with some (possibly empty) set of observed nodes $\boldsymbol{Z}\subset \boldsymbol{V}\setminus\boldsymbol{S}$ such that $(S_i\perp S_j|\boldsymbol{Z})$.
\end{definition}

We next show that these definitions for setwise adjacency and setwise \(d\)-unrestriction have the desired properties. Consider the following distribution:
\begin{equation}
    P(\boldsymbol{V})= \frac{1}{2}(\delta_{V_1,0}\delta_{V_2,0}...\delta_{V_k,0} + \delta_{V_1,1}\delta_{V_2,1}...\delta_{V_k,1})\delta_{V_{k+1},0}\delta_{V_{k+2},0}...
    \label{setwise_correlation}
\end{equation}
Eq.~\eqref{setwise_correlation} describes a probability distribution in which the first $k$ observed variables are random but perfectly correlated while all other observed variables are point-distributed at the value $0$. Then,
\begin{proposition}[Setwise Adjacency $\Leftrightarrow P_{\text{\eqref{setwise_correlation}}}\in \mathcal{C}_G$]\label{prop:setwise_adjacency}
    The distribution in Eq.~\eqref{setwise_correlation} is classically compatible with graph $G$ if and only if $\{V_1...V_k\}$ are setwise adjacent in $G$.
\end{proposition}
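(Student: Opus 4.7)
Assuming $\{V_1,\ldots,V_k\}$ is setwise adjacent via some common ancestor $X$ in the subgraph $G':=G\setminus(\boldsymbol{V}\setminus\{V_1,\ldots,V_k\})$, I will exhibit an explicit Markov model reproducing $P_{\text{\eqref{setwise_correlation}}}$. For each $V_i$ fix a directed path $\pi_i$ from $X$ to $V_i$ lying entirely in $G'$; such paths exist by definition of setwise adjacency, and crucially none passes through any observed node in $\boldsymbol{V}\setminus\{V_1,\ldots,V_k\}$. Let $X$ be uniformly distributed over $\{0,1\}$; declare every node on any $\pi_i$ (other than $X$) to deterministically copy the value of its parent along the path; force every observed node in $\boldsymbol{V}\setminus\{V_1,\ldots,V_k\}$ to the constant $0$; and assign any remaining latent the constant value $0$. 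The assignment is consistent (whenever a node lies on several paths, the relevant parents all carry the bit $X$) and the induced marginal over $\boldsymbol{V}$ is exactly $P_{\text{\eqref{setwise_correlation}}}$.

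\textbf{Reverse direction.} Assuming $P_{\text{\eqref{setwise_correlation}}}\in\mathcal{C}_G$, I pass to the canonical form of any compatible hidden variable model: latents mutually independent and uniform over finite alphabets, observed nodes deterministic functions of their parents. Because every observed node in $\boldsymbol{V}\setminus\{V_1,\ldots,V_k\}$ is point-distributed at $0$ while canonical latents have full support, such a node's response function must output $0$ on every reachable parent configuration; hence I may replace it by the identically-zero function without altering the joint distribution over $(\boldsymbol{V},\boldsymbol{L})$. A topological induction then shows that, after this replacement, each $V_i$ is a deterministic function of only those latents lying in $\textrm{ANC}_{G'}(V_i)\cap\boldsymbol{L}$.

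Let $S:=V_1=\cdots=V_k$ denote the shared signal bit, which has a full bit of entropy. Since $S=V_i$ is simultaneously a function of the latents in $\textrm{ANC}_{G'}(V_i)\cap\boldsymbol{L}$ for every $i$, and since these latents are mutually independent, $S$ must in fact depend only on the intersection $\bigcap_i\bigl(\textrm{ANC}_{G'}(V_i)\cap\boldsymbol{L}\bigr)$: for independent coordinates, a function that depends only on $\boldsymbol{L}_1$ and also only on $\boldsymbol{L}_2$ necessarily depends only on $\boldsymbol{L}_1\cap\boldsymbol{L}_2$. An empty intersection would make $S$ constant, contradicting its unit entropy; therefore some latent $L^{\ast}$ lies in $\textrm{ANC}_{G'}(V_i)$ for every $i$, witnessing setwise adjacency.

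The main technical hurdle is the canonical-form reduction paired with the constant-zero modification of response functions: one must verify that zeroing out each ``extraneous'' observed node's function genuinely preserves the joint $(\boldsymbol{V},\boldsymbol{L})$ distribution, which rests on full support of the canonical latents forcing the original response function to have already been zero on every reachable input. Once that reduction is in hand, the remainder of the argument collapses to the elementary observation about intersections of dependency sets for deterministic functions of independent variables.
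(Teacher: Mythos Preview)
Your proof is correct, but it takes a genuinely different route from the paper. The paper decomposes the proposition into two cited lemmas: first, Evans' \(e\)-separation theorem is invoked to show that compatibility of a distribution in which \(\boldsymbol{V}\setminus\{V_1,\ldots,V_k\}\) are point-distributed is equivalent to compatibility of the marginal on \(\{V_1,\ldots,V_k\}\) with the subgraph \(G'\); second, Theorem~2 of Steudel--Ay is invoked to show that perfect correlation among \(\{V_1,\ldots,V_k\}\) is compatible with \(G'\) if and only if those nodes share a common ancestor in \(G'\). By contrast, you reprove both ingredients inline: your constant-zero replacement argument together with the topological induction is exactly the subgraph-reduction content of Evans' result specialized to this situation, and your intersection-of-dependency-sets argument for independent latents is precisely the core of Steudel--Ay's proof. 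The paper's version is shorter and modular; your version is self-contained and constructive (the forward direction gives an explicit structural model, and the reverse direction pinpoints an explicit common ancestor \(L^\ast\)). One small point worth making explicit in your write-up: your ``canonical form'' with mutually independent latents tacitly presumes exogenized latents (or equivalently, augmentation by independent noise terms); this is without loss of generality and preserves setwise adjacency, but deserves a sentence since \(G\) is not assumed to be an mDAG in the statement.
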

 Proposition~\ref{prop:setwise_adjacency} follows from the two accessory lemmas below.
\begin{lemma}[Partial point distribution $\implies$ subgraph compatibility]\label{lem:point_distribution_subgraph}
    Suppose $P(\boldsymbol{V})$ is some distribution wherein the variables ${\boldsymbol{V}\setminus\{V_1...V_k\}}$ are point-distributed and moreover all the variables in $\{V_1...V_k\}$ have finite cardinality. 
    Then, $P(\boldsymbol{V})$ is classically compatible with $G$ if and only if $P(\{V_1...V_k\})$ is compatible with the \emph{subgraph} of $G$ formed be deleting all nodes ${\boldsymbol{V}\setminus\{V_1...V_k\}}$ from $G$. 
\end{lemma}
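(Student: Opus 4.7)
The plan is to prove both implications separately. Let $\boldsymbol{W}:=\boldsymbol{V}\setminus\{V_1,\ldots,V_k\}$ denote the point-distributed observed variables, let $w^*$ be the tuple satisfying $P(\boldsymbol{W}=w^*)=1$, and let $G'$ be the subgraph of $G$ obtained by deleting every node of $\boldsymbol{W}$ together with all incident edges. Note that $G'$ retains every latent in $\boldsymbol{L}$ and that $\text{PA}_{G'}(X)=\text{PA}_G(X)\setminus\boldsymbol{W}$ for every $X\in\{V_1,\ldots,V_k\}\cup\boldsymbol{L}$.

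The backward direction is a direct lifting. Given a $G'$-Markov certificate $\tilde{P}(\{V_1,\ldots,V_k\},\boldsymbol{L})=\prod_X \tilde{P}(X\mid\text{PA}_{G'}(X))$ with the correct observed marginal, I would define $P(\boldsymbol{V},\boldsymbol{L})$ by reusing those same conditionals for every $X\in\{V_1,\ldots,V_k\}\cup\boldsymbol{L}$ (regarded as $G$-conditionals that simply ignore their $\boldsymbol{W}$-parents) and by declaring $P(W\mid\text{PA}_G(W)):=\delta_{W,w_W^*}$ for each $W\in\boldsymbol{W}$. This is a valid $G$-Markov factorisation whose marginal over $\boldsymbol{L}$ is $\tilde{P}(\{V_1,\ldots,V_k\})\cdot\prod_W\delta_{W,w_W^*}=P(\boldsymbol{V})$, as required.

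The forward direction carries the substantive content. Given a $G$-Markov extension $P(\boldsymbol{V},\boldsymbol{L})=\prod_X P(X\mid\text{PA}_G(X))$ of $P(\boldsymbol{V})$, the fact $P(\boldsymbol{W}=w^*)=1$ yields
\[
P(\{V_1,\ldots,V_k\},\boldsymbol{L})=P(\{V_1,\ldots,V_k\},\boldsymbol{L},\boldsymbol{W}=w^*)=R\cdot F,
\]
where $R:=\prod_{X\in\{V_1,\ldots,V_k\}\cup\boldsymbol{L}} P(X\mid\text{PA}_G(X))|_{\boldsymbol{W}\text{-parents}=w^*}$ is a manifestly $G'$-Markov joint on $(\{V_1,\ldots,V_k\},\boldsymbol{L})$, and $F:=\prod_{W\in\boldsymbol{W}} P(W=w_W^*\mid\text{PA}_G(W))|_{\boldsymbol{W}=w^*}$ gathers the leftover point-mass factors. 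The task thus reduces to showing $R\cdot F=R$ pointwise; marginalising over $\boldsymbol{L}$ afterwards immediately certifies $G'$-compatibility of $P(\{V_1,\ldots,V_k\})$.

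I would close out this last step via pointwise domination. Each factor of $F$ is a conditional probability, so $F\le 1$ everywhere and hence $R\cdot F\le R$ pointwise; yet $R\cdot F$ is a bona fide probability distribution (it \emph{is} the marginal of $P$) and so is $R$ (a product of conditionals along a topological order of $G'$), whence both sum to $1$, forcing the pointwise equality $R\cdot F=R$. The main obstacle is precisely this manoeuvre: one cannot simply declare $F\equiv 1$, because $P(W=w_W^*\mid\text{PA}_G(W))$ is forced to equal $1$ only on the support of its parents' marginal and is merely upper-bounded by $1$ elsewhere; the sum-to-unity domination trick bypasses any explicit support bookkeeping. The finite-cardinality hypothesis on the $V_i$'s is invoked only to keep the sums appearing above unambiguously well-defined.
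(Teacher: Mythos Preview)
Your proof is correct, and it is genuinely different from the paper's. The paper does not actually give a proof: it simply remarks that the lemma is ``an immediate consequence of the \(e\)-separation theorem central in Ref.~[Evans 2012]'' and moves on. You, by contrast, supply a fully self-contained argument. The backward direction is the standard lifting (and matches what any reader would reconstruct); the forward direction is where your contribution lies. Rather than invoking Evans' \(e\)-separation machinery, you factor the joint at $\boldsymbol{W}=w^*$ into a $G'$-Markov piece $R$ and a residual $F\le 1$, and then use the clean pointwise-domination-plus-normalisation trick ($R\cdot F\le R$, both integrate to $1$, hence equality) to absorb $F$ without ever tracking which parent configurations lie in the support. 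This is more elementary than appealing to the external reference and makes the role of the finite-cardinality hypothesis transparent. The paper's route is shorter on the page but outsources the work; yours is longer but requires nothing beyond the Markov factorisation itself.
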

\begin{proof}
    This lemma is an immediate consequence of the \(e\)-separation theorem central in Ref.~\cite{Evans2012}.
\end{proof}

\begin{lemma}[setwise correlation $\implies$ common ancestor]
    Let $P_{\textrm{perfect correlation}}(\boldsymbol{A})$ be the distribution in which all variables in $\boldsymbol{A}$ are random but perfectly correlated with each other. %
    Then, $P_{\textrm{perfect correlation}}(\boldsymbol{A})$ is compatible with a DAG $G$ if and only if all the nodes in $\boldsymbol{A}$ share some common ancestor in $G$.
\end{lemma}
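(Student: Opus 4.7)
The plan is to prove the two directions separately: sufficiency by an explicit construction, and necessity by a structural-equation argument. For the ``if'' direction, suppose a node $X$ is a common ancestor of every element of $\boldsymbol{A}$. I would take $X$ to be uniform on $\{0,1\}$ and then define every other node deterministically from its parents as follows. If $V\neq X$ is a descendant of $X$, set $V$ equal to the value of any fixed parent of $V$ which is itself a descendant of $X$ (at least one such parent exists along every directed path from $X$ to $V$). If $V$ is not a descendant of $X$, point-distribute $V$ at $0$. A topological induction then shows that every descendant of $X$ --- in particular every $A_i$ --- carries the value of $X$, so the marginal on $\boldsymbol{A}$ is exactly $P_{\textrm{perfect correlation}}$ (uniform on the two constant configurations).

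For the ``only if'' direction, I would invoke the standard structural-equation representation: every joint distribution $P(\boldsymbol{A},\boldsymbol{L})$ that is Markov with respect to $G$ admits a deterministic functional model $V = g_V(\text{PA}_G(V), U_V)$ for each node, with the exogenous noise variables $\{U_V\}$ mutually independent. Unrolling this recursion, each $A_i$ becomes a deterministic function $f_i$ of the noise variables $\{U_W : W \in \text{ANC}_G(A_i)\}$. For any pair $A_i, A_j$, I would group these noises into three mutually independent tuples $U_c, U_i', U_j'$, indexed respectively by $\text{ANC}_G(A_i) \cap \text{ANC}_G(A_j)$, $\text{ANC}_G(A_i) \setminus \text{ANC}_G(A_j)$, and $\text{ANC}_G(A_j) \setminus \text{ANC}_G(A_i)$. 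The perfect-correlation identity $f_i(U_i',U_c) = f_j(U_j',U_c)$ holding almost surely, combined with the independence of $U_i'$ from $U_j'$, forces both $f_i(\cdot,u_c)$ and $f_j(\cdot,u_c)$ to be almost surely constant in their first arguments for almost every fixed $u_c$ --- two independent random variables which are almost surely equal must both be almost surely constant. Iterating this over all $j$ leaves $A_i$ almost surely a function of only the noises indexed by $\bigcap_j \text{ANC}_G(A_j)$; if that intersection were empty, $A_i$ would be almost surely constant, contradicting the hypothesis that $A_i$ is random.

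The main obstacle will be the measure-theoretic bookkeeping in the necessity step: specifically, combining the pairwise ``insensitivity'' statements (for each $j$, $A_i$ does not depend on $U_W$ for $W \in \text{ANC}_G(A_i) \setminus \text{ANC}_G(A_j)$) into a single graph-level conclusion uniformly across all $j$. This is cleanest via the symmetric insensitivity formulation rather than by trying to glue together different explicit functional representations of $A_i$; once that subtlety is dispatched, the remaining argument reduces to the elementary fact that a genuinely random variable cannot be almost surely constant.
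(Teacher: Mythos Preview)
Your proof is correct. The paper's own proof is a two-line affair: it calls the ``if'' direction trivial and defers the ``only if'' direction entirely to Theorem~2 of Steudel and Ay (2015), an information-theoretic result on common ancestors. So you are genuinely providing more than the paper does.

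The routes also differ in substance. Steudel and Ay work entropically: they bound the multi-information of a set of variables in terms of contributions from common ancestors, and perfect correlation saturates their bound in a way that forces a nonempty ancestral intersection. Your argument instead passes through the structural-equation representation and reduces the question to the elementary probabilistic fact that two independent random variables which are almost surely equal must each be almost surely constant. Your approach is more self-contained and arguably more transparent for this specific lemma; the entropic approach has the advantage of yielding quantitative statements for \emph{imperfect} correlation as well, which is why the cited reference proves something stronger than what is needed here. The measure-theoretic bookkeeping you flag --- iteratively shrinking the noise set that $A_i$ depends on --- is handled correctly by the sequential reduction you describe: once $A_i$ is almost surely a function of the noises on $\text{ANC}_G(A_i)\cap\text{ANC}_G(A_j)$, you can rerun the same two-independent-variables argument against $A_k$ with that reduced representation, since the relevant noise blocks remain mutually independent.
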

\begin{proof}
    The \enquote{if} direction is trivial; the \enquote{only if} direction follows from Ref.~\cite[Theorem~2]{Steudel2015}.
\end{proof}

Note that Proposition~\ref{prop:setwise_adjacency} implies Proposition~\ref{prop:pairwise_adjacency} as a special case: A pair of observed nodes share a common ancestor upon removing all other observed nodes from a DAG $G$ if and and only they are adjacent in $G$.

We likewise highlight the utility of the definition of a setwise \(d\)-unrestricted set:
\begin{proposition}[Setwise \(d\)-unrestriction $\Leftrightarrow P_{\text{\eqref{setwise_correlation}}}\in \mathcal{I}_G$]\label{prop:setwise_d_unseparability}
    The distribution in Eq.~\eqref{setwise_correlation} satisfies all the conditional independence constraints that follow from the observed \(d\)-separation relations relations of graph $G$ if and only if $\{V_1...V_k\}$ are setwise \(d\)-unrestricted in $G$.
\end{proposition}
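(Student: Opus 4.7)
The plan is to emulate the proof of Proposition~\ref{prop:pairwise_d_unseparability}, now with the full set $\boldsymbol{S}:=\{V_1,\ldots,V_k\}$ of perfectly correlated variables replacing the single pair $\{A,B\}$. The first step would be to catalog exactly which CI relations $\boldsymbol{X}\dbot\boldsymbol{Y}\mid\boldsymbol{Z}$ the distribution $P$ of Eq.~\eqref{setwise_correlation} satisfies, by case-splitting on how $\boldsymbol{X}$, $\boldsymbol{Y}$, and $\boldsymbol{Z}$ intersect $\boldsymbol{S}$. I expect the catalog to read: $P$ \emph{fails} the CI $\boldsymbol{X}\dbot\boldsymbol{Y}\mid\boldsymbol{Z}$ if and only if $\boldsymbol{Z}\cap\boldsymbol{S}=\emptyset$ while \emph{both} $\boldsymbol{X}\cap\boldsymbol{S}$ and $\boldsymbol{Y}\cap\boldsymbol{S}$ are nonempty. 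The reason is that conditioning on any single element of $\boldsymbol{S}$ deterministically fixes every other element of $\boldsymbol{S}$ (erasing all correlations within $\boldsymbol{S}$), while variables in $\boldsymbol{V}\setminus\boldsymbol{S}$ are point-distributed so conditioning on them is trivial; thus $P$ can only exhibit nontrivial dependence across $\boldsymbol{X}$ and $\boldsymbol{Y}$ given $\boldsymbol{Z}$ when each of them still contains some random element of $\boldsymbol{S}$.

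Given that catalog, the \emph{only if} direction follows by contrapositive: if $\boldsymbol{S}$ is not setwise $d$-unrestricted, then by definition there exist $S_i,S_j\in\boldsymbol{S}$ and $\boldsymbol{Z}\subseteq\boldsymbol{V}\setminus\boldsymbol{S}$ with $(S_i\perp S_j\mid\boldsymbol{Z})$ in $G$; Theorem~\ref{th_dsep_latent_permitting} then forces $S_i\dbot S_j\mid\boldsymbol{Z}$ upon every element of $\mathcal{I}_G$, but the catalog shows $P$ fails precisely this CI. For the \emph{if} direction, I would assume $\boldsymbol{S}$ is setwise $d$-unrestricted and consider an arbitrary observed $d$-separation $(\boldsymbol{X}\perp\boldsymbol{Y}\mid\boldsymbol{Z})$ of $G$. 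If $\boldsymbol{Z}\cap\boldsymbol{S}\neq\emptyset$, or one of $\boldsymbol{X}\cap\boldsymbol{S}$, $\boldsymbol{Y}\cap\boldsymbol{S}$ is empty, the catalog delivers the required CI for free. Otherwise, choose $V_i\in\boldsymbol{X}\cap\boldsymbol{S}$ and $V_j\in\boldsymbol{Y}\cap\boldsymbol{S}$; the standard decomposition property of $d$-separation yields $(V_i\perp V_j\mid\boldsymbol{Z})$ with $\boldsymbol{Z}\subseteq\boldsymbol{V}\setminus\boldsymbol{S}$, directly contradicting setwise $d$-unrestriction. That case therefore cannot arise, and we conclude $P\in\mathcal{I}_G$.

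The main obstacle is the bookkeeping inside the catalog step. One has to remember that the defining equation of CI, $P(\boldsymbol{X},\boldsymbol{Y}\mid\boldsymbol{Z}=\boldsymbol{z})=P(\boldsymbol{X}\mid\boldsymbol{Z}=\boldsymbol{z})\,P(\boldsymbol{Y}\mid\boldsymbol{Z}=\boldsymbol{z})$ is only tested at values $\boldsymbol{z}$ with $P(\boldsymbol{Z}=\boldsymbol{z})>0$, and that when $\boldsymbol{Z}$ intersects $\boldsymbol{S}$ the only positive-probability values of $\boldsymbol{Z}$ collapse \emph{all} of $\boldsymbol{S}$ (not merely the coordinates being conditioned on) to a single value, rendering the conditional distribution a point mass and so trivializing the CI. Once those observations are in hand, the rest reduces to the standard decomposition property of $d$-separation, which is entirely routine.
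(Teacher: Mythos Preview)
Your proposal is correct and follows essentially the same approach as the paper's own proof. Both directions match: the contrapositive for the \emph{only if} direction is identical, and for the \emph{if} direction the paper likewise reduces to the decomposition property of \(d\)-separation (phrased as ``if $(S_i\not\perp S_j\mid\boldsymbol{Z})$ then $(\boldsymbol{S}_i\not\perp\boldsymbol{S}_j\mid\boldsymbol{Z})$ for supersets''), together with the observation that the only CI relations $P$ could fail are those with $\boldsymbol{Z}$ disjoint from $\boldsymbol{S}$ and both $\boldsymbol{X},\boldsymbol{Y}$ meeting $\boldsymbol{S}$; your catalog step makes this last observation more explicit than the paper does, which simply asserts it in passing.
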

\begin{proof} Suppose that $\{V_1...V_k\}$ are \emph{not} setwise \(d\)-unrestricted in $G$. Then, there exists a pair of nodes $\{S_i,S_j\}\subseteq\{V_1...V_k\}$ such that $G$ exhibits the \(d\)-separation relation $(S_i\perp S_j|\boldsymbol{Z})$. In this case, the distribution
    \begin{align}\label{eq:setwise_correlation_S}
        P(\boldsymbol{V})= \frac{1}{2}\left(\delta_{\{V_1...V_k\},\vec{0}^{k}} + \delta_{\{V_1...V_k\},\vec{1}^{k}}\right)\delta_{\boldsymbol{V}\setminus\{V_1...V_k\},\vec{0}^{(|\boldsymbol{V}|-k)}}\footnotemark
    \end{align}
    \footnotetext{Here $|\mathbf{X}|$ denotes the number of elements of the set $\mathbf{X}$, and $\vec 0^{n}$ represents the vector with $n$ entries equalling $0$.}
    violates the conditional independence relation $S_i\dbot S_j|\boldsymbol{Z}$, and therefore $P_{\text{\eqref{setwise_correlation}}}\notin \mathcal{I}_G$. For the other direction, note that if $(S_i\not\perp S_j|\boldsymbol{Z})$ then so too $(\boldsymbol{S}_i\not\perp \boldsymbol{S}_j|\boldsymbol{Z})$ for any disjoint sets $\boldsymbol{S}_i$ and $\boldsymbol{S}_j$ wherein ${S_i \in \boldsymbol{S}_i}$ and ${S_j \in \boldsymbol{S}_j}$. Consequently, when $\{V_1...V_k\}$ are setwise \(d\)-unrestricted in $G$, there is no way to \(d\)-separate \emph{any} subset of $\{V_1...V_k\}$ from any other subset of $\{V_1...V_k\}$ by any subset of the observed nodes outside of $\{V_1...V_k\}$, which are the \emph{only} \(d\)-separation relations whose corresponding conditional independence relations would exclude the distribution of Eq.~\eqref{eq:setwise_correlation_S}. This means that, in this case, $P_{\text{\eqref{setwise_correlation}}}\in \mathcal{I}_G$.
\end{proof}

Putting Propositions~\ref{prop:setwise_adjacency} and~\ref{prop:setwise_d_unseparability} together lead us to a natural generalization of maximality, which we now define and employ in a theorem.
\begin{definition}[\textbf{Setwise Maximal DAG}]
    \label{def_setwise_maximal}
    Let $G$ be DAG. If every subset of the observed nodes of $G$ which is setwise \(d\)-unrestricted is also setwise adjacent then $G$ is said to be \emph{setwise-maximal}, otherwise $G$ is said to be \emph{setwise nonmaximal}.
\end{definition}
\settheoremtag{Setwise Nonmaximal}
\begin{theorem}\label{thm_setwise_nonmaximal}
    Every \noninteresting DAG is setwise-maximal, that is, every setwise-nonmaximal DAG is \interesting.
\end{theorem}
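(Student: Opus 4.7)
The plan is to prove the contrapositive directly by producing an explicit witness distribution that lies in $\mathcal{I}_G \setminus \mathcal{C}_G$ whenever $G$ is setwise-nonmaximal. This mirrors the strategy behind Theorem~\ref{thm:elie_esep}, which used the pairwise-correlation point-distribution of Eq.~\eqref{explicit_distribution_esep}; here I would instead invoke the natural generalization in Eq.~\eqref{setwise_correlation}, in which $k$ observed variables are random but perfectly correlated and all remaining observed variables are point-distributed at $0$.

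Concretely, suppose $G$ is setwise-nonmaximal. By Definition~\ref{def_setwise_maximal}, there exists a subset $\{V_1,\ldots,V_k\}$ of observed nodes of $G$ which is setwise \(d\)-unrestricted but fails to be setwise adjacent. I would consider the distribution $P_{\text{\eqref{setwise_correlation}}}$ associated to this specific subset. Proposition~\ref{prop:setwise_d_unseparability} then immediately yields $P_{\text{\eqref{setwise_correlation}}} \in \mathcal{I}_G$ (from the setwise \(d\)-unrestriction), while Proposition~\ref{prop:setwise_adjacency} yields $P_{\text{\eqref{setwise_correlation}}} \notin \mathcal{C}_G$ (from the failure of setwise adjacency). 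Hence $\mathcal{C}_G \subsetneq \mathcal{I}_G$, meaning $G$ is \interesting, which is exactly the contrapositive of the theorem.

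There is essentially no obstacle here, since all of the graph-theoretic content has already been absorbed into Propositions~\ref{prop:setwise_adjacency} and~\ref{prop:setwise_d_unseparability}; the theorem is in effect their corollary. As a consistency check, this proof specializes back to Theorem~\ref{thm:elie_esep} when $k=2$, because pairwise nonadjacency together with pairwise \(d\)-unseparability is precisely the $k=2$ instance of failure-of-setwise-adjacency together with setwise \(d\)-unrestriction.
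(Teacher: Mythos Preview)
Your proposal is correct and matches the paper's own proof essentially verbatim: the paper states that the theorem follows immediately from Propositions~\ref{prop:setwise_adjacency} and~\ref{prop:setwise_d_unseparability}, which is precisely the argument you give via the witness distribution $P_{\text{\eqref{setwise_correlation}}}$.
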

 \begin{proof}
      Theorem~\ref{thm_setwise_nonmaximal} follows immediately from Propositions~\ref{prop:setwise_adjacency} and~\ref{prop:setwise_d_unseparability}.
 \end{proof}

\begin{flushleft}
	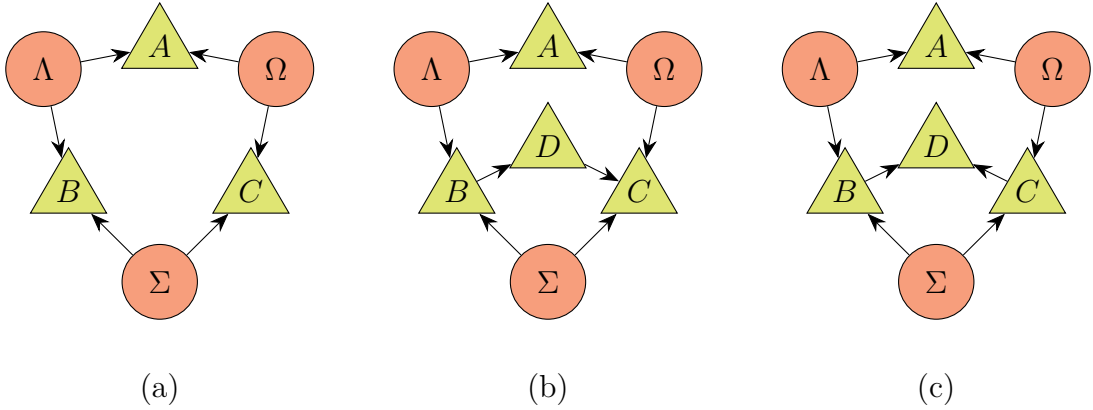
\begin{figure}[h!]
		\centering
		\begin{tabular}{ccccc}
			\begin{tikzpicture}[scale=0.6]
				\node[q](L) at (-0.55,2.7){$\Lambda$};
				\node[q](Z) at (2,-2){$\Sigma$};
				\node[q](O) at (4.55,2.7){$\Omega$};
				\node[c](A) at (0,0){$B$}
				edge[e] (L)
				edge[e] (Z);
				\node[c](C) at (4,0){$C$}
				edge[e] (O)
				edge[e] (Z);
				\node[c](D) at (2,3.2){$A$}
				edge[e] (O)
				edge[e] (L);
			\end{tikzpicture}
			& \hspace{0.2cm} &
   			\begin{tikzpicture}[scale=0.6]
				\node[q](L) at (-0.55,2.7)
				{$\Lambda$};
				\node[q](Z) at (2,-2){$\Sigma$};
				\node[q](O) at (4.55,2.7){$\Omega$};
				\node[c](A) at (0,0){$B$}
				edge[e] (L)
				edge[e] (Z);
				\node[c](D) at (2,1){$D$}
				edge[e](A);
    			\node[c](C) at (4,0){$C$}
				edge[e] (O)
				edge[e] (Z)
    			edge[e] (D);
				\node[c](X) at (2,3.2){$A$}
				edge[e] (O)
				edge[e] (L);
			\end{tikzpicture}
			& \hspace{0.2cm} &
			\begin{tikzpicture}[scale=0.6]
				\node[q](L) at (-0.55,2.7)
				{$\Lambda$};
				\node[q](Z) at (2,-2){$\Sigma$};
				\node[q](O) at (4.55,2.7){$\Omega$};
				\node[c](A) at (0,0){$B$}
				edge[e] (L)
				edge[e] (Z);
				\node[c](C) at (4,0){$C$}
				edge[e] (O)
				edge[e] (Z);
				\node[c](D) at (2,1){$D$}
                    edge[e](C)
				edge[e](A);
				\node[c](X) at (2,3.2){$A$}
				edge[e] (O)
				edge[e] (L);
			\end{tikzpicture}
			\\ { } &{ } & { }&{ } & { } \\
			(a) & { } & (b) & { } & (c)
		\end{tabular}
		\caption{Examples of maximal DAGs which are setwise-nonmaximal. (a) is the Triangle scenario, where the set $\{A,B,C\}$ (all visible nodes) is setwise \(d\)-unrestricted but not setwise adjacent. In (b), the set $\{A,B,C\}$ is setwise \(d\)-unrestricted (in fact, there are no \(d\)-separation relations between observed nodes) but not setwise adjacent; by contrast, the larger set $\{A,B,C,D\}$ is \emph{both} setwise \(d\)-unrestricted \emph{and} setwise adjacent, despite (b) exhibiting the \(d\)-separation relation ${(A\perp D|B)}$, as that \(d\)-separation relation involves conditioning on a node \emph{within the set}. In (c), both of the sets $\{A,B,C\}$ and $\{A,B,C,D\}$ are \(d\)-unrestricted but not setwise adjacent. Note that $\{A,B,C,D\}$ is \(d\)-unrestricted in the DAG (c) despite that DAG exhibiting the \(d\)-separation relation ${(A\perp D|B,C)}$, as that \(d\)-separation relation involves conditioning on nodes \emph{within the set}.}
		\label{fig_setwise_nonmaximal}
	\end{figure}
\end{flushleft}

The DAGs of Fig.~\ref{fig_setwise_nonmaximal} are setwise-nonmaximal, so they are shown \interesting by Theorem~\ref{thm_setwise_nonmaximal} (even if they are \emph{not} shown \interesting by Theorem~\ref{thm:elie_esep}). On the other hand, the mDAG of Fig.~\ref{fig_maximal}(a) is both maximal and setwise maximal, as all maximal DAGs are also setwise maximal.

Note that there are precisely five mDAGs with 3 observed nodes which are \interesting (the Triangle scenario, the Unrelated Confounders scenario and three observationally equivalent versions of the Instrumental scenario). Theorem~\ref{thm_setwise_nonmaximal} certifies the \interestingness of \emph{all five}. Indeed, Theorem~\ref{thm_setwise_nonmaximal} turns out to be an extremely powerful filter for recognizing \interesting mDAGs with 4 or 5 observed nodes as well, as discussed in Section~\ref{sec:results}.

\subsection{Using \(d\)-separation to prove \interestingness}
\label{subsec:dsep}

With Eq.~\eqref{eq_different_I}, it was noted that any two DAGs that have different sets of observed \(d\)-separation relations are \emph{not} classically observationally equivalent. In other words, imposing the same observed conditional independence constraints on the compatible distributions is a necessary condition for two DAGs to be classically observationally equivalent.

This fact can be used to establish the \interestingness of some DAGs: if we can prove that the set of observed \(d\)-separation relations of a latent-permitting DAG \emph{does not} match the set of \(d\)-separation relations of \emph{any} latent-free DAG, then our latent-permitting DAG is classically observationally \emph{inequivalent} to all latent-free DAGs. Via Evans'~\cite{Evans_2022} Theorem~\ref{equivalence_evans}, then, we can conclude that our latent-permitting DAG is \interesting.

This type of reasoning, that says that when a certain property of a DAG $G$ is unmatched by all latent-free DAGs then $G$ is \interesting, will be used a few times in this subsection and in the two next ones. As such, it will be useful to define the auxiliary term \emph{Not Achievable in Latent-Free (\NLF)}:

\begin{definition}[\NLF property of a DAG]
	\label{def_NLF}
	Let $G$ be a latent-permitting DAG. If $G$ has a certain property that \emph{does not match} that same property of any latent-free DAG, we say that this property of $G$ is \emph{\NLF (not achievable in latent-free)}.
\end{definition}

For example, we can have a DAG $G$ whose set of observed \(d\)-separation relations is \NLF. If there is a proof that a DAG is observationally inequivalent to all latent-free DAGs whenever certain property of the DAG is \NLF, then this can be used to show \interestingness. As discussed, this is the case for \(d\)-separation.

\settheoremtag{\NLF \(d\)-sep}
\begin{theorem}\label{thm:dsep_for_interesting}
	Let $G$ be a DAG. Suppose that the set of observed \(d\)-separation relations of $G$ is \NLF (as per Definition~\ref{def_NLF}). Then, $G$ is \interesting.
\end{theorem}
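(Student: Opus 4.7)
The plan is to prove this by contrapositive, leveraging Evans' characterization (Theorem~\ref{equivalence_evans}) together with the already-established implication $\mathcal{C} = \mathcal{C}' \implies \mathcal{I} = \mathcal{I}'$ of Eq.~\eqref{eq_different_I}. So I would begin by assuming that $G$ is \noninteresting and then derive that the set of observed \(d\)-separation relations of $G$ must coincide with that of \emph{some} latent-free DAG, contradicting the \NLF hypothesis.

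More concretely, first I would invoke Theorem~\ref{equivalence_evans} to conclude that, since $G$ is \noninteresting, there exists a latent-free DAG $H$ that is classically observationally equivalent to $G$, i.e.\ $\mathcal{C}_G = \mathcal{C}_H$. Next I would apply Eq.~\eqref{eq_different_I} (or rather its direct statement $\mathcal{C}_G=\mathcal{C}_H \Rightarrow \mathcal{I}_G=\mathcal{I}_H$) to deduce $\mathcal{I}_G = \mathcal{I}_H$. Finally, I would argue that $\mathcal{I}_G = \mathcal{I}_H$ forces $G$ and $H$ to exhibit exactly the same set of observed \(d\)-separation relations; this is the key step where Theorem~\ref{th_dsep_latent_permitting} (part 2) plays its role, because any observed \(d\)-separation relation present in one DAG but absent in the other would be witnessed by a distribution lying in one of $\mathcal{I}_G, \mathcal{I}_H$ but not the other. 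Since $H$ is latent-free, this contradicts the assumption that the set of observed \(d\)-separation relations of $G$ is \NLF.

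I expect no serious obstacle here: all the heavy lifting has already been done in the setup (particularly by Evans' theorem, which converts the semantic notion of \noninterestingness into the existence of an observationally equivalent latent-free companion, and by the d-separation completeness theorem for $\mathcal{I}_G$). The only subtlety worth being careful about is the direction of the implication at the $\mathcal{I}_G = \mathcal{I}_H$ step: I must invoke the fact that the \emph{set} of conditional independence relations satisfied by every distribution in $\mathcal{I}_G$ is precisely the semigraphoid closure of the observed \(d\)-separation relations of $G$, so agreement of $\mathcal{I}$'s implies agreement of the d-separation relations themselves (modulo that closure, which doesn't change the set of implied conditional independences). Once that is spelled out, the proof is essentially a two-line chain of implications ending in the desired contradiction.
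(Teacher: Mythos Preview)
Your proposal is correct and follows essentially the same route as the paper: combine Eq.~\eqref{eq_different_I} with Evans' Theorem~\ref{equivalence_evans}, differing only in that you phrase it as a contrapositive while the paper argues directly. The detour through semigraphoid closure you flag as a subtlety is unnecessary, since Theorem~\ref{th_dsep_latent_permitting}(2) already yields directly that $\mathcal{I}_G=\mathcal{I}_H$ forces the observed \(d\)-separation relations to coincide.
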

\begin{proof}
	From  Eq.~\eqref{eq_different_I}, $G$ is not classically observationally equivalent to any latent-free DAG. Thus, via Theorem~\ref{equivalence_evans},  $G$ is \interesting.
\end{proof}

The mDAG  presented in Figure~\ref{fig_maximal}(a), which was not shown \interesting by Theorem~\ref{thm_setwise_nonmaximal} due to being setwise maximal, \emph{can} be shown \interesting by Theorem~\ref{thm:dsep_for_interesting}: its set of observed \(d\)-separation relations,  $A\perp D|\emptyset$ and $B\perp C|A$, is not matched by any latent-free DAG. This mDAG can be considered a special case of the bilocality scenario~\cite{bilocal} restricted to have the same setting employed at both the extreme wings. Remarkably, it can support non-classical correlations even though the \enquote{setting} for the extreme wings is the same (in stark contradiction with the Bell scenario).

On the other hand, note that the Evans scenario (Figure~\ref{fig_maximal}(b)), which was shown \interesting by Theorem~\ref{thm:elie_esep}, cannot be shown \interesting via Theorem~\ref{thm:dsep_for_interesting}: it  does not have any \(d\)-separation relation, just like the saturated latent-free DAGs. Therefore, Theorems~\ref{thm:elie_esep} and ~\ref{thm:dsep_for_interesting} are not redundant to each other. Similarly,  Theorems~\ref{thm_setwise_nonmaximal} and ~\ref{thm:dsep_for_interesting} are not redundant to each other.

There are a variety of different practical methods to ascertain whether or not Theorem~\ref{thm:dsep_for_interesting} is satisfied for a given latent-permitting DAG $G$. Naively, one could construct \emph{all} the latent-free DAGs with the same number of observed nodes as $G$, and one-by-one check whether any of them have observed \(d\)-separation relations matching those of $G$. Alternatively, one could envision employing a constraint-based causal discovery algorithm where the input to the algorithm is precisely the observed \(d\)-separation relations of $G$: If the output of the causal discovery algorithm fails to include \emph{any} latent-free DAG as a viable explanation given the input constraints, then evidently $G$ is \interesting per Theorem~\ref{thm:dsep_for_interesting}. While such \enquote{brute force} approaches are viable for a small number of observed nodes, as the number of observed nodes increases one would need to contend with potential combinatorial explosion. As it turns out, however, when the DAG is maximal (and thus not shown \interesting by Theorem~\ref{thm:elie_esep}) there is an efficient way to check whether or not its set of observed \(d\)-separation relations is \NLF; the efficient algorithm is presented in Appendix~\ref{app_rapid_dsep}.

As well as our previous methods,Theorem~\ref{thm:dsep_for_interesting} is only a \emph{sufficient} condition for \interestingness, and not necessary. If there \emph{exists} a latent-free DAG $H$ which has the same observed \(d\)-separation relations as $G$ (i.e. $\mathcal{I}_H=\mathcal{I}_G$), this \emph{does not} imply that $\mathcal{C}_G = \mathcal{C}_H$, and as such we cannot conclude anything about the relation between $\mathcal{C}_G$ and $\mathcal{I}_G$. We will now discuss other sufficient conditions for \interestingness that can be used when Theorems~\ref{thm:elie_esep} and~\ref{thm:dsep_for_interesting} fail.

\subsection{Using \textit{e}-separation to prove \interestingness}\label{subsec:esep}

As well as the mismatch of  \(d\)-separation relations is a witness of classical observational inequivalence, we can show that the mismatch of \(e\)-separation relations, a concept that will be defined below, also witnesses classical observational inequivalence. As such, we can mimic the same logic used in the last section, thus showing that \(e\)-separation relations can be used to attest \interestingness. An \(e\)-separation relation is defined as:

\begin{definition}[\textbf{\(\boldsymbol{e}\)-separation}]
	\label{esep}
	Let $G$ be a DAG, and let $\boldsymbol{X}$, $\boldsymbol{Y}$, $\boldsymbol{Z}$ and $\boldsymbol{W}$ be four disjoint sets of nodes of $G$. Let $G_{\text{del}_{\boldsymbol{W}}}$ be the DAG obtained by starting from $G$ and deleting the nodes of $\boldsymbol{W}$.
	The sets $\boldsymbol{X}$ and $\boldsymbol{Y}$ are said to be \emph{\(e\)-separated by $\boldsymbol{Z}$ upon deletion of $\boldsymbol{W}$} in $G$, denoted  $(\boldsymbol{X}\eperp \boldsymbol{Y}|\boldsymbol{Z})_{del_{\boldsymbol{W}}}$, if $\boldsymbol{X}\perp \boldsymbol{Y}|\boldsymbol{Z}$ holds in $G_{\text{del}_{\boldsymbol{W}}}$.
\end{definition}

Note that if $\boldsymbol{W}=\emptyset$, the concept of \(e\)-separation reduces to \(d\)-separation. As well as for the case of  \(d\)-separation, we will say that an  \(e\)-separation relation is an \emph{observed  \(e\)-separation relation} when the sets $\boldsymbol{X}$, $\boldsymbol{Y}$, $\boldsymbol{Z}$ and $\boldsymbol{W}$ only involve observed nodes. Matching observed \(e\)-separation relations is a prerequisite for observational equivalence:

\begin{lemma}[\(e\)-separation condition for observational equivalence]\label{lem:esepforequiv}
	Let $G$ and $H$ be two DAGs. If they are classically observationally equivalent (i.e. $\mathcal{C}_G=\mathcal{C}_H$), then their sets of observed \(e\)-separation relations must be identical.

\end{lemma}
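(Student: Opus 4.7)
The plan is to prove the contrapositive: if there is an observed $e$-separation relation present in exactly one of $G$ and $H$, then the two DAGs cannot be classically observationally equivalent. The key technical tool is Lemma~\ref{lem:point_distribution_subgraph}, which links compatibility with a DAG to compatibility with its subgraph obtained by deleting certain nodes, under the proviso that the deleted nodes are point-distributed in the extension.

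Assume without loss of generality that $G$ has the observed $e$-separation relation $(\boldsymbol{X}\eperp\boldsymbol{Y}|\boldsymbol{Z})_{del_{\boldsymbol{W}}}$ while $H$ does not. By Definition~\ref{esep}, this means that in the subgraph $G_{del_{\boldsymbol{W}}}$ the $d$-separation $\boldsymbol{X}\perp\boldsymbol{Y}|\boldsymbol{Z}$ holds, whereas in $H_{del_{\boldsymbol{W}}}$ it does not. I then invoke part~2 of Theorem~\ref{th_dsep_latent_permitting} on $H_{del_{\boldsymbol{W}}}$ to produce a finite-cardinality distribution $P$ over the observed nodes of $H_{del_{\boldsymbol{W}}}$ (equivalently, over $\boldsymbol{V}\setminus\boldsymbol{W}$) which is classically compatible with $H_{del_{\boldsymbol{W}}}$ yet violates $\boldsymbol{X}\dbot\boldsymbol{Y}|\boldsymbol{Z}$.

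Next I extend $P$ to a distribution $\tilde{P}$ over all observed nodes of $H$ (which coincide with those of $G$, up to the identification underlying observational equivalence) by point-distributing every node in $\boldsymbol{W}$ at $0$. Applying Lemma~\ref{lem:point_distribution_subgraph} in both directions, I conclude that $\tilde{P}\in\mathcal{C}_H$ since $P\in\mathcal{C}_{H_{del_{\boldsymbol{W}}}}$, and conversely $\tilde{P}\in\mathcal{C}_G$ would require $P\in\mathcal{C}_{G_{del_{\boldsymbol{W}}}}$. The latter however is impossible: part~1 of Theorem~\ref{th_dsep_latent_permitting} applied to $G_{del_{\boldsymbol{W}}}$, together with the $d$-separation $\boldsymbol{X}\perp\boldsymbol{Y}|\boldsymbol{Z}$ holding there, forces every distribution in $\mathcal{C}_{G_{del_{\boldsymbol{W}}}}$ to satisfy $\boldsymbol{X}\dbot\boldsymbol{Y}|\boldsymbol{Z}$, which $P$ does not. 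Therefore $\tilde{P}\in\mathcal{C}_H\setminus\mathcal{C}_G$, contradicting the hypothesis $\mathcal{C}_G=\mathcal{C}_H$.

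The main obstacle, and the only nontrivial ingredient, is the bidirectional application of Lemma~\ref{lem:point_distribution_subgraph} (which itself rests on the $e$-separation theorem of Ref.~\cite{Evans2012}); one must be careful that the extension $\tilde{P}$ genuinely belongs to the same ambient set of distributions over the observed variables of both $G$ and $H$, and that conditional independence on the non-$\boldsymbol{W}$ variables is preserved when $\boldsymbol{W}$ is point-distributed (which is immediate since a point-distributed variable is independent of everything else in any conditioning context). Everything else is a direct appeal to Theorem~\ref{th_dsep_latent_permitting}.
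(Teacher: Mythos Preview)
Your proof is correct and takes a genuinely different route from the paper's. The paper's proof is a two-sentence appeal to Ref.~\cite{finkelstein2021entropic}: that reference shows that every observed $e$-separation relation implies an inequality satisfied by all compatible distributions, and (in its Appendix~E) that whenever a DAG lacks a given $e$-separation relation there is a compatible distribution violating the corresponding inequality. Your argument instead unpacks $e$-separation into $d$-separation on the deleted subgraphs, invokes Theorem~\ref{th_dsep_latent_permitting} to produce a witness distribution on $\boldsymbol{V}\setminus\boldsymbol{W}$, and then lifts it back via Lemma~\ref{lem:point_distribution_subgraph}. This is more self-contained within the paper and is in fact the same strategy the paper later uses in Appendix~\ref{appendix_proof_support_esep} to prove Theorem~\ref{th_supp_subsume_inequivalence}; what you lose is brevity, what you gain is an explicit construction of the separating distribution without leaning on the external inequality characterization.

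One small point to tighten: Lemma~\ref{lem:point_distribution_subgraph} requires the non-point-distributed variables to have finite cardinality, and Theorem~\ref{th_dsep_latent_permitting} as stated does not explicitly guarantee a finite-cardinality witness. You assert a ``finite-cardinality distribution $P$'' without justification. This is easy to fix (the standard faithfulness constructions, e.g.\ the binary one in Lemma~\ref{lemma:d-sep_supps}, suffice), but you should say so.
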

\begin{proof}
	In Ref.~\cite{finkelstein2021entropic} it is shown that \(e\)-separation relations imply in inequalities that must be satisfied by the compatible probability distributions. In Appendix~E of that same reference, it is further shown that if a DAG \emph{does not} exhibit an \(e\)-separation relation, then there must exist a compatible probability distribution which violates the inequality associated with that \(e\)-separation relation. This implies that, if a DAG $G$ exhibits an \(e\)-separation relation which is not exhibited by another DAG $H$, then it is possible to find a probability distribution that is compatible with $H$ but not with $G$.
\end{proof}

Using this lemma, we can derive the analog of Theorem~\ref{thm:dsep_for_interesting} for the case of \(e\)-separation:

\settheoremtag{\NLF \(e\)-sep}
\begin{theorem}
	\label{new_e_sep}
	Let $G$ be a DAG. Suppose that the set of observed \(e\)-separation relations of $G$ is \NLF (as per Definition~\ref{def_NLF}). Then, $G$ is \interesting.

\end{theorem}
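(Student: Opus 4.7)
The plan is to mirror exactly the proof structure of Theorem~\ref{thm:dsep_for_interesting}, merely swapping the role of \(d\)-separation for \(e\)-separation. Since the heavy lifting has already been carried out in Lemma~\ref{lem:esepforequiv}, which establishes that agreement of observed \(e\)-separation relations is a necessary condition for classical observational equivalence, the rest of the proof is a two-line chain of implications.

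First I would take the contrapositive of Lemma~\ref{lem:esepforequiv}: if two DAGs disagree on their sets of observed \(e\)-separation relations, then $\mathcal{C}_G \neq \mathcal{C}_H$. Next, I would invoke the hypothesis that the set of observed \(e\)-separation relations of $G$ is \NLF, meaning that for every latent-free DAG $H$ (with the same observed-node set as $G$) there is at least one observed \(e\)-separation relation on which $G$ and $H$ disagree. Combining these two facts yields that $G$ is classically observationally inequivalent to \emph{every} latent-free DAG. Finally, applying Evans' Theorem~\ref{equivalence_evans} — which characterizes \noninteresting DAGs as precisely those classically observationally equivalent to \emph{some} latent-free DAG — gives that $G$ is \interesting, completing the argument.

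There is no genuine obstacle here; the theorem is essentially a corollary of Lemma~\ref{lem:esepforequiv} together with Theorem~\ref{equivalence_evans}, in perfect parallel with Theorem~\ref{thm:dsep_for_interesting}. The only thing worth remarking on in the write-up is that the \(d\)-separation version is strictly subsumed by this \(e\)-separation version (since \(d\)-separation is the special case $\boldsymbol{W}=\emptyset$ of \(e\)-separation), so any DAG shown \interesting by Theorem~\ref{thm:dsep_for_interesting} is automatically shown \interesting by Theorem~\ref{new_e_sep}; the converse, however, need not hold, which justifies stating the \(e\)-separation criterion as a genuinely stronger tool rather than a mere restatement. I would not belabor the computational cost of checking the hypothesis in the proof itself, deferring any such discussion (analogous to the remark after Theorem~\ref{thm:dsep_for_interesting} about efficient algorithms) to surrounding text.
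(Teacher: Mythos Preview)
Your proposal is correct and matches the paper's own proof essentially verbatim: the paper simply writes ``Follows directly from Lemma~\ref{lem:esepforequiv} and Theorem~\ref{equivalence_evans}.'' Your additional remarks about \(d\)-separation being the $\boldsymbol{W}=\emptyset$ special case, and hence Theorem~\ref{thm:dsep_for_interesting} being subsumed, also appear in the paper's surrounding discussion rather than in the proof proper.
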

\begin{proof}
	Follows directly from Lemma~\ref{lem:esepforequiv} and Theorem~\ref{equivalence_evans}.
\end{proof}

To use Theorem~\ref{new_e_sep} in practice, one might enumerate the \(e\)-separation relations exhibited by every latent-free DAG with the same number of observed nodes of $G$ and compare them to the observed \(e\)-separation relations of $G$. Far more efficiently, one need only check against any \emph{one} latent-free graph which matches the observed \(d\)-separation relations of $G$. (Such a latent-free DAG must exist if $G$ is not already certified as \interesting by Theorem~\ref{thm:dsep_for_interesting}\footnote{We do not discuss \emph{how} to find a latent-free DAG with the same \(d\)-separation relations as $G$, as ultimately we conclude that there is apparently no advantage in doing so, as discussed in Appendix~\ref{app_rapid_dsep}.}). After all, if two latent-free graphs share the same \(d\)-separation relations, they will also share the same \(e\)-separation relations, per Theorem~\ref{th_latent_free} and Lemma~\ref{lem:esepforequiv}. We thus advise verifying that a DAG $G$ in question is not already certified as \interesting by Theorem~\ref{thm:dsep_for_interesting} before invoking Theorem~\ref{new_e_sep}.

It is clear that every DAG that can be shown \interesting by Theorem~\ref{thm:dsep_for_interesting} can also be shown \interesting by Theorem~\ref{new_e_sep}, since \(d\)-separation is a special case of \(e\)-separation. A little less trivial, every DAG that can be shown \interesting by Theorem~\ref{thm:elie_esep} can also be shown \interesting by Theorem~\ref{new_e_sep}: all nonmaximal DAGs have a set of \(e\)-separation relations which is \NLF.

\begin{proposition}[Theorem~\ref{new_e_sep} subsumes Theorem~\ref{thm:elie_esep}]
	\label{prop_esep_subsumes}
	Let $G$ be a nonmaximal DAG. Then, the set of \(e\)-separation relations of $G$ is \NLF.
\end{proposition}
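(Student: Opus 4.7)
The plan is to pinpoint, for any nonmaximal $G$, an observed \(e\)-separation relation that $G$ satisfies but which no latent-free DAG on the same observed nodes can match. Since $G$ is nonmaximal, there exist observed nodes $A$ and $B$ that are non-adjacent yet not \(d\)-separable by any observed subset. Let $\boldsymbol{W} := \boldsymbol{V}\setminus\{A,B\}$ denote the remaining observed nodes.

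The first step is to verify that $(A \eperp B \mid \emptyset)_{\text{del}_{\boldsymbol{W}}}$ holds in $G$. Working in the exogenized (mDAG) form, every latent node is a root, so in the subgraph $G_{\text{del}_{\boldsymbol{W}}}$ any undirected path between $A$ and $B$ must either be a direct arrow or a fork $A \leftarrow \lambda \to B$ at a common latent parent $\lambda$. Both possibilities are excluded by non-adjacency, so $A$ and $B$ are disconnected in $G_{\text{del}_{\boldsymbol{W}}}$ and are therefore trivially \(d\)-separated by the empty set.

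The second step is a contradiction argument. Suppose some latent-free DAG $H$ on the same observed nodes shares its observed \(e\)-separation relations with $G$. Then $(A \eperp B \mid \emptyset)_{\text{del}_{\boldsymbol{W}}}$ must also hold in $H$. But after deleting every observed node of $H$ other than $A$ and $B$, the subgraph consists of $A,B$ together with at most one arrow between them, so \(d\)-separation by the empty set forces $A$ and $B$ to be non-adjacent in $H$. Being latent-free, $H$ is maximal (per Prop.~3.19 of Ref.~\cite{Richardson2002}), and so $A$ and $B$ admit some observed \(d\)-separating set $\boldsymbol{Z}$. The corresponding \(e\)-separation relation $(A \eperp B \mid \boldsymbol{Z})_{\text{del}_{\emptyset}}$ is exhibited by $H$ but not by $G$, contradicting the assumed match. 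Hence no latent-free DAG matches $G$'s observed \(e\)-separation relations, so that set is \NLF.

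The delicate ingredient is the first step: one must argue cleanly that, in the exogenized form, non-adjacency of $A$ and $B$ really blocks every path in $G_{\text{del}_{\boldsymbol{W}}}$; for non-exogenized $G$ one additionally invokes Lemma~\ref{lemma_exogenization} together with the latent-treck clause of the adjacency definition. The remainder of the argument is essentially a pigeonhole between the two defining features of nonmaximality: the \(e\)-separation after deleting $\boldsymbol{W}$ certifies non-adjacency in any candidate latent-free $H$, while the absence of \(d\)-separation in $G$ then contradicts $H$'s forced maximality.
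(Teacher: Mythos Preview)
Your proof is correct and follows essentially the same approach as the paper: both identify the \(e\)-separation relation \((A \eperp B \mid \emptyset)_{\text{del}_{\boldsymbol{V}\setminus\{A,B\}}}\) forced by non-adjacency, then argue by contradiction that any latent-free \(H\) matching \(G\)'s \(e\)-separation relations would have \(A,B\) non-adjacent and hence (by latent-free maximality, Prop.~3.19 of Ref.~\cite{Richardson2002}) \(d\)-separable, contradicting the \(d\)-unseparability of \(A,B\) in \(G\). Your version is a bit more explicit in justifying the first step and in phrasing the final contradiction at the level of \(e\)-separation rather than \(d\)-separation, but the argument is the same.
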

\begin{proof}
	If $G$ is nonmaximal, then there are at least two nodes $A$ and $B$ which are not \(d\)-separable, but are also not adjacent in $G$. If they are not adjacent in $G$, it is clear that $A$ is \(e\)-separated from $B$ by deletion of \emph{every other} observed node of $G$.

	Let us make a proof by contradiction: suppose that the set of \(e\)-separation relations of $G$ \emph{is not} \NLF. Then, there is a latent-free DAG $H$ which has exactly the same set of \(e\)-separation relations as $G$. This means that $A$ should be \(e\)-separated from $B$ by deletion of every other node of $H$, which implies that $A$ and $B$ are \emph{not} adjacent in $H$. In a latent-free graph, two nodes are \emph{nonadjacent} if and only if said pair of nodes are \(d\)-separable~\cite[Prop. 3.19]{Richardson2002}. However, if $A$ and $B$ are \(d\)-separable in $H$ but not in $G$, then their sets of \(d\)-separation relations do not coincide, which is a contradiction.
\end{proof}

It is still an open question whether the conjunction of Theorems~\ref{thm:elie_esep} and~\ref{thm:dsep_for_interesting} is as good as Theorem~\ref{new_e_sep} to show \interestingness. We did not find any DAG that was shown \interesting by Theorem~\ref{new_e_sep} but not by any of these two previous methods. (See Appendix~\ref{app:5nodes} for further discussion of this question.)

Note, however, that Theorem~\ref{new_e_sep} \emph{does not} subsume Theorem~\ref{thm_setwise_nonmaximal}. For example, the Triangle scenario (Fig.~\ref{fig_setwise_nonmaximal}(a)), that is shown \interesting by  Theorem~\ref{thm_setwise_nonmaximal}, does \emph{not} have any \(e\)-separation relation (just like the saturated latent-free DAGs).

Appendices~\ref{app:pienaar} and \ref{app:skeleton} relate Theorems~\ref{thm:elie_esep} and~\ref{new_e_sep} to results of prior literature. In particular, in Appendix~\ref{app:pienaar} we show that a version of Theorem~\ref{thm:elie_esep} in terms of \(e\)-separation that was presented in Ref.~\cite{Pienaar2017} is incorrect.

\subsection{Using incompatible supports to prove \interestingness}\label{subsec:supports}

The final method we used to prove \interestingness is based on the classical feasibility of supports.

Given a set of random variables $\{X_1,...,X_n\}$, a specific set $\{X_1=x_1,...,X_n=x_n\}$ of values that these random variables can take is called an \emph{event}. The \emph{support} $\mathcal{S}(P_{X_1,...,X_n})$ of a probability distribution $P_{X_1,...,X_n}$ over the variables $\{X_1,...,X_n\}$ is the set of events that have non-zero probability:
\begin{equation}
	\mathcal{S}(P_{X_1,...,X_n})=\left\{\{x_1,...,x_n\} \: | \: P_{X_1,...,X_n}(x_1,...,x_n)> 0\right\}
\end{equation}

We previously defined what it means for a probability distribution to be classically compatible with a DAG. Here, we define what it means for a support to be classically compatible with a DAG:

\begin{definition}[\textbf{Compatibility of a support with a DAG}]\label{def:support_compatible}
	Let $G$ be a DAG with observed nodes $\boldsymbol{A}=\boldsymbol{V}\cup\boldsymbol{L}$, where $\boldsymbol{V}$ are observed nodes and $\boldsymbol{L}$ are latent nodes. Let $\mathcal{S}$ be a set of events over the variables $\boldsymbol{V}$. We say that $\mathcal{S}$ is a support \emph{classically compatible with} $G$ if there exists a probability distribution $P_{\boldsymbol{V}}$ over $\boldsymbol{V}$ that is classically compatible with $G$ (i.e., $P_{\boldsymbol{V}}\in \mathcal{C}_G$) and whose support is ${\mathcal{S}(P_{\boldsymbol{V}})=\mathcal{S}}$. We say that $\mathcal{S}$ is a support \emph{compatible-up-to-CI} with $G$ if there exists a probability distribution $P_{\boldsymbol{V}}$ over $\boldsymbol{V}$ such that $P_{\boldsymbol{V}}\in \mathcal{I}_G$) and whose support is ${\mathcal{S}(P_{\boldsymbol{V}})=\mathcal{S}}$.
\end{definition}

As an example, the following support is \emph{not} compatible with the Bell DAG (Figure~\ref{fig_Bell_DAG}), as it corresponds to the Popescu-Rohrlich box~\cite{khalfin1985steklov, popescu1994quantum}:

\begin{equation}
	\label{eq_support_Bell}
	\mathcal{S}_\text{Bell} = \left\{\begin{aligned}
		  & \{X=0,Y=0,A=0,B=0\} \\
		  & \{X=0,Y=0,A=1,B=1\} \\
		  & \{X=0,Y=1,A=0,B=0\} \\
		  & \{X=0,Y=1,A=1,B=1\} \\
		  & \{X=1,Y=0,A=0,B=0\} \\
		  & \{X=1,Y=0,A=1,B=1\} \\
		  & \{X=1,Y=1,A=1,B=0\} \\
		  & \{X=1,Y=1,A=0,B=1\}
	\end{aligned}\right\}
\end{equation}

Note that if a support \emph{is} compatible with DAG $G$, that does \emph{not} mean that every distribution with that support will be compatible with $G$. There are countless counterexamples, but let us simply note that the full support (the one where all events have positive probability) is compatible with any DAG, but at the same time we know of many incompatible distributions which nevertheless have full support.

Naturally, admitting the same set of compatible supports is a prerequisite for two DAGs to admit the same set of compatible distributions:

\begin{lemma}[Supports condition for observational equivalence]\label{lem:supportsforequiv}
	Let $G$ and $H$ be two DAGs. If they are classically observationally equivalent (i.e. $\mathcal{C}_G=\mathcal{C}_H$), then their sets of classically compatible supports must be identical.

\end{lemma}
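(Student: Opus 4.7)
The plan is to observe that this lemma is an essentially definitional consequence of Definition~\ref{def:support_compatible}, requiring no auxiliary machinery. The set of supports classically compatible with a DAG $G$ is by construction the image of $\mathcal{C}_G$ under the support map $P \mapsto \mathcal{S}(P)$. Since equal sets have equal images under any function, the equality $\mathcal{C}_G = \mathcal{C}_H$ automatically forces the corresponding image sets to coincide. So the contrapositive direction we actually care about in applications, namely that a mismatch of compatible supports certifies $\mathcal{C}_G \neq \mathcal{C}_H$, comes for free.

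Concretely, I would prove the containment in both directions by a direct unpacking. Fix any support $\mathcal{S}$ classically compatible with $G$; by Definition~\ref{def:support_compatible} there exists a distribution $P$ with $P \in \mathcal{C}_G$ and $\mathcal{S}(P) = \mathcal{S}$. Invoking the hypothesis $\mathcal{C}_G = \mathcal{C}_H$ gives $P \in \mathcal{C}_H$, and this same $P$ witnesses that $\mathcal{S}$ is classically compatible with $H$. The reverse containment is obtained by swapping the roles of $G$ and $H$, yielding equality of the two sets of compatible supports.

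The main ``obstacle'' here is really just a contrast worth flagging in the write-up: whereas the analogous Lemma~\ref{lem:esepforequiv} for \(e\)-separation required a nontrivial appeal to the entropic-inequality machinery of Ref.~\cite{finkelstein2021entropic} to rule out silent changes in \(e\)-separation structure under observational equivalence, the corresponding statement for supports is a one-line tautology. The substance of the subsection will therefore lie not in this lemma itself but in what follows it --- namely, leveraging the contrapositive (distinct support sets imply $\mathcal{C}_G \neq \mathcal{C}_H$) together with Evans' Theorem~\ref{equivalence_evans} to certify \interestingness of a DAG by exhibiting a support that is classically compatible with it yet incompatible with every latent-free DAG on the same observed nodes, a task that Fraser's algorithm~\cite{Fraser2020} is precisely designed to address. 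It is also worth remarking in passing that the converse of the lemma fails, since matching sets of compatible supports do not in general force equality of $\mathcal{C}_G$ and $\mathcal{C}_H$; this mirrors the earlier remark that $\mathcal{I}_G = \mathcal{I}_H$ does not imply $\mathcal{C}_G = \mathcal{C}_H$.
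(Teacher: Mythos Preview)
Your proof is correct and matches the paper's treatment: the paper does not even supply a formal proof of this lemma, introducing it with the remark that ``naturally, admitting the same set of compatible supports is a prerequisite for two DAGs to admit the same set of compatible distributions,'' i.e.\ treating it as the definitional tautology you identify.

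One small correction to your surrounding commentary: you assert that the converse of the lemma \emph{fails}, but the paper explicitly states that whether matching compatible-support sets force $\mathcal{C}_G=\mathcal{C}_H$ is an \emph{open question} --- in particular, it is not known whether there exists a DAG with incompatible distributions but for which all supports are compatible. So in your write-up, soften that claim accordingly.
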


It remains an open question whether the condition of Lemma~\ref{lem:supportsforequiv} is also necessary for observational equivalence. In particular, it is not known whether or not there exists a DAG for which some distributions are incompatible (due to inequalities) but for which all \emph{supports} are compatible.

As before, this necessary condition for observational equivalence immediately translates into a method for proving \interestingness:

\settheoremtag{\NLF Supports}
\begin{theorem}
	\label{thm:supports}
	Let $G$ be a DAG. Suppose the set of classically compatible supports of $G$ is \NLF (as per Definition~\ref{def_NLF}). Then, $G$ is \interesting.

\end{theorem}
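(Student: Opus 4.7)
The plan is to adapt the two-step template that already appears for Theorems~\ref{thm:dsep_for_interesting} and~\ref{new_e_sep}. The two ingredients I need are both in place: Lemma~\ref{lem:supportsforequiv}, which says that agreement on classically compatible supports is a necessary condition for classical observational equivalence, and Evans' Theorem~\ref{equivalence_evans}, which characterises \noninteresting DAGs as exactly those classically observationally equivalent to \emph{some} latent-free DAG. Given these, the proof reduces to a one-line contrapositive chain.

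First I would unpack the \NLF hypothesis using Definition~\ref{def_NLF}: it asserts that for every latent-free DAG $H$ on the same observed nodes as $G$, the set of supports classically compatible with $H$ differs from the set classically compatible with $G$. The contrapositive of Lemma~\ref{lem:supportsforequiv} then delivers $\mathcal{C}_G\neq\mathcal{C}_H$ for every latent-free $H$, i.e., $G$ is not classically observationally equivalent to any latent-free DAG. Applying Theorem~\ref{equivalence_evans} in the form ``if $G$ is \noninteresting then $G$ is observationally equivalent to some latent-free DAG'', and contraposing once more, yields that $G$ is \interesting.

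Since Lemma~\ref{lem:supportsforequiv} was stated without proof in the excerpt, I would include one justifying sentence to make the chain self-contained: the lemma is essentially a tautology, because by Definition~\ref{def:support_compatible} the set of classically compatible supports of a DAG is the image of $\mathcal{C}_G$ under the ``take the support'' map, so $\mathcal{C}_G=\mathcal{C}_H$ immediately forces the two supports-sets to coincide. There is therefore no substantive obstacle in establishing the theorem itself; the genuine difficulty, which the rest of the paper addresses through Fraser's algorithm~\cite{Fraser2020}, is \emph{verifying} the \NLF hypothesis for a concrete $G$, since enumerating or finitely characterising the compatible supports and comparing them against those of every latent-free DAG with the same observed nodes is computationally demanding.
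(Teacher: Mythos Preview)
Your proposal is correct and matches the paper's approach: the theorem is presented as an immediate consequence of Lemma~\ref{lem:supportsforequiv} together with Theorem~\ref{equivalence_evans}, exactly paralleling the proofs of Theorems~\ref{thm:dsep_for_interesting} and~\ref{new_e_sep}. The paper in fact does not even spell out a separate proof for Theorem~\ref{thm:supports}, treating it as self-evident from those two ingredients; your added sentence justifying Lemma~\ref{lem:supportsforequiv} as a tautology is a small bonus.
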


To exploit Theorem~\ref{thm:supports} in practice, we need an  an algorithm capable of assessing whether or not a given support is compatible with a given DAG. Such algorithm was developed in Ref.~\cite{Fraser2020}, and refer to it here as \emph{Fraser's algorithm}. We have implemented Fraser's algorithm in Python and scripted it to yield \emph{all} the supports that are classically incompatible with a given DAG (for a certain assignment of the cardinalities of the observed variables).

In general, Fraser's algorithm is much more computationally expensive than simply assessing whether or not a graph exhibits some \(d\)-separation or \(e\)-separation relation. Consequently, we consider Theorem~\ref{thm:supports} a method of last resort to show \interestingness.

\subsubsection{Rapidly testing supports (without comparing to any latent-free graph)}

As well as for the case of \(e\)-separation, Theorem~\ref{thm:supports} has the downside that it requires one to find the supports compatible with the DAG $G$ and then check the compatible supports of all the latent-free DAGs with the same number of observed nodes (or alternatively to find the latent-free $H$ that has the same set of \(d\)-separation relations as $G$, and then check which supports are compatible with $H$). Since Fraser's algorithm is computationally expensive, doing this in practice can be cumbersome.

Luckily, it is possible to develop a rapid supports test where it is not even necessary to find \emph{all} of the supports compatible with $G$. The idea of the rapid supports test comes from noting that sometimes we can prove the \emph{incompatibility} of a given support with a DAG $G$ by recognizing that the given support conflicts with a \(d\)-separation relation exhibited by $G$.

Suppose, for instance, that a DAG $G$ has the (unconditional) \(d\)-separation relation $A\perp B$. Then, the support in Eq.~\eqref{demo support} is clearly incompatible with $G$, since any probability distribution with that support will must have $P_{AB}(1,1)=0\neq P_A(1)P_B(1)>0$, contradicting $A\dbot B$.
\begin{equation}
	\label{demo support}
	\mathcal{S}_\text{\ref{fig_ex_support}} = \left\{\begin{aligned}
		  & \{A=0,B=0\} \\
		  & \{A=0,B=1\} \\
		  & \{A=1,B=0\}
	\end{aligned}\right\}
\end{equation}

Indeed, we can formally categorize all such ``trivial" proofs of support incompatibility through the following two definitions:

\begin{definition}[\textbf{Support conflicting with a conditional independence relation}]
	\label{def_supp_conflict}
	Let $\mathcal{S}$ be a support over a set $\boldsymbol{V}$ of variables, and let $\boldsymbol{A}\subseteq\boldsymbol{V}$, $\boldsymbol{B}\subseteq\boldsymbol{V}$ and $\boldsymbol{C}\subseteq\boldsymbol{V}$ be three disjoint subsets of $\boldsymbol{V}$. We say that $\mathcal{S}$ \emph{conflicts with the conditional independence relation} $\boldsymbol{A}\dbot \boldsymbol{B}|\boldsymbol{C}$ if there exists a set $\{\boldsymbol{a},\boldsymbol{b},\boldsymbol{c}\}$ of values of the variables in $\boldsymbol{A}$, $\boldsymbol{B}$ and $\boldsymbol{C}$ such that the events $\{\boldsymbol{A}=\boldsymbol{a}, \boldsymbol{C}=\boldsymbol{c}\}$ and $\{\boldsymbol{B}=\boldsymbol{b}, \boldsymbol{C}=\boldsymbol{c}\}$ occur in $\mathcal{S}$, but the event $\{\boldsymbol{A}=\boldsymbol{a}, \boldsymbol{B}=\boldsymbol{b}, \boldsymbol{C}=\boldsymbol{c}\}$ does not occur in $\mathcal{S}$.
\end{definition}

For example, the support of Eq.~\eqref{demo support} conflicts with the conditional independence relation $A\dbot B$: both the events $A=1$ and $B=1$ occur in the support, but the event $\{A=1,B=1\}$ does not.

If a support $\mathcal{S}$ conflicts with a conditional independence relation $\boldsymbol{A}\dbot \boldsymbol{B}|\boldsymbol{C}$, then there is no probability distribution with support $\mathcal{S}$ that obeys $\boldsymbol{A}\dbot \boldsymbol{B}|\boldsymbol{C}$. This will be seen explicitly in the proof of Lemma~\ref{lemma_trivial_support}.

\begin{definition}[\textbf{Triviality of support incompatibility}]\label{def:support_and_dsep}
	A support $\mathcal{S}$ is said to be \emph{trivially incompatible} with a given DAG whenever the DAG exhibits some \(d\)-separation relation whose associated conditional independence relation conflicts with $\mathcal{S}$ (as in Definition~\ref{def_supp_conflict}). 
\end{definition}

By generalizing the discussion made around Eq.~\eqref{demo support}, we see that this definition indeed implies in classical incompatibility of the support with the DAG:

\begin{lemma}[Trivial incompatibility implies incompatibility]
	\label{lemma_trivial_support}
	If a support $\mathcal{S}$ is trivially incompatible with a DAG $G$, then it is classically incompatible with $G$.
\end{lemma}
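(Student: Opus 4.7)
The plan is to proceed by contrapositive combined with direct computation: assume there is a distribution $P$ classically compatible with $G$ whose support is exactly $\mathcal{S}$, and derive a contradiction from the conflicting conditional-independence relation.

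First I would invoke Theorem~\ref{th_dsep_latent_permitting} (observed \(d\)-separation in latent-permitting DAGs): since $G$ exhibits the \(d\)-separation relation $\boldsymbol{A} \perp \boldsymbol{B} \mid \boldsymbol{C}$ witnessing the trivial incompatibility, every $P \in \mathcal{C}_G$ must satisfy the conditional independence $\boldsymbol{A} \dbot \boldsymbol{B} \mid \boldsymbol{C}$. Thus any putative $P \in \mathcal{C}_G$ with $\mathcal{S}(P) = \mathcal{S}$ must in particular satisfy this factorization.

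Next I would unpack Definition~\ref{def_supp_conflict} to extract specific values $\boldsymbol{a},\boldsymbol{b},\boldsymbol{c}$ such that the partial events $\{\boldsymbol{A}=\boldsymbol{a},\boldsymbol{C}=\boldsymbol{c}\}$ and $\{\boldsymbol{B}=\boldsymbol{b},\boldsymbol{C}=\boldsymbol{c}\}$ are each consistent with some full event in $\mathcal{S}$, while $\{\boldsymbol{A}=\boldsymbol{a},\boldsymbol{B}=\boldsymbol{b},\boldsymbol{C}=\boldsymbol{c}\}$ is consistent with no full event in $\mathcal{S}$. Marginalizing $P$ then gives $P(\boldsymbol{A}=\boldsymbol{a},\boldsymbol{C}=\boldsymbol{c}) > 0$ and $P(\boldsymbol{B}=\boldsymbol{b},\boldsymbol{C}=\boldsymbol{c}) > 0$ (so in particular $P(\boldsymbol{C}=\boldsymbol{c})>0$), while $P(\boldsymbol{A}=\boldsymbol{a},\boldsymbol{B}=\boldsymbol{b},\boldsymbol{C}=\boldsymbol{c}) = 0$.

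Finally I would combine these with the factorization: the CI relation $\boldsymbol{A} \dbot \boldsymbol{B} \mid \boldsymbol{C}$ forces
\begin{equation*}
P(\boldsymbol{A}{=}\boldsymbol{a},\boldsymbol{B}{=}\boldsymbol{b},\boldsymbol{C}{=}\boldsymbol{c}) \;=\; \frac{P(\boldsymbol{A}{=}\boldsymbol{a},\boldsymbol{C}{=}\boldsymbol{c})\,P(\boldsymbol{B}{=}\boldsymbol{b},\boldsymbol{C}{=}\boldsymbol{c})}{P(\boldsymbol{C}{=}\boldsymbol{c})},
\end{equation*}
whose right-hand side is strictly positive while the left-hand side is zero, a contradiction. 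Hence no such $P$ exists and $\mathcal{S}$ is classically incompatible with $G$. There is no serious obstacle here: once Definition~\ref{def_supp_conflict} is read carefully and Theorem~\ref{th_dsep_latent_permitting} is applied, the proof is a one-line positivity argument. The only mild subtlety is being explicit that ``$\{\boldsymbol{A}=\boldsymbol{a},\boldsymbol{C}=\boldsymbol{c}\}$ occurs in $\mathcal{S}$'' means that some full event in $\mathcal{S}$ restricts to it, so that marginal positivity of $P$ follows.
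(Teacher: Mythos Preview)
Your proposal is correct and follows essentially the same approach as the paper: extract the witnessing values $\boldsymbol{a},\boldsymbol{b},\boldsymbol{c}$ from Definition~\ref{def_supp_conflict}, observe that any $P$ with support $\mathcal{S}$ has $P(\boldsymbol{A}{=}\boldsymbol{a},\boldsymbol{B}{=}\boldsymbol{b}\mid\boldsymbol{C}{=}\boldsymbol{c})=0$ while $P(\boldsymbol{A}{=}\boldsymbol{a}\mid\boldsymbol{C}{=}\boldsymbol{c})P(\boldsymbol{B}{=}\boldsymbol{b}\mid\boldsymbol{C}{=}\boldsymbol{c})>0$, contradicting the conditional independence forced by the \(d\)-separation relation. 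Your version is slightly more explicit in citing Theorem~\ref{th_dsep_latent_permitting} and spelling out the marginalization, but the argument is the same positivity contradiction.
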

\begin{proof}
	Let $\boldsymbol{A}\perp \boldsymbol{B}|\boldsymbol{C}$ be a \(d\)-separation relation of $G$ which is in conflict with $\mathcal{S}$. Furthermore, let $\{\boldsymbol{a},\boldsymbol{b},\boldsymbol{c}\}$ be a set of values of the variables in $\boldsymbol{A}$, $\boldsymbol{B}$ and $\boldsymbol{C}$ that witnesses this conflict. This means that all of the probability distributions which have the support $\mathcal{S}$ must have $P_{\boldsymbol{A} \boldsymbol{B}|\boldsymbol{C}}(\boldsymbol{a} \boldsymbol{b}|\boldsymbol{c})=0\neq P_{\boldsymbol{A} |\boldsymbol{C}}(\boldsymbol{a} |\boldsymbol{c})P_{ \boldsymbol{B}|\boldsymbol{C}}(\boldsymbol{b}|\boldsymbol{c})>0$. Therefore, $\mathcal{S}$ is not classically compatible with $G$.
\end{proof}

The key insight which allows us to accelerate the application of Theorem~\ref{thm:supports} is that for a latent-free DAG, the \emph{only} supports incompatible with it are those which are \emph{trivially} incompatible with it.
\begin{lemma}[Latent-free support compatibility]\label{lem:boring_supports}
	Let $H$ be a latent-free DAG. If $\mathcal{S}$ is a support which is not trivially incompatible with $H$ as per Definition~\ref{def:support_and_dsep}, then $\mathcal{S}$ is classically compatible with $H$ as per Definition~\ref{def:support_compatible}.\footnote{More generally, we conjecture that for \emph{any} DAG $G$, if $\mathcal{S}$ is a support which is not trivially incompatible with $G$ as per Definition~\ref{def:support_and_dsep}, then $\mathcal{S}$ is compatible \emph{up-to-CI}  with $G$. We did not prove this, however, except for the special case when $G$ is latent-free.}
\end{lemma}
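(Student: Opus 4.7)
I would prove this by explicitly constructing a distribution $P$ that factorizes according to $H$ and whose support is exactly $\mathcal{S}$. Fix any topological ordering $X_1,\ldots,X_n$ of $H$ and, for each $i$ and each value $\boldsymbol{p}$ of $\text{PA}_H(X_i)$, set
\begin{equation*}
  P_i(v_i \mid \boldsymbol{p}) := \frac{|\{e \in \mathcal{S} : e_{X_i}=v_i \text{ and } e_{\text{PA}_H(X_i)}=\boldsymbol{p}\}|}{|\{e \in \mathcal{S} : e_{\text{PA}_H(X_i)}=\boldsymbol{p}\}|}
\end{equation*}
whenever the denominator is nonzero, and take $P_i(\cdot \mid \boldsymbol{p})$ to be an arbitrary distribution otherwise. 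Declare $P(\boldsymbol{v}) := \prod_i P_i(v_i \mid v_{\text{PA}_H(X_i)})$; this lies in $\mathcal{C}_H$ by construction, since it factorizes per $H$. It remains to verify that $\mathrm{supp}(P) = \mathcal{S}$.

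I would prove the stronger statement that, for every $k$, the marginal of $P$ over $\{X_1,\ldots,X_k\}$ has support equal to $\pi_k(\mathcal{S})$, the projection of $\mathcal{S}$ onto its first $k$ coordinates. The base $k=0$ is vacuous. The forward inclusion $\pi_{k+1}(\mathcal{S}) \subseteq \mathrm{supp}\,P$ (on $k+1$ coordinates) is immediate from choosing any witnessing event in $\mathcal{S}$ and invoking the inductive hypothesis. The reverse inclusion is where the real work lies, and what I expect to be the main obstacle. Suppose $P(x_1,\ldots,x_{k+1})>0$. The inductive hypothesis supplies an event $e \in \mathcal{S}$ matching $(x_1,\ldots,x_k)$ on the first $k$ coordinates, while positivity of $P_{k+1}(x_{k+1} \mid x_{\text{PA}_H(X_{k+1})})$ requires the denominator defining this factor to be nonzero (automatic, since $e$ realizes the parent values) and supplies an event $e' \in \mathcal{S}$ matching $x_{k+1}$ together with the parent values $x_{\text{PA}_H(X_{k+1})}$.

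The crux is gluing $e$ and $e'$ into a single event of $\mathcal{S}$ that agrees with $\boldsymbol{x}$ on all $k+1$ coordinates. I would accomplish this via the hypothesis of no trivial incompatibility. Since $X_1,\ldots,X_k$ are all non-descendants of $X_{k+1}$ under the chosen topological ordering, the local Markov relation
\begin{equation*}
  X_{k+1} \perp \bigl(\{X_1,\ldots,X_k\} \setminus \text{PA}_H(X_{k+1})\bigr) \bigm| \text{PA}_H(X_{k+1})
\end{equation*}
is a $d$-separation exhibited by $H$. The witnesses $e$ and $e'$ agree on the conditioning set $\text{PA}_H(X_{k+1})$, so applying Definition~\ref{def_supp_conflict} to this $d$-separation (which, by hypothesis, $\mathcal{S}$ does not conflict with) yields the required common event in $\mathcal{S}$. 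This closes the induction, and specializing to $k=n$ gives $\mathrm{supp}(P)=\mathcal{S}$, so $\mathcal{S}$ is classically compatible with $H$ in the sense of Definition~\ref{def:support_compatible}.
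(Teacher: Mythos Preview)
Your proof is correct and essentially identical to the paper's: both construct a Markov-factorized distribution from the conditional ``occurrences'' in $\mathcal{S}$, prove by induction along a topological ordering that the marginal on $\{X_1,\ldots,X_k\}$ has support $\pi_k(\mathcal{S})$, and close the inductive step via the local Markov $d$-separation $X_{k+1}\perp(\{X_1,\ldots,X_k\}\setminus\text{PA}_H(X_{k+1}))\mid\text{PA}_H(X_{k+1})$ combined with the non-conflict hypothesis. The only cosmetic differences are that the paper uses the uniform distribution over the occurring values of $X_i$ (rather than your event-counting variant, which has the same support) and phrases the reverse inclusion contrapositively.
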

\begin{proof}
	Let $X_1, \dotsc, X_n$ be the nodes of $H$ in some topological order, i.e. an order where $X_{i}$ is a non-descendant of $X_{i+1}$. We will make this proof by explicitly constructing a probability distribution $P(X_1,...,X_n)$ which is classically compatible with $H$ and has the support $\mathcal{S}$.

	The explicit construction is given by $P(X_1,...,X_n)=\prod_i P(X_i|\text{PA}_H(X_i))$, where each $P(X_i|\text{PA}_H(X_i))$ is uniformly distributed over the values of $X_i$ which occur along with the given values of $\text{PA}_H(X_i)$ (and any value of the remaining variables) in $\mathcal{S}$. It is clear that this distribution is classically compatible with $H$, since it takes the form of the Markov condition. Now, we will show that its support is in fact $\mathcal{S}$.

	We will make a proof by induction on $m$ that this distribution has the correct support on $X_1,...,X_m$. For $m=1$ we simply have $P(X_1)$ uniformly distributed on values of $X_1$ that are possible under $\mathcal{S}$, so the basis case is immediately satisfied. Now, we assume that $X_1,...,X_k$ has the correct support, and we will prove that $X_1,...,X_{k+1}$ also does.

	\sloppy First, consider some $(x_1, \dotsc, x_k, x_{k+1})$ that occurs in $\mathcal{S}$. Since we assumed that $X_1,...,X_k$ has the correct support, we know that the corresponding $P(X_1,...,X_k)$ is non-zero. All of the parents of $X_{k+1}$ are elements of $\{X_1,...,X_k\}$; therefore, by definition, the corresponding $P(X_{k+1}|\text{PA}_H(X_{k+1}))$ is also non-zero. Therefore, $P(X_1,...,X_{k+1})=P(X_1,...,X_k)P(X_{k+1}|\text{PA}_H(X_{k+1}))$ is non-zero as required.

	Now, consider some $(x_1, \dotsc, x_k, x_{k+1})$ that \emph{does not} occur in $\mathcal{S}$. There are two possibilities: the set of values  $(x_1, \dotsc, x_k)$ could occur or not occur in $\mathcal{S}$. If $(x_1, \dotsc, x_k)$ also does not occur in $\mathcal{S}$, the proof is simple: by the inductive hypothesis, the corresponding $P(X_1,...,X_k)$ is zero, so $P(X_1,...,X_{k+1})$ has to be zero as required.

	Suppose now that $(x_1, \dotsc, x_k, x_{k+1})$ does not occur in $\mathcal{S}$ but $(x_1, \dotsc, x_k)$ \emph{occurs}. Let $C_1, \dotsc, C_p$ denote the parents of $X_{k+1}$ and $B_1, \dotsc, B_{k-p}$ the remaining variables among $X_1, \dotsc, X_k$, which we know are non-descendants of $X_{k+1}$ in $H$. The DAG $H$ must have the \(d\)-separation relation $X_{k+1} \perp \{B_1, \dotsc, B_{k-p}\}|\{C_1, \dotsc, C_p\}$; since $\mathcal{S}$ is \emph{not} trivially incompatible with $H$, then it must \emph{not} be in conflict with the associated conditional independence relation. Since $(x_1,...,x_k)=(b_1,...,b_{k-p},c_1,...,c_p)$ occurs in $\mathcal{S}$ but $(x_1,...,x_k,x_{k+1})=(b_1,...,b_{k-p},c_1,...,c_p,x_k+1)$ does not occur, we can then conclude that $(x_{k+1}, c_1, \dotsc, c_p)$ \emph{must not occur} in $\mathcal{S}$. Therefore, by definition the associated $P(X_{k+1}|\text{PA}_H(X_{k+1}))=P(X_{k+1}|C_1,...,C_p)$ is zero and hence so is $P(X_1, \dotsc, X_k, X_{k+1})$, as required.
\end{proof}

Accordingly, we have the following upgrade to Theorem~\ref{thm:supports}:

\settheoremtag{Rapid Supports}
\begin{theorem}
	\label{thm:rapid_supports}
	Let $G$ be a DAG. If there is a support $\mathcal{S}$ which is incompatible with $G$ despite \emph{not} being trivially incompatible with $G$, then $G$ is \interesting.
\end{theorem}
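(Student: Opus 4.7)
The plan is to combine Theorem~\ref{thm:supports} with Lemma~\ref{lem:boring_supports} by way of an observation that the notion of \emph{trivial} incompatibility is purely a property of the DAG's observed \(d\)-separation relations. Concretely, I would argue by contradiction: suppose $G$ is \noninteresting. Then Theorem~\ref{equivalence_evans} yields a latent-free DAG $H$ with $\mathcal{C}_G=\mathcal{C}_H$, and by the contrapositive in Eq.~\eqref{eq_different_I} we also get $\mathcal{I}_G=\mathcal{I}_H$, so in particular $G$ and $H$ share the same set of observed \(d\)-separation relations.

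Next, I would unpack Definition~\ref{def:support_and_dsep}: whether a support $\mathcal{S}$ is trivially incompatible with a DAG depends \emph{only} on the list of observed \(d\)-separation relations exhibited by that DAG, together with the purely combinatorial notion of conflict from Definition~\ref{def_supp_conflict}. Therefore, the hypothesis that $\mathcal{S}$ is not trivially incompatible with $G$ immediately transfers to $H$: since $H$ has the same observed \(d\)-separation relations as $G$, the support $\mathcal{S}$ is also not trivially incompatible with $H$. At this point I would invoke Lemma~\ref{lem:boring_supports}, which applies because $H$ is latent-free, to conclude that $\mathcal{S}$ is classically compatible with $H$, i.e.\ there exists a distribution $P\in\mathcal{C}_H$ whose support is exactly $\mathcal{S}$.

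Finally, I would close the contradiction using Lemma~\ref{lem:supportsforequiv}: since $\mathcal{C}_G=\mathcal{C}_H$, the sets of classically compatible supports of $G$ and $H$ coincide. So $\mathcal{S}$, being compatible with $H$, must also be compatible with $G$, contradicting the hypothesis that $\mathcal{S}$ is incompatible with $G$. Hence $G$ cannot be \noninteresting, and must be \interesting, as required. I would likely also remark that this proof is really a strengthening of Theorem~\ref{thm:supports} because \emph{witnessing} a single non-trivially-incompatible support of $G$ already certifies that the set of compatible supports of $G$ is \NLF, bypassing any need to enumerate the supports compatible with every latent-free candidate.

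The steps are all short, so there is no serious obstacle; the only subtle point worth stating carefully is the one in the middle paragraph, namely that trivial incompatibility is a function solely of the observed \(d\)-separation relations and hence is invariant under classical observational equivalence. Once that is isolated, the rest is assembly of already-established results.
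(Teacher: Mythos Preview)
Your proposal is correct and follows essentially the same approach as the paper's own proof: assume $G$ is \noninteresting, invoke Theorem~\ref{equivalence_evans} to get a latent-free $H$ with $\mathcal{C}_G=\mathcal{C}_H$, observe that the shared \(d\)-separation relations make trivial incompatibility transfer from $G$ to $H$, apply Lemma~\ref{lem:boring_supports} to conclude $\mathcal{S}$ is compatible with $H$, and derive a contradiction via Lemma~\ref{lem:supportsforequiv}. Your explicit isolation of the fact that trivial incompatibility depends only on the observed \(d\)-separation relations is a nice clarification that the paper leaves somewhat implicit.
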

\begin{proof}
	We start by noting that if $G$ was \noninteresting, then it would be classically observationally equivalent to some latent-free DAG $H$, per Theorem~\ref{equivalence_evans}. If so, then $H$ and $G$ would (at least!) share the same \(d\)-separation relations, and hence a support would only \emph{not} be trivially incompatible with $G$ if was not trivially incompatible with $H$. But by Lemma~\ref{lem:boring_supports}, if a support is not trivially incompatible with $H$ then it must be compatible with $H$. Since our starting premise is that this support is not compatible with $G$, then by Lemma~\ref{lem:supportsforequiv} it follows that $\mathcal{C}_G \neq \mathcal{C}_H$, and hence $G$ is \interesting.
\end{proof}

Note that Theorem~\ref{thm:rapid_supports} allows us to leverage tools distinct from Fraser's algorithm for assessing support incompatibility. Fraser's algorithm is a \emph{necessary and sufficient} test for support compatibility. For the purposes of Theorem~\ref{thm:rapid_supports}, however, we can instead consider variant algorithms related to Inflation~\cite{Wolfe2019} which cannot certify support compatibility but which can often efficiently detect support incompatibility. Some such algorithms are discussed in Ref~\cite{restivo2022possibilistic}, for example.

It is clear that the application of Theorem~\ref{thm:rapid_supports} is much more efficient than the application of Theorem~\ref{thm:supports}. We can also show that both theorems are equally powerful:

\begin{proposition}
	\label{prop_rapid_supports_subsumes_slow}
	Let $G$ be a \interesting DAG. If the \interestingness of $G$ can be shown via Theorem~\ref{thm:supports}, then it can also be shown via Theorem~\ref{thm:rapid_supports}.
\end{proposition}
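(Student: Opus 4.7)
I would prove Proposition~\ref{prop_rapid_supports_subsumes_slow} by contrapositive: assuming Theorem~\ref{thm:rapid_supports} fails to certify the \interestingness of $G$, I will show that Theorem~\ref{thm:supports} also fails. The failure of Theorem~\ref{thm:rapid_supports} means that every support incompatible with $G$ is in fact trivially incompatible with $G$; combined with the converse direction from Lemma~\ref{lemma_trivial_support}, this produces the biconditional that a support is classically compatible with $G$ if and only if it is not trivially incompatible with $G$.

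The key step is then to exhibit a latent-free DAG $H$ on the same observed nodes as $G$ whose set of classically compatible supports coincides with that of $G$---this would contradict the hypothesis that the set of compatible supports of $G$ is \NLF. The natural choice is to take $H$ to be any latent-free DAG sharing $G$'s observed \(d\)-separation relations. Since trivial incompatibility (Definition~\ref{def:support_and_dsep}) depends only on the observed \(d\)-separation relations, $H$ and $G$ have the same trivially-incompatible supports. Lemma~\ref{lem:boring_supports} then says that a support is classically compatible with the latent-free $H$ iff it is not trivially incompatible with $H$, which (by the previous sentence) holds iff the support is not trivially incompatible with $G$, which (by the contrapositive hypothesis) holds iff it is classically compatible with $G$. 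Thus $G$ and $H$ have identical sets of compatible supports, as required.

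The main obstacle lies in securing the existence of a latent-free $H$ whose observed \(d\)-separation relations match those of $G$. If $G$'s observed \(d\)-separation relations are themselves \NLF, no such $H$ exists. In this edge case, however, Theorem~\ref{thm:dsep_for_interesting} already certifies \interestingness via a strictly simpler criterion, so the invocation of Theorem~\ref{thm:supports} would have been redundant in the first place; a fully rigorous proof should nevertheless be supplemented by a short direct argument---likely exploiting the \(d\)-separation mismatch to construct an explicit incompatible-but-not-trivially-incompatible support---confirming that Theorem~\ref{thm:rapid_supports} also applies in this regime.
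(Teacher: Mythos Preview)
Your contrapositive handling of the main case --- where some latent-free $H$ shares $G$'s observed \(d\)-separation relations --- is correct and amounts to exactly the paper's argument in that case, just run in reverse.

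The genuine gap is the edge case you yourself flag: when $G$'s observed \(d\)-separation relations are \NLF, no such $H$ exists, and your proof stops at a sketch. Your remark that ``Theorem~\ref{thm:dsep_for_interesting} already certifies \interestingness \ldots\ so the invocation of Theorem~\ref{thm:supports} would have been redundant'' does not help: the proposition is a bare implication (\emph{if} Theorem~\ref{thm:supports} applies \emph{then} Theorem~\ref{thm:rapid_supports} applies), and the availability of a third, simpler criterion is irrelevant to establishing that implication. What must actually be shown is that whenever the \(d\)-separation relations of $G$ are \NLF, there exists an incompatible-but-not-trivially-incompatible support --- exactly the ``short direct argument'' you defer.

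The paper completes this case by a further split on maximality. If $G$ is nonmaximal, the two-event support underlying Eq.~\eqref{explicit_distribution_esep} (perfect correlation of the nonadjacent \(d\)-unseparable pair, all else point-distributed) is incompatible with $G$ yet conflicts with no \(d\)-separation relation of $G$, hence is not trivially incompatible. If $G$ is maximal with \NLF \(d\)-separation relations, the paper invokes Theorem~\ref{thm:rapid_dsep_for_interesting}: $G$ must contain one of the eighteen mDAGs of Figure~\ref{fig:d_sep_patterns_compact} as a subgraph, and for each of those an explicit non-trivially-incompatible support is known; extending by point-distributing the remaining observed variables yields the needed support for $G$. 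This second subcase is not something one can wave through --- it relies on a structural classification established elsewhere in the paper --- so your proposal, while correctly oriented, is incomplete precisely where the paper has to do nontrivial extra work.
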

\begin{proof}
	First, suppose that there exists a latent-free DAG $H$ such that $\mathcal{I}_H=\mathcal{I}_G$. If $G$ is \interesting, then $C_G\subsetneq C_H$, and hence every support which is incompatible with $G$ must also be incompatible with $H$. If Theorem~\ref{thm:supports} shows the \interestingness of $G$, then there must be a support which is \emph{compatible} with $H$ but not $G$. Since the set of supports incompatible with $H$ is exactly the set of trivially incompatible supports as per Lemmas~\ref{lemma_trivial_support} and~\ref{lem:boring_supports}, it follows that the support which is incompatible with $G$ but not $H$ must \emph{not} be trivially incompatible with $G$. Therefore, the \interestingness of $G$ can be shown by Theorem~\ref{thm:rapid_supports}.

	Now, suppose that there is \emph{no} latent-free DAG that has the same set of \(d\)-separation relations as $G$, i.e. that the set of \(d\)-separation relations of $G$ is \NLF. If $G$ is not maximal, in the proof of Theorem~\ref{thm:elie_esep} we showed a distribution which is not compatible with $G$ (Eq.~\ref{explicit_distribution_esep}). In reality, \emph{all} the distributions which have the same support as this one will be incompatible with $G$. Furthermore, since Eq.~\ref{explicit_distribution_esep} is an element of $\mathcal{I}_G$, this support is not trivially incompatible with $G$. Therefore, the support of the distribution of  Eq.~\ref{explicit_distribution_esep} is an incompatible support which is not trivially incompatible.

	If the set of \(d\)-separation relations of $G$ is \NLF and $G$ \emph{is} maximal, per Theorem~\ref{thm:rapid_dsep_for_interesting} we know that $G$ has one of the eighteen mDAGs of Figure~\ref{fig:d_sep_patterns_compact} as a subgraph. One can explicitly check that these eighteen mDAGs \emph{have} incompatible supports that are not trivially incompatible; therefore, $G$ itself also must have incompatible supports that are not trivially incompatible, namely, by taking all its observed variables outside of the pertinent subgraph to be point distributed. Therefore, this case also falls under the scope of Theorem~\ref{thm:rapid_supports}.
\end{proof}

It is also worth noting that, when there \emph{is} a latent-free DAG $H$ with the same set of \(d\)-separation relations as $G$, Theorem~\ref{thm:rapid_supports} is \emph{constructive}: the distribution $P$ constructed for $H$ in the proof of Lemma~\ref{lem:boring_supports} is such that $P\in \mathcal{I}_G$ yet $P\not\in \mathcal{C}_G$.

An example of a support which is incompatible --- but not \emph{trivially} incompatible --- with the Evans scenario (Figure~\ref{fig_maximal}(b)) is the following:

\begin{equation}
	\label{Evans_support}
	\mathcal{S}_\text{Evans} = \left\{ \begin{aligned} \{C=0,D=0,E=0\}\\ \{C=0,D=1,E=1\} \end{aligned}\right\}
\end{equation}

The Evans scenario does not have any \(d\)-separation relation. Nevertheless, the support $\mathcal{S}_\text{Evans}$ of Equation~\eqref{Evans_support} is not compatible with it. This can be seen by noting that the variable $C$ in $\mathcal{S}$ is associated with a point distribution: it always takes the value $0$. Since in the Evans scenario all the correlation between $D$ and $E$ is established through $C$, it is impossible to have perfect correlation between $D$ and $E$ while $C$ takes a point distribution.

An example of DAG whose \interestingness was first certified in Ref.~\cite{Fraser2020} via the discovery of an incompatible support is presented in Figure~\ref{fig_ex_support}. By means of his eponymous algorithm for compatible supports, Fraser showed that the following support is classically incompatible with the DAG of Figure~\ref{fig_ex_support}\footnote{Note that the support is not \emph{trivially} incompatible, since the DAG in question does not exhibit any \(d\)-separation relations involving observed nodes} :

\begin{equation}
	\label{TC_support}
	\mathcal{S}_\text{\ref{fig_ex_support}} = \left\{\begin{aligned}
		  & \{A=0,B=0,C=0,D=0\} \\
		  & \{A=0,B=0,C=0,D=1\} \\
		  & \{A=0,B=1,C=0,D=0\} \\
		  & \{A=1,B=0,C=1,D=0\}
	\end{aligned}\right\}
\end{equation}

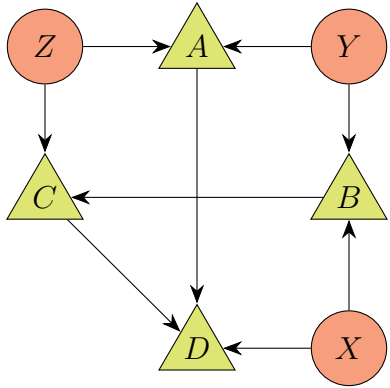
\begin{figure}[htbp]
	\centering
	\begin{tikzpicture}
		\node[q](X) at (4,0){$X$};
		\node[q](Y) at (4,4){$Y$};
		\node[q](Z) at (0,4){$Z$};
		\node[c](A) at (2,4){$A$}
		edge[e](Z)
		edge[e](Y);
		\node[c](B) at (4,2){$B$}
		edge[e](X)
		edge[e](Y);
		\node[c](C) at (0,2){$C$}
		edge[e](Z)
		edge[e](B);
		\node[c](D) at (2,0){$D$}
		edge[e] (X)
		edge[e] (A)
		edge[e] (C);
	\end{tikzpicture}
	\caption{A DAG with 4 observed nodes and 7 total nodes whose \interestingness can be shown by Fraser's algorithm for compatible supports.}
	\label{fig_ex_support}
\end{figure}

When using Theorem~\ref{thm:rapid_supports} to attest \interestingness, we start by checking supports at binary cardinalities of the observed variables. If such an incompatible but not trivially incompatible support is found, we can try to search for such a support at higher cardinalities of the observed variables.

\begin{remark}
	We anticipated that if a DAG has \emph{any} incompatible support (for \emph{any} cardinality), then it seemed likely that we should expect to find \emph{some} incompatible support where all variables have binary cardinality. Indeed, prior to this work, we are not aware of any counterexample. Even the three challenging DAGs identified in Figure 14 of Ref.~\cite{Evans2015} were eventually found to have incompatible supports with merely binary variables. However, this intuition turns out to be misplaced: We identified 4 mDAGs for which  the high-cardinality support of Eq.~\ref{eq_support_card_3222} is identified as incompatible, but where nevertheless \emph{every} support over binary variables is provably compatible. These mDAGs are depicted in Table~\ref{table2new}.
\end{remark}

\begin{equation}
	\label{eq_support_card_3222}
	\mathcal{S}_{\text{for Table~\ref{dags_where_supports_worked}}} = \left\{\begin{aligned}
		  & \{A=0,B=0,C=0,D=0\} \\
		  & \{A=0,B=0,C=1,D=0\} \\
		  & \{A=0,B=1,C=0,D=0\} \\
		  & \{A=1,B=0,C=0,D=0\} \\
		  & \{A=1,B=1,C=0,D=0\} \\
		  & \{A=2,B=0,C=0,D=1\} \\
		  & \{A=2,B=1,C=1,D=0\} \\
	\end{aligned}\right\}
\end{equation}

\begin{table}[h!]
	\centering
	\begin{tabular}{|l|l|}
		\hline
		(a)\hspace{1em} \makecell{\begin{tikzpicture}
				\node[q](X) at (4,2){$X$};
				\node[q](Y) at (2,5){$Y$};
				\node[q](Z) at (0,6){$Z$};
				\node[c](A) at (4,6){$A$}
				edge[e](Z)
				edge[e](Y);
				\node[c](B) at (4,4){$B$}
				edge[e](X)
				edge[e](A);
				\node[c](C) at (2,1){$C$}
				edge[e] (X)
				edge[e] (Y)
				edge[e] (B);
				\node[c](D) at (0,4){$D$}
				edge[e](Z)
				edge[e](B)
				edge[e](C);
			\end{tikzpicture} }   &
		(b)\hspace{1em}  \makecell{\begin{tikzpicture}
				\node[q](X) at (4,2){$X$};
				\node[q](Y) at (2,5){$Y$};
				\node[q](Z) at (0,6){$Z$};
				\node[q](W) at (0,2){$W$};
				\node[c](A) at (4,6){$A$}
				edge[e](Z)
				edge[e](Y);
				\node[c](B) at (4,4){$B$}
				edge[e](X)
				edge[e](A);
				\node[c](C) at (2,1){$C$}
				edge[e](X)
				edge[e](Y)
				edge[e](W)
				edge[e](B);
				\node[c](D) at (0,4){$D$}
				edge[e](Z)
				edge[e](W)
				edge[e](B)
				edge[e](C);
			\end{tikzpicture} }   \\
		\hline
		(c)\hspace{1em}  \makecell{\begin{tikzpicture}
				\node[q](X) at (2,0){$U$};
				\node[q](Y) at (2,5){$Y$};
				\node[q](Z) at (0,6){$Z$};
				\node[c](A) at (4,6){$A$}
				edge[e](Z)
				edge[e](Y);
				\node[c](B) at (4,4){$B$}
				edge[e](X)
				edge[e](A);
				\node[c](C) at (2,2){$C$}
				edge[e](Y)
				edge[e](B);
				\node[c](D) at (0,4){$D$}
				edge[e](Z)
				edge[e](X)
				edge[e](B)
				edge[e](C);
			\end{tikzpicture} } &
		(d)\hspace{1em}   \makecell{ \begin{tikzpicture}
				\node[q](X) at (2,0){$U$};
				\node[q](Y) at (2,5){$Y$};
				\node[q](Z) at (0,6){$Z$};
				\node[q](W) at (0,1){$W$};
				\node[c](A) at (4,6){$A$}
				edge[e](Z)
				edge[e](Y);
				\node[c](B) at (4,4){$B$}
				edge[e](X)
				edge[e](A);
				\node[c](C) at (2,2){$C$}
				edge[e](Y)
				edge[e](W)
				edge[e](B);
				\node[c](D) at (0,4){$D$}
				edge[e](Z)
				edge[e](X)
				edge[e](W)
				edge[e](B)
				edge[e](C);
			\end{tikzpicture} }  \\
		\hline
	\end{tabular}
	\caption{The only 4 mDAGs with 4 observed nodes for which every support over binary observed variables is classically compatible but which are nevertheless provably \interesting by virtue of the higher-cardinality support given in Eq.~\eqref{eq_support_card_3222} being incompatible with all 4 DAGs here.}\label{dags_where_supports_worked}
	\label{table2new}
\end{table}

\subsubsection{Incompatible supports subsumes all other methods}

The methods to show \interestingness that we presented here are not independent of each other. For example, it is clear that Theorem~\ref{new_e_sep} subsumes Theorem~\ref{thm:dsep_for_interesting}, and Proposition~\ref{prop_esep_subsumes} shows that Theorem~\ref{new_e_sep} also subsumes Theorem~\ref{thm:elie_esep}. However, it is more efficient to start by checking maximality of the DAG of interest or the \(d\)-separation relations of latent-free DAGs instead of directly checking all of their \(e\)-separation relations; this is the reason why we presented these results separately.

As it turns out, we can also show that Theorem~\ref{thm:supports} subsumes Theorem~\ref{new_e_sep}: every DAG that can be shown \interesting via its \(e\)-separation relations (through Theorem~\ref{new_e_sep}) can also be shown \interesting via its incompatible supports (through Theorem~\ref{thm:supports}). This result will be presented here as Corollary~\ref{corollary_subsume_interestingness}.  All of the discussion about \(e\)-separation was made here because checking \(e\)-separation relations is in general much faster than checking incompatible supports.

\begin{restatable}[Supports subsumes \(e\)-separation for observational inequivalence]{theorem}{Suppesep}
	\label{th_supp_subsume_inequivalence}
	Let $G$ and $H$ be two DAGs. If their sets of observed \(e\)-separation relations are different, then it is possible to find a support \emph{over binary variables} which is compatible with one of them but incompatible with the other.
\end{restatable}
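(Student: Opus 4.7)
The plan is to prove the contrapositive constructively. By symmetry, I may assume there is an observed \(e\)-separation relation $(\boldsymbol{X}\eperp \boldsymbol{Y}|\boldsymbol{Z})_{\text{del}_{\boldsymbol{W}}}$ exhibited by $H$ but not by $G$. My goal is to exhibit a binary-variable support that is compatible with $G$ but incompatible with $H$. The overall strategy is to (i) construct a binary-variable distribution compatible with $G_{\text{del}_{\boldsymbol{W}}}$ whose support forbids certain $(\boldsymbol{X},\boldsymbol{Y},\boldsymbol{Z})$-combinations, (ii) lift this to a distribution on all observed nodes of $G$ by point-distributing $\boldsymbol{W}$, and (iii) use the $e$-separation in $H$ to derive a contradiction from the hypothesis that the resulting support is also realizable in $H$.

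For step (i), since $(\boldsymbol{X}\eperp \boldsymbol{Y}|\boldsymbol{Z})_{\text{del}_{\boldsymbol{W}}}$ fails in $G$, the subgraph $G_{\text{del}_{\boldsymbol{W}}}$ does \emph{not} have the $d$-separation $\boldsymbol{X}\perp \boldsymbol{Y}|\boldsymbol{Z}$, so there exists an active path. I will invoke the binary-variable completeness of $d$-separation: along any such active path, one can assign binary values deterministically so that non-collider variables are perfectly correlated along the path and any conditioned collider is the XOR of its parents along the path, while every variable in $\boldsymbol{Z}$ is forced to a particular value $\boldsymbol{z}$. The resulting distribution $P$ is Markov with respect to $G_{\text{del}_{\boldsymbol{W}}}$, uses only binary variables, and has a support witnessing perfect correlation between some $X_i\in \boldsymbol{X}$ and $Y_j\in \boldsymbol{Y}$ conditional on $\boldsymbol{Z}=\boldsymbol{z}$. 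In particular, the support contains both $(X_i=0,Y_j=0,\boldsymbol{Z}=\boldsymbol{z})$ and $(X_i=1,Y_j=1,\boldsymbol{Z}=\boldsymbol{z})$ (after marginalizing over the remaining observed nodes) but not $(X_i=0,Y_j=1,\boldsymbol{Z}=\boldsymbol{z})$, thereby conflicting with $X_i\dbot Y_j|\boldsymbol{Z}$ and a fortiori with $\boldsymbol{X}\dbot \boldsymbol{Y}|\boldsymbol{Z}$ in the sense of Definition~\ref{def_supp_conflict}.

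For step (ii), extend $P$ to a distribution $P'$ on all observed variables of $G$ by point-distributing each variable in $\boldsymbol{W}$ at $0$. By the ``only if'' direction of Lemma~\ref{lem:point_distribution_subgraph}, $P'\in \mathcal{C}_G$, and its binary-valued support $\mathcal{S}$ still conflicts with $\boldsymbol{X}\dbot \boldsymbol{Y}|\boldsymbol{Z}$. For step (iii), suppose for contradiction some $Q\in \mathcal{C}_H$ has support exactly $\mathcal{S}$. Every event in $\mathcal{S}$ assigns $\boldsymbol{W}=\vec{0}$, so $\boldsymbol{W}$ is point-distributed under $Q$; the ``if'' direction of Lemma~\ref{lem:point_distribution_subgraph} then places the marginal of $Q$ over $\boldsymbol{V}\setminus \boldsymbol{W}$ in $\mathcal{C}_{H_{\text{del}_{\boldsymbol{W}}}}$. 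Because $H$ exhibits the $e$-separation in question, $H_{\text{del}_{\boldsymbol{W}}}$ exhibits the $d$-separation $\boldsymbol{X}\perp \boldsymbol{Y}|\boldsymbol{Z}$, forcing $Q$ to satisfy $\boldsymbol{X}\dbot \boldsymbol{Y}|\boldsymbol{Z}$ by Theorem~\ref{th_dsep_latent_permitting}; this contradicts Lemma~\ref{lemma_trivial_support} since $\mathcal{S}$ conflicts with that CI. Thus $\mathcal{S}$ is a binary-variable support compatible with $G$ but not with $H$.

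The main obstacle is step (i): turning a mere \emph{probabilistic} violation of $\boldsymbol{X}\dbot \boldsymbol{Y}|\boldsymbol{Z}$ into a \emph{support-level} conflict using only binary cardinalities. A generic faithful distribution witnessing $d$-connectedness has full support and is therefore useless here; the deterministic ``copy-and-XOR'' construction along an active path is what provides the required zero-probability events, and some care is needed when $\boldsymbol{X}$, $\boldsymbol{Y}$, or $\boldsymbol{Z}$ are large sets so that the constructed assignments remain consistent (in practice this reduces to choosing one active path and one witnessing triple $(X_i,Y_j,\boldsymbol{z})$, as above).
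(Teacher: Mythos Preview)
Your proposal is correct and follows essentially the same route as the paper: build a binary distribution along an active path in the deleted subgraph via the copy-and-XOR trick (this is the paper's Lemma~\ref{lemma:d-sep_supps}), lift it by point-distributing $\boldsymbol{W}$, and then use Lemma~\ref{lem:point_distribution_subgraph} in both directions to transfer compatibility and incompatibility. One cosmetic slip: you have the ``if'' and ``only if'' directions of Lemma~\ref{lem:point_distribution_subgraph} interchanged in steps~(ii) and~(iii).
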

\begin{proof}
	Presented in Appendix~\ref{appendix_proof_support_esep}.
\end{proof}

Note that Theorem~\ref{th_supp_subsume_inequivalence} is a generic result about observational equivalence between two DAGs, not restricted to the comparison with latent-free DAGs. A direct corollary is:

\begin{corollary}[Supports subsumes \(e\)-separation for \interestingness]
	\label{corollary_subsume_interestingness}
	Let $G$ be a DAG. If there is no latent-free DAG which presents the same set of \(e\)-separation relations as $G$, then there is also no latent-free DAG which presents the same set of classically compatible supports as $G$. In other words, if $G$ can be shown \interesting by Theorem~\ref{new_e_sep}, then it can also be shown \interesting by Theorem~\ref{thm:supports}.
\end{corollary}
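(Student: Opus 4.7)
The plan is straightforward: this corollary follows by a direct contrapositive application of Theorem~\ref{th_supp_subsume_inequivalence}. Essentially all the work has already been done by that theorem, so the proposal amounts to organizing its logical consequence rather than introducing any new technique.

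First, I would restate the contrapositive of Theorem~\ref{th_supp_subsume_inequivalence} in convenient form: if two DAGs $G$ and $H$ have identical sets of classically compatible supports (in particular, over binary observed variables), then they must also have identical sets of observed \(e\)-separation relations. This is simply the logical converse-negation of the statement proved in Appendix~\ref{appendix_proof_support_esep}, so no additional work is required here.

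Next, I would argue by contradiction. Suppose the conclusion of the corollary fails: then there exists some latent-free DAG $H$ whose set of classically compatible supports coincides with that of $G$. By the contrapositive just recalled, $H$ must then also share the set of observed \(e\)-separation relations with $G$. This directly contradicts the hypothesis of the corollary, which posits that no latent-free DAG has the same \(e\)-separation relations as $G$. Hence no such $H$ exists, which is exactly the claim. The final sentence of the corollary (about Theorems~\ref{new_e_sep} and~\ref{thm:supports}) is then just a restatement in the language of \interestingness, since both theorems certify \interestingness precisely by exhibiting the \NLF property for \(e\)-separation and for supports respectively.

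The ``main obstacle'' is really not in the corollary at all, but in Theorem~\ref{th_supp_subsume_inequivalence} itself, which must explicitly construct, from any discrepancy in \(e\)-separation relations between two DAGs, a binary-valued support witnessing a difference in compatible supports. Once that theorem is in hand, the corollary is immediate and requires no further calculation or casework, so I would keep the proof to just a couple of sentences invoking Theorem~\ref{th_supp_subsume_inequivalence}.
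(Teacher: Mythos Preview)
Your proposal is correct and matches the paper's treatment: the paper does not even provide a separate proof, simply introducing the statement as ``A direct corollary'' of Theorem~\ref{th_supp_subsume_inequivalence}, which is exactly the contrapositive argument you spell out. Your observation that all the real work lies in Theorem~\ref{th_supp_subsume_inequivalence} is precisely the paper's attitude as well.
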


Since Proposition~\ref{prop_rapid_supports_subsumes_slow} shows that the rapid supports test (Theorem~\ref{thm:rapid_supports}) can show \interestingness in all the cases covered by Theorem~\ref{thm:supports}, from Corollary~\ref{corollary_subsume_interestingness} we can conclude that Theorem~\ref{thm:rapid_supports} is the most powerful tool we have so far to show \interestingness. It is still an open question whether detection by Theorem~\ref{thm:rapid_supports} is a necessary condition for \interestingness.

Note that Theorem~\ref{thm:rapid_supports} also subsumes Theorem~\ref{thm_setwise_nonmaximal}, as all distributions \emph{with same support} of that in Eq.~\eqref{setwise_correlation} are in $\mathcal{I}$ but not $\mathcal{C}$, and hence every DAG which is provably \interesting via Theorem~\ref{thm:rapid_supports} will also be provably \interesting via Theorem~\ref{thm:rapid_supports}. In particular, merely by considering compatible versus incompatible supports \emph{over two events}, Theorem~\ref{thm:rapid_supports} already subsumes Theorem~\ref{thm_setwise_nonmaximal}.

\section{Computational Results}
\label{sec:results}
We consider the problem of certifying the \interestingness of those mDAGs of 4 observed nodes which are not shown \noninteresting by the \HLP criterion (Corollary~\ref{hlp_criterion}). We start by the method described in Subsection~\ref{sec_nonmaximal}, that says that nonmaximal DAGs are \interesting. Among the 996 mDAGs which are left as \emph{potentially} interesting after applying the \HLP criterion, we find that 810 are nonmaximal. Therefore, nonmaximality seems to be a powerful tool to show \interestingness; at this stage, we are left with only 186 unresolved mDAGs after this preliminary filtering.

We then exploit the \(d\)-separation test for \interestingness per Subsection~\ref{subsec:dsep}. That is, among those 186 remaining mDAGs we filter out any mDAGs which possess observed \(d\)-separation relations \emph{not} matching those of some latent-free DAG (\NLF \(d\)-separation relations). We find that 168 mDAGs remain as-yet unresolved --- the 18 mDAGs that are shown \interesting at this stage are the ones presented in Figure~\ref{fig:d_sep_patterns_compact}.

The \(e\)-separation test for \interestingness presented in Subsection~\ref{subsec:esep} does not resolve \emph{any} of these 168 unresolved cases. As mentioned, it is still an open problem whether the nonmaximality test and the \(d\)-separation test together will always subsume the \(e\)-separation test. We then turn to the method of supports analysis per Subsection~\ref{subsec:supports}, which ultimately leaves us with only 3 remaining unresolved mDAGs.

More specifically, by considering supports with \emph{binary} cardinalities of the observed variables we were able to certify the \interestingness of $161$ out of the $168$ remaining mDAGs. We could not, however, find \emph{any} classically incompatible supports for the remaining 7 mDAGs when only considering binary cardinality variables. But by increasing the cardinality of variable $A$ to be three we were able to identify a support that is incompatible --- but not \emph{trivially} incompatible --- with the four mDAGs in Table~\ref{table2new}, hence certifying their \interestingness.  Said support is explicitly reproduced in Eq.~\eqref{eq_support_card_3222}; it is obviously not trivially incompatible with any of the mDAGs in Table~\ref{table2new}, since none of those four mDAGs exhibits any \(d\)-separation relations over its observed nodes. The remaining 3 mDAGs which we were \emph{unable} to resolve as \interesting via supports considerations --- up to computational tractability limits --- are depicted in Table~\ref{table3}.

The exact number of mDAGs that we are able to characterize as \interesting or \noninteresting at each stage is summarized in Table~\ref{tab:introsummary}.

\begin{table}
	\centering
	\begin{tabular}{|l|}
		\hline
		(a)\hspace{1em}  \makecell{\begin{tikzpicture}[yscale=0.85, xscale=1.184]
				\node[q](X) at (2,-1){$X$};
				\node[q](Y) at (2,4.5){$Y$};
				\node[q](Z) at (0,6){$Z$};
				\node[c](A) at (4,6){$A$}
				edge[e](Z)
				edge[e](Y);
				\node[c](B) at (4,3){$B$}
				edge[e](X)
				edge[e](A);
				\node[c](C) at (2,1){$C$}
				edge[e] (X)
				edge[e] (Y)
				edge[e] (B);
				\node[c](D) at (0,3){$D$}
				edge[e](Z)
				edge[e] (X)
				edge[e](B)
				edge[e](C);
			\end{tikzpicture}} \\
		\hline
		(b)\hspace{1em}  \makecell{\begin{tikzpicture}[yscale=0.85, xscale=1.184]
				\node[q](X) at (4,0){$W$};
				\node[q](Y) at (2,4.5){$Y$};
				\node[q](Z) at (0,6){$Z$};
				\node[q](W) at (2,-1){$X$};
				\node[c](A) at (4,6){$A$}
				edge[e](Z)
				edge[e](Y);
				\node[c](B) at (4,3){$B$}
				edge[e](X)
				edge[e](W)
				edge[e](A);
				\node[c](C) at (2,1){$C$}
				edge[e] (X)
				edge[e] (Y)
				edge[e] (B);
				\node[c](D) at (0,3){$D$}
				edge[e](Z)
				edge[e](B)
				edge[e](W)
				edge[e](C);
			\end{tikzpicture}} \\
		\hline
		(c)\hspace{1em}  \makecell{\begin{tikzpicture}[yscale=0.85, xscale=1.184]
				\node[q](X) at (4,0){$W$};
				\node[q](Y) at (2,4.5){$Y$};
				\node[q](Z) at (0,6){$Z$};
				\node[q](W) at (2,-1){$X$};
				\node[q](V) at (0,0){$U$};
				\node[c](A) at (4,6){$A$}
				edge[e](Z)
				edge[e](Y);
				\node[c](B) at (4,3){$B$}
				edge[e](X)
				edge[e](W)
				edge[e](A);
				\node[c](C) at (2,1){$C$}
				edge[e](X)
				edge[e](Y)
				edge[e](V)
				edge[e] (B);
				\node[c](D) at (0,3){$D$}
				edge[e](Z)
				edge[e](B)
				edge[e](W)
				edge[e](V)
				edge[e](C);
			\end{tikzpicture}} \\
		\hline
	\end{tabular}
	\caption{The mDAGs of 4 observed nodes whose \interestingness remains unresolved.}
	\label{table3}
\end{table}

\subsection{On the potential \interestingness of the remaining 3 mDAGs}

The 3 as-yet unresolved mDAGs with 4 observed nodes are depicted in Table~\ref{table3}. For these 3 mDAGs we could not find \emph{any} incompatible supports, at least up to the small cardinalities of the observed nodes that we checked. Searching for incompatible supports at higher cardinalities of the observed nodes using Fraser's algorithm is computationally expensive, as the algorithm's complexity increases significantly on increasing the cardinalities. Perhaps future acceleration of Fraser's algorithm may allow us to probe supports for higher cardinalities. For the present work, however, we considered one final attempt to prove\footnote{The entropic technique is predicated on trying to prove that the Shannon-type entropic inequalities constraining $\mathcal{C}_G$ define a hypercone strictly in the interior of the hypercone given by the Shannon-type entropic inequalities constraining $\mathcal{I}_G$. Even when such a finding can be established, however, it need not mean that $\mathcal{C}_G\subsetneq \mathcal{I}_G$, though that is \emph{almost certainly} the case. The loophole --- however implausible --- is that perhaps the \emph{true} entropy cone constraining $\mathcal{I}_G$ actually coincides with (or is interior to) the projection of the Shannon cone associated with $\mathcal{C}_G$. See Appendix E of Ref.~\cite{HLP_2014}.} the \interestingness of these 3 mDAGs, namely, by exploring entropic inequalities. For a more comprehensive introduction to entropic inequalities and Shannon cones see Refs.~\cite{Weilenmann2017, Miklin2017}.

In particular, we attempted to isolate some Shannon-type inequalities that constrain $\mathcal{C}_G$, but are not Shannon-type inequalities for $\mathcal{I}_G$. This analysis is also computationally expensive --- often intractable --- as generating the Shannon-type inequalities corresponding to $\mathcal{C}$ is accomplished via linear quantifier elimination. The complexity of the most common algorithm for performing linear quantum elimination is doubly exponential in the number of eliminated variables, though alternative algorithms have different complexities~\cite{PolytopeProjectionReview}.

We can nevertheless certify that the Shannon cone corresponding to $\mathcal{C}_G$ and $\mathcal{I}_G$ are indistinguishable for these 3 remaining mDAGs \emph{without} explicitly constructing the Shannon cone for $\mathcal{C}_G$. We do so as follows:
\begin{enumerate}
	\item From all the Shannon type inequalities corresponding to $\mathcal{I}_G$, generate the extremal rays of this cone.
	\item Check whether each of those extremal rays is \emph{implicitly} contained in the Shannon cone corresponding to $\mathcal{C}_G$ by asking if the Shannon-type inequalities over \emph{all} the variables (thus not using linear quantifier elimination) corresponding to $\mathcal{C}$ are \emph{satisfiable} by the given extremal ray. If every extremal ray of the Shannon cone of $\mathcal{I}_G$ is contained in the Shannon cone corresponding to $\mathcal{C}_G$, then those two Shannon cones coincide.
\end{enumerate}

For the 3 mDAGs in Table~\ref{table3} we find that the Shannon cones corresponding to $\mathcal{C}_G$ and $\mathcal{I}_G$ are the same. That is, we were unable to find any valid Shannon type inequality for $\mathcal{C}_G$ that is not also a Shannon type inequality for $\mathcal{I}_G$ for these 3 mDAGs. Thus, entropic methods are incapable of proving the \interestingness of these 3 mDAGs, unless perhaps we explore non-Shannon-type inequalities or entropic inequalities involving non-Shannon entropies.

\section{Conclusions}
\label{con}
In this work, we contributed to causal investigation by categorizing which causal structures of 4 observed nodes present inequality constraints or not. To do so, we developed a plethora of techniques to prove that a causal structure is \interesting (has inequality constraints), while we used one single technique to prove that a causal structure is \noninteresting: the \HLP criterion.

As can be seen from Table~\ref{tab:introsummary}, out of the 2809 mDAGs with 4 observed nodes, the \HLP criterion shows that that 1813 are \noninteresting. Out of the remaining 996 mDAGs, our techniques showed 993 of them to be \interesting, while we are still uncertain about the status of 3 mDAGs (presented in Figure~\ref{table3}). While these 3 remaining mDAGs are still potential counter-examples to the \HLP conjecture (which says that the \HLP criterion is necessary and sufficient for \noninterestingness), we believe that our numerical results are a hint towards the validity of this conjecture. A truly thorough analysis of all mDAGs with 5 observed nodes proved to be quite computationally demanding. Nevertheless, in Appendix~\ref{app:5nodes} we show that --- among those 5-node mDAGs which the \HLP criterion fails to certify as \interesting --- \emph{at least} 99\% are \interesting, which we again elect to interpret as at least \emph{consistent} with the \HLP conjecture.

It is also interesting to note that all of our techniques to show \interestingness give explicit constructions of distributions which are in $\mathcal{I}_G$ but not in $\mathcal{C}_G$, i.e., respect the conditional independence constraints of DAG $G$ but not its inequality constraints. Theorem~\ref{thm:elie_esep} is related to the construction of Eq.~\ref{explicit_distribution_esep}, as well as Theorem~\ref{thm_setwise_nonmaximal} is related to the construction of Eq.~\ref{setwise_correlation}. Theorem~\ref{thm:rapid_supports} is constructive whenever there \emph{is} a latent-free DAG $H$ with the same set of \(d\)-separation relations as $G$ (such construction can then be found in the proof of Lemma~\ref{lem:boring_supports}). If $G$ is maximal but there is \emph{no} latent-free DAG $H$ with the same set of \(d\)-separation relations as $G$, then Theorem~\ref{thm:rapid_dsep_for_interesting} says that $G$ has one of the eighteen DAGs of Figure~\ref{fig:d_sep_patterns_compact} as a subgraph. In the end of Appendix~\ref{app_rapid_dsep}, an explicit distribution which is in $\mathcal{I}_G$ but not in $\mathcal{C}_G$ for these eighteen DAGs is presented: it is the uniform distribution over the events in the Popescu-Rohrlich support presented in Eq.~\eqref{eq_support_Bell}.

In particular, our Theorem~\ref{thm:elie_esep} which showed itself to be very powerful in proving \interestingness, is a corrected version of the \(e\)-separation theorem of~\cite{Pienaar2017} (as discussed in Appendix~\ref{app:pienaar}).

By showing practical tools to attest that a causal structure presents inequality constraints, this work simultaneously contributes to purely classical causal inference and advances the question of which causal scenarios might exhibit quantum or post-quantum advantage.

\section*{Acknowledgements}
S.K would like to thank Roger Colbeck for insightful discussions. Research at Perimeter Institute is supported in part by the Government of Canada through the Department of Innovation, Science and Economic Development and by the Province of Ontario through the Ministry of Colleges and Universities.

\newpage
\addcontentsline{toc}{section}{References}
\bibliographystyle{apsrev4-2-wolfe}
\setlength{\bibsep}{3pt plus 3pt minus 2pt}
\nocite{apsrev42Control}
\bibliography{refs}

\begin{appendices}
	\section{Pienaar's \(e\)-separation theorem is incorrect}\label{app:pienaar}

	In Ref.~\cite[Lemma 2]{Pienaar2017} Pienaar presented the following claim, which we demonstrate to be incorrect:
	\begin{theorem}[\textbf{Pienaar's incorrect \(e\)-separation theorem}]
		\label{th3}
		Let $G$ be a DAG, and let $\boldsymbol{X}$, $\boldsymbol{Y}$, $\boldsymbol{Z}$ and $\boldsymbol{W}$ be disjoint sets of nodes of $G$ such that none of the nodes of $\boldsymbol{Z}$ is a descendant of a node in $\boldsymbol{W}$ and $(\boldsymbol{X}\eperp \boldsymbol{Y}|\boldsymbol{Z})_{del_{\boldsymbol{W}}}$. Then $G$ is \interesting \emph{if and only if} the \(d\)-separation relations of $G$ \emph{do not} include any relations of the form $(\boldsymbol{X}\perp \boldsymbol{Y} |\boldsymbol{Z}\boldsymbol{S})$, where $\boldsymbol{S}$ is a subset of $\boldsymbol{W}$ .
	\end{theorem}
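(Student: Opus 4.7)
The statement is explicitly flagged as incorrect by the authors, so the natural task is to exhibit a counterexample rather than a positive proof. The if-and-only-if has two directions, and I would target the ``only if'' (\interesting $\implies$ no $d$-separation of the allowed form), because it makes the overreaching claim that a specific local $d$-separation structure must be absent whenever the global set of compatible distributions is genuinely semi-algebraic; the ``if'' direction, being a would-be sufficient condition, is more plausible and is already nearly captured by the corrected Theorem~\ref{new_e_sep}.

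My plan is to take $G$ to be the Bell DAG (Figure~\ref{fig_Bell_DAG}), which is well known to be \interesting and enjoys a rich $d$-separation structure on the observed nodes. Writing the settings as $X,Y$ and the outcomes as $A,B$, I would pick $\boldsymbol{X}=\{X\}$, $\boldsymbol{Y}=\{B\}$, $\boldsymbol{Z}=\{Y\}$, $\boldsymbol{W}=\{A\}$ and check Pienaar's three hypotheses: the four sets are pairwise disjoint; the only descendant of $A$ is $A$ itself (it is a sink), so $\boldsymbol{Z}$ contains no descendant of $\boldsymbol{W}$; and deleting $A$ leaves the setting $X$ isolated from every other observed node, so $(X \eperp B \mid Y)_{del_{\{A\}}}$ holds.

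The decisive observation is that the $d$-separation $(X \perp B \mid Y)$ already holds in the original Bell DAG: the only observed-to-observed path from $X$ to $B$ runs through the collider $A$ via $X \to A \leftarrow \Lambda \to B$, that collider is not conditioned upon, and conditioning on the setting $Y$ does not open it. This $d$-separation is of the form $(\boldsymbol{X}\perp \boldsymbol{Y}\mid \boldsymbol{Z}\cup \boldsymbol{S})$ with $\boldsymbol{S}=\emptyset\subseteq \boldsymbol{W}$, so Pienaar's theorem would force the Bell DAG to be \noninteresting, contradicting Bell's theorem.

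The main obstacle is making the counterexample robust rather than degenerate: the easier choice $\boldsymbol{W}=\emptyset$ reduces $e$-separation to ordinary $d$-separation and might be dismissed as a wording oversight, so the construction above deliberately keeps $\boldsymbol{W}$ nonempty while still satisfying the no-descendants hypothesis nontrivially. The appendix would then diagnose what went wrong conceptually: the intuition that a ``surplus'' $e$-separation certifies \interestingness is sound as a one-way implication, and is recovered by Theorem~\ref{thm:elie_esep} together with Theorem~\ref{new_e_sep}, but it cannot be upgraded to an iff because \interestingness is a global property of $G$ that is in no way obstructed by the presence of extra $d$-separations such as $(X \perp B \mid Y)$ exhibited in the example above.
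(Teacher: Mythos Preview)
Your refutation of the ``only if'' direction via the Bell DAG is correct and matches the paper's own approach: the paper likewise cites the Bell DAG (alongside Figure~\ref{fig_maximal}(a)) as a \interesting graph for which the hypotheses are met yet a $d$-separation of the forbidden form is present, and your explicit choice $\boldsymbol{X}{=}\{X\},\ \boldsymbol{Y}{=}\{B\},\ \boldsymbol{Z}{=}\{Y\},\ \boldsymbol{W}{=}\{A\}$ fleshes out what the paper leaves implicit. The paper even remarks that the ``only if'' was likely a wording oversight on Pienaar's part.

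Where you part ways with the paper is in your treatment of the ``if'' direction. You deem it ``more plausible'' and ultimately describe the underlying one-way implication as ``sound,'' but the paper devotes most of its appendix to showing that the ``if'' direction is \emph{also false as stated}. Their counterexample (Figure~\ref{figaddnew}(b)) is simple and systematic: take any DAG on which Pienaar's hypotheses and condition hold, then declare every latent node to be observed. The resulting latent-free DAG still satisfies $(\boldsymbol{X}\eperp\boldsymbol{Y}\mid\boldsymbol{Z})_{\text{del}_{\boldsymbol{W}}}$ and still lacks every $d$-separation $(\boldsymbol{X}\perp\boldsymbol{Y}\mid\boldsymbol{Z}\boldsymbol{S})$ with $\boldsymbol{S}\subseteq\boldsymbol{W}$, so Pienaar's theorem would declare it \interesting---yet latent-free DAGs are always \noninteresting. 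Concretely, in Figure~\ref{figaddnew}(b) one has $(F\eperp D\mid C)_{\text{del}_E}$ and neither $(F\perp D\mid C)$ nor $(F\perp D\mid CE)$, but the now-observed node $B$ supplies the extra $d$-separation $(F\perp D\mid BCE)$ that Pienaar's condition never checks. The flaw in Pienaar's proof is precisely that the distribution he invokes can violate a $d$-separation $(\boldsymbol{X}\perp\boldsymbol{Y}\mid\boldsymbol{S}')$ with $\boldsymbol{S}'$ not of the restricted form $\boldsymbol{Z}\cup\boldsymbol{S}$, and hence can fail to lie in $\mathcal{I}_G$. The paper's corrected Theorem~\ref{th3_corrected} therefore strengthens the hypothesis to exclude $(\boldsymbol{X}\perp\boldsymbol{Y}\mid\boldsymbol{S})$ for \emph{every} observed $\boldsymbol{S}$, at which point the condition collapses to nonmaximality (Theorem~\ref{thm:elie_esep}) rather than to anything as strong as Theorem~\ref{new_e_sep}.
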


	If this theorem were true, then the problem of classifying \interestingness would be completely solved whenever a suitable \(e\)-separation is present, as Theorem~\ref{th3} claims to provides a necessary and sufficient condition for such DAGs that is simple to check. However, the condition turns out to be \emph{neither} necessary \emph{nor} sufficient.

	Firstly, note that there are plenty of well-known graphs which are \interesting despite not satisfying the conditions of Theorem~\ref{th3}. Examples include the Bell DAG of Figure~\ref{fig_Bell_DAG}, as well as the mDAG in Figure~\ref{fig_maximal}a, among many others.

	Pienaar notably included similar examples in his own work~\cite{Pienaar2017}, and therefore we believe the inclusion of the \enquote{only if} language in Theorem~\ref{th3} was an oversight, in that Pienaar himself never actually intended to communicate that, but rather only that the \(e\)-separation relation ${(\boldsymbol{X}\eperp \boldsymbol{Y}|\boldsymbol{Z})_{del_{\boldsymbol{W}}}}$ would automatically follow from the \(d\)-separation relation ${(\boldsymbol{X}\perp \boldsymbol{Y} |\boldsymbol{Z}\boldsymbol{S})}$ if that \(d\)-separation relation was present in the graph and where $\boldsymbol{S}$ is a subset of $\boldsymbol{W}$.

	Regardless, the \enquote{if} direction in Theorem~\ref{th3} \emph{also} turns out to be invalid, though it is a bit more subtle. Consider the DAG in Figure~\ref{figaddnew}a; Pienaar~\cite{Pienaar2017} uses it as an example of a DAG that is deemed (in this case, \emph{correctly} deemed) as \interesting pursuant to Theorem~\ref{th3}. In this DAG, $(F \eperp D | C)_{del_E}$ holds, and node $C$ is not a descendant of node $E$, and neither $(F \perp D | C)$ nor $(F \perp D |CE)$ holds true. Therefore, Theorem~\ref{th3}  classifies the DAG of Figure~\ref{figaddnew}a as \interesting. But now consider the DAG of Figure~\ref{figaddnew}b which has the same \emph{structure} as the DAG in Figure~\ref{figaddnew}a, with the only difference being that the DAG in Figure~\ref{figaddnew}b has no latent variables. All the conditions of Theorem~\ref{th3} are again met, however, when applied to Figure~\ref{figaddnew}.  Again, $(F \eperp D | C)_{del_E}$ holds, and node $C$ is not a descendant of node $E$, and neither $(F \perp D | C)$ nor $(F \perp D |CE)$  hold true. So this DAG is again --- but this time, wrongly --- characterised as \interesting by Pienaar's theorem. As previously discussed, all latent-free DAGs are \noninteresting!

	In fact, \emph{every} DAG for which the conditions of Theorem~\ref{th3} hold can be converted into a different latent-free DAG for which the conditions of the theorem would continue to hold by making all the nodes observed. That is, for every DAG which is correctly classified as \interesting by the (invalid) condition formulated as Theorem~\ref{th3} one can construct a latent-free counterexample to Theorem~\ref{th3}.

	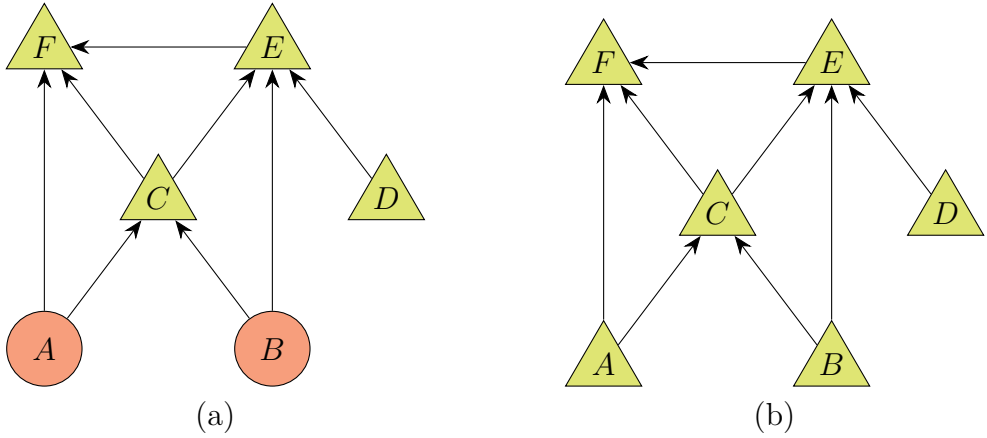
\begin{figure}[h!]
		\centering
		\begin{tabular}{ccc}
			\begin{tikzpicture}
				\node[q](A) at (0,0){$A$};
				\node[q](B) at (3,0){$B$};
				\node[c](D) at (4.5,2){$D$};
				\node[c](C) at (1.5,2){$C$}
				edge[e] (A)
				edge[e] (B);
				\node[c](E) at (3,4){$E$}
				edge[e] (B)
				edge[e] (C)
				edge[e] (D);
				\node[c](D) at (0,4){$F$}
				edge[e] (A)
				edge[e] (C)
				edge[e] (E);
			\end{tikzpicture} & \hspace{1cm} &

			\begin{tikzpicture}
				\node[c](A) at (0,0){$A$};
				\node[c](B) at (3,0){$B$};
				\node[c](D) at (4.5,2){$D$};
				\node[c](C) at (1.5,2){$C$}
				edge[e] (A)
				edge[e] (B);
				\node[c](E) at (3,4){$E$}
				edge[e] (B)
				edge[e] (C)
				edge[e] (D);
				\node[c](D) at (0,4){$F$}
				edge[e] (A)
				edge[e] (C)
				edge[e] (E);
			\end{tikzpicture} \\
			(a)                         &              & (b)
		\end{tabular}
		\caption{(a) depicts a scenario characterised correctly by Pienaar's theorem as \interesting, whereas (b) depicts an \noninteresting scenario which Pienaar's theorem incorrectly deems \interesting.}
		\label{figaddnew}
	\end{figure}

	The flaw with Pienaar's proof of the \enquote{if} direction is that it invokes a probability distribution that may not actually be in $\mathcal{I}$, in that the constructed distribution may be inconsistent with some \(d\)-separation relation \emph{not mentioned in the statement of Theorem~\ref{th3}}. For example, in Figure~\ref{figaddnew}, Pienaar's proof would invoke a distribution which posits perfect correlation between $D$ and $F$ while all other variables are point-distributed. While such a distribution is consistent with \emph{all} the observable \(d\)-separation relations of Figure~\ref{figaddnew}a, it is nevertheless \emph{inconsistent} with the \(d\)-separation relation $(F \perp D|BCE)$ exhibited by Figure~\ref{figaddnew}b.

	This loophole in Pienaar's proof can be closed by \emph{strengthening the conditions of Pienaar's theorem}, namely, to exclude $(F \perp D | \boldsymbol{S})$ for \emph{any} subset $\boldsymbol{S}$ of the the remaining observed nodes. In other words,
	\begin{theorem}[\textbf{A corrected version of Pienaar's \(e\)-separation theorem}]
		\label{th3_corrected}
		Let $G$ be a DAG, and let $\boldsymbol{X}$, $\boldsymbol{Y}$, $\boldsymbol{Z}$ and $\boldsymbol{W}$ be disjoint sets of nodes of $G$ such that none of the nodes of $\boldsymbol{Z}$ is a descendant of a node in $\boldsymbol{W}$ and $(\boldsymbol{X}\eperp \boldsymbol{Y}|\boldsymbol{Z})_{del_{\boldsymbol{W}}}$. Then $G$ is \interesting if the \(d\)-separation relations of $G$ \emph{do not} include any relations of the form $(\boldsymbol{X}\perp \boldsymbol{Y} |\boldsymbol{S})$, where $\boldsymbol{S}$ may be any subset of the observed nodes of $G$ other than $\boldsymbol{X}\cup\boldsymbol{Y}$.
	\end{theorem}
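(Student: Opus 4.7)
The plan is to construct a distribution $P$ witnessing the \interestingness of $G$, i.e., with $P \in \mathcal{I}_G \setminus \mathcal{C}_G$, by adapting Pienaar's original witness construction so that the strengthened hypothesis closes the gap exposed by the counterexample of Figure~\ref{figaddnew}(b). I would take $P$ to be the distribution in which every observed variable outside $\boldsymbol{X} \cup \boldsymbol{Y}$ (in particular every node in $\boldsymbol{Z}$ and $\boldsymbol{W}$) is point-distributed at $0$, while $\boldsymbol{X}$ and $\boldsymbol{Y}$ are jointly correlated in some unfactorizable way --- the cleanest choice, following Pienaar, being to let every coordinate of $\boldsymbol{X} \cup \boldsymbol{Y}$ equal a common uniform bit $B$.

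To show $P \notin \mathcal{C}_G$, I would invoke Lemma~\ref{lem:point_distribution_subgraph}: since every non-$(X,Y)$ observed variable is point-distributed, classical compatibility of $P$ with $G$ is equivalent to compatibility of the marginal $P(\boldsymbol{X}, \boldsymbol{Y})$ with the subgraph $G'$ of $G$ formed by deleting all observed nodes outside $\boldsymbol{X} \cup \boldsymbol{Y}$. The hypothesized \(e\)-separation $(\boldsymbol{X} \eperp \boldsymbol{Y} \mid \boldsymbol{Z})_{del_{\boldsymbol{W}}}$ asserts that $\boldsymbol{X} \perp \boldsymbol{Y} \mid \boldsymbol{Z}$ holds in $G_{del_{\boldsymbol{W}}}$. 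A direct inspection of the blocking criterion shows that further deleting $\boldsymbol{Z}$ promotes this to the unconditional \(d\)-separation $\boldsymbol{X} \perp \boldsymbol{Y}$ in $G'$: any remaining path from $\boldsymbol{X}$ to $\boldsymbol{Y}$ avoiding $\boldsymbol{Z}$ must already have been blocked in $G_{del_{\boldsymbol{W}}}$ by a collider whose descendants lie outside $\boldsymbol{Z}$ (this is where the descendant hypothesis on $\boldsymbol{Z}, \boldsymbol{W}$ enters, ruling out any subtle collider-opening), and such a collider continues to block unconditionally after $\boldsymbol{Z}$ is deleted. Compatibility of $P(\boldsymbol{X},\boldsymbol{Y})$ with $G'$ would thus force $\boldsymbol{X} \dbot \boldsymbol{Y}$, contradicting the designed perfect correlation.

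The main obstacle is verifying $P \in \mathcal{I}_G$. Every observed \(d\)-separation $(\boldsymbol{A} \perp \boldsymbol{B} \mid \boldsymbol{C})$ of $G$ must translate into a CI that $P$ respects. Because everything outside $\boldsymbol{X} \cup \boldsymbol{Y}$ is point-distributed, the only potentially problematic relations are those in which $\boldsymbol{A}$ and $\boldsymbol{B}$ both meet $\boldsymbol{X} \cup \boldsymbol{Y}$ while $\boldsymbol{C}$ does not. By semigraphoid decomposition, any such relation with $\boldsymbol{X} \subseteq \boldsymbol{A}$ and $\boldsymbol{Y} \subseteq \boldsymbol{B}$ implies a \(d\)-separation of the form $(\boldsymbol{X} \perp \boldsymbol{Y} \mid \boldsymbol{C})$, which is precisely what the strengthened hypothesis forbids for any observed $\boldsymbol{C}$ disjoint from $\boldsymbol{X}\cup\boldsymbol{Y}$. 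This is the step where Pienaar's original theorem breaks: his weaker hypothesis ruled out only $\boldsymbol{C}$ of the form $\boldsymbol{Z} \cup \boldsymbol{S}$ with $\boldsymbol{S} \subseteq \boldsymbol{W}$, whereas the Figure~\ref{figaddnew}(b) example features $\boldsymbol{C} = \{B,C,E\}$ witnessing $(F \perp D \mid BCE)$ outside that restricted class. A residual technical subtlety is that $G$ may admit \(d\)-separations $(\boldsymbol{X}' \perp \boldsymbol{Y}' \mid \boldsymbol{C})$ for \emph{proper} subsets $\boldsymbol{X}' \subsetneq \boldsymbol{X}$ and $\boldsymbol{Y}' \subsetneq \boldsymbol{Y}$, which the rigid perfect-correlation distribution can inadvertently violate; the remedy is to refine $P$ so that each coordinate of $\boldsymbol{X} \cup \boldsymbol{Y}$ is an independently noised copy of the shared bit $B$, with noise parameters chosen so that every subset-level CI implied by $G$ continues to hold while set-level dependence is preserved. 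Carefully calibrating this refined construction, and ruling out each potentially obstructive \(d\)-separation, is the technical heart of the proof.
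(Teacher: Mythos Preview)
Your argument that $P \notin \mathcal{C}_G$ is sound (and in fact the descendant hypothesis on $\boldsymbol{Z}, \boldsymbol{W}$ is not actually needed there: once all observed nodes outside $\boldsymbol{X} \cup \boldsymbol{Y}$ are deleted, every surviving path avoids $\boldsymbol{Z}$ entirely, and any collider on it blocks given $\emptyset$ regardless). The real gap is in the $P \in \mathcal{I}_G$ direction. The hypothesis only forbids \(d\)-separations of the \emph{full} sets $\boldsymbol{X}$ from $\boldsymbol{Y}$; it says nothing about pairwise relations $(X_i \perp Y_j \mid \boldsymbol{C})$ for individual $X_i \in \boldsymbol{X}$, $Y_j \in \boldsymbol{Y}$, nor about intra-set relations $(X_i \perp X_{i'} \mid \boldsymbol{C})$, with $\boldsymbol{C}$ disjoint from $\boldsymbol{X} \cup \boldsymbol{Y}$. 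Your noise remedy cannot accommodate such a relation: if every coordinate is an independently noised copy of a shared bit $B$, then \emph{any two} coordinates remain correlated through $B$ at every nondegenerate noise level, and since $\boldsymbol{C}$ is point-distributed the conditioning is vacuous. No calibration of noise parameters can decouple a pair while keeping both tied to $B$; the only escape is to set all but one carefully chosen cross-pair to constants.

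That collapse to a single pair is exactly how the paper handles the theorem. Rather than proving it directly, the paper argues that the hypotheses of Theorem~\ref{th3_corrected} force $G$ to be nonmaximal: the \(e\)-separation ensures every cross-pair $(X_i, Y_j)$ is nonadjacent, while the failure of every set-level relation $(\boldsymbol{X} \perp \boldsymbol{Y} \mid \boldsymbol{S})$ is used to extract some cross-pair that is also \(d\)-unseparable. Theorem~\ref{th3_corrected} then follows immediately from Theorem~\ref{thm:elie_esep}, whose witness (Eq.~\eqref{explicit_distribution_esep}) is precisely the singleton version of your construction. The paper in fact asserts the two theorems are \emph{equivalent}, so the set-level formulation buys nothing beyond the pairwise one --- which is why your attempt to work at the set level runs into obstructions that evaporate once one passes to singletons.
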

	However, a graph can only exclude relations of the form $\boldsymbol{X}\perp \boldsymbol{Y} |\boldsymbol{S}$ for any $\boldsymbol{S}$ if for every pair of singleton nodes $\{X,Y\}$ such that $X\in\boldsymbol{X}$ and $Y\in \boldsymbol{Y}$ it holds that $(X\perp Y|\boldsymbol{S})$. On the other hand, a pair of nodes $\{X,Y\}$ can only be \(e\)-separated by $\boldsymbol{Z}$ upon the removal of $\boldsymbol{W}$ if $X$ and $Y$ are not \emph{adjacent}. Consequently, a DAG can \emph{only} be certified as \interesting by Theorem~\ref{th3_corrected} if there exists a pair of \emph{\(d\)-unseparable} but nevertheless non-adjacent nodes. Which is to say, Theorem~\ref{th3_corrected} is ultimately \emph{equivalent}\footnote{Plainly Theorem~\ref{thm:elie_esep} can be seen as a special case of Theorem~\ref{th3_corrected}, by taking $\boldsymbol{Z}$ to be empty, choosing  $\boldsymbol{X}$ and $\boldsymbol{Y}$ to be singleton-sized sets, and letting $\boldsymbol{W}$ be the set of all observed nodes other than $\boldsymbol{X}$ and $\boldsymbol{Y}$.} to Theorem~\ref{thm:elie_esep} of the main text, albeit obfuscated.

	\section{Revisiting the Skeleton Condition}\label{app:skeleton}

	In the main test we noted (Theorem~\ref{thm:elie_esep}) that every nonmaximal DAG is \interesting. Here we revisit that finding, and contrast is with a prior result due to Pienaar~\cite{Pienaar2017}.

	Firstly, we recall that the set of adjacent node pairs in a DAG is conventionally referred to as the DAG's \emph{skeleton}.
	\begin{definition}[\textbf{Skeleton of a DAG.}]
		The skeleton of a DAG $G$ is the undirected graph wherein $A$ and $B$ are adjacent in $G$'s skeleton --- denoted $A\undirectededge B$  --- if and only if $A$ and $B$ are adjacent in $G$.
	\end{definition}

	Notably, the skeleton of a DAG can be used to witness observational inequivalence. In particular, Evans has shown that if a pair of nodes is adjacent in some mDAG $G$ but not in some other mDAG $H$, then $\mathcal{C}_G\neq\mathcal{C}_H$~\cite{Evans2015, Evans2012}. We formally express this idea here as:
	\begin{lemma}[Skeleton condition for observational equivalence and dominance]
		\label{lem:skeleton_equivalence}
		For any pair of DAGs $G$ and $H$, $\mathcal{C}_G=\mathcal{C}_H$ only if the skeletons of $G$ and $H$ are the same, i.e., they agree on all observable node (non)adjacencies. Additionally, $C_G\subset C_H$ only if every pair of adjacent nodes in $G$ is also adjacent in $H$.
	\end{lemma}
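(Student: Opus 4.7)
The plan is to prove both parts of the lemma as almost immediate corollaries of Proposition~\ref{prop:pairwise_adjacency}, which supplies, for each pair of observed nodes $\{A,B\}$, a single distribution whose classical compatibility with a DAG $G$ detects precisely the adjacency of $A$ and $B$ in $G$. That is, the pair-perfect-correlation-everything-else-point distribution $P_{\text{\eqref{explicit_distribution_esep}}}$ serves as a membership witness separating DAGs with different adjacency structure on any chosen pair. So the strategy for both statements will be to use contraposition: assume the skeleton condition fails, then exhibit an $\{A,B\}$-witness distribution that distinguishes $\mathcal{C}_G$ and $\mathcal{C}_H$ in the appropriate direction.

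For the first statement, I would assume the skeletons of $G$ and $H$ disagree, and pick a pair $\{A,B\}$ which is adjacent in one of the two DAGs but not the other. Without loss of generality assume $A$ and $B$ are adjacent in $G$ but not in $H$. Then by Proposition~\ref{prop:pairwise_adjacency} the distribution $P_{\text{\eqref{explicit_distribution_esep}}}$ constructed on this pair lies in $\mathcal{C}_G$ but not in $\mathcal{C}_H$, so $\mathcal{C}_G \neq \mathcal{C}_H$. The contrapositive is exactly the required implication $\mathcal{C}_G = \mathcal{C}_H \Rightarrow$ same skeleton.

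For the second statement, I would suppose $\mathcal{C}_G \subseteq \mathcal{C}_H$ and, for contradiction, that some pair $\{A,B\}$ is adjacent in $G$ yet not adjacent in $H$. The same construction applies: by Proposition~\ref{prop:pairwise_adjacency}, the associated $P_{\text{\eqref{explicit_distribution_esep}}}$ lies in $\mathcal{C}_G$, hence in $\mathcal{C}_H$ by inclusion, and yet does not lie in $\mathcal{C}_H$ since $\{A,B\}$ is non-adjacent in $H$, contradicting Proposition~\ref{prop:pairwise_adjacency}. So every edge of the skeleton of $G$ must also be an edge of the skeleton of $H$, as desired.

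There is essentially no obstacle: the entire content is packaged into Proposition~\ref{prop:pairwise_adjacency}, and this lemma is a pairwise-reading of that witness. The only mild care needed is to note that the pair-witness distribution $P_{\text{\eqref{explicit_distribution_esep}}}$ is defined with \emph{all other} observed variables point-distributed, so the same single distribution simultaneously satisfies the compatibility condition in one DAG and fails it in the other without any further adjustments being needed for the rest of the graph. Also worth remarking is that the second statement is strictly stronger than the first: specializing to $\mathcal{C}_G \subseteq \mathcal{C}_H$ and $\mathcal{C}_H \subseteq \mathcal{C}_G$ simultaneously recovers the first claim, so one could alternatively present the second statement first and deduce the first as a corollary.
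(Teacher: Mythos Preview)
Your proof is correct and aligns with the paper's treatment: the paper does not give an explicit proof of this lemma but states it as a result of Evans~\cite{Evans2015, Evans2012}, and the key mechanism the paper invokes is precisely Proposition~\ref{prop:pairwise_adjacency}, the pairwise-adjacency witness distribution of Eq.~\eqref{explicit_distribution_esep}, which is exactly what you use. Your observation that the second claim subsumes the first is also apt.
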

	As with any necessary conditional for observational equivalence, Lemma~\ref{lem:skeleton_equivalence} can then be utilized to formulate a sufficient condition for \interestingness. Pienaar did exactly that when he formulated the following Theorem in Ref.~\cite[Theorem 1]{Pienaar2017}:
	\begin{theorem}[Pienaar's skeleton condition for \interestingness]\label{thm:original_skel}
		Let $G$ be the DAG we need to check for \interestingness. Suppose that one can find some other DAG $H$ with the same observed conditional independences as $G$ and which is certainly known to be \noninteresting, i.e. $\mathcal{I}_G=\mathcal{I}_H$ and $\mathcal{I}_H=\mathcal{C}_H$. Then, if the skeletons of $G$ and $H$ are different, evidently $\mathcal{C}_G\neq \mathcal{C}_H$ per Lemma \ref{lem:skeleton_equivalence}, and hence $\mathcal{C}_G\neq \mathcal{I}_G$, i.e. DAG $G$ is \interesting.
	\end{theorem}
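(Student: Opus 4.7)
The plan is a three-line chain of implications: the statement essentially telegraphs its own proof, and the only nontrivial ingredient has already been offloaded to Lemma~\ref{lem:skeleton_equivalence}. First I would invoke that lemma in its contrapositive form on the pair $(G,H)$: the hypothesized skeleton mismatch immediately yields $\mathcal{C}_G \neq \mathcal{C}_H$. This is the sole step that uses any machinery beyond definition-chasing; everything that follows is purely set-theoretic bookkeeping.

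Next I would chain the two equality hypotheses: $\mathcal{I}_G = \mathcal{I}_H$ by assumption, and $\mathcal{I}_H = \mathcal{C}_H$ because $H$ is taken to be \noninteresting. Substituting gives $\mathcal{I}_G = \mathcal{C}_H$. Suppose for contradiction that $G$ were also \noninteresting; then $\mathcal{C}_G = \mathcal{I}_G = \mathcal{C}_H$, contradicting the first step. Hence $\mathcal{C}_G \neq \mathcal{I}_G$, and since Theorem~\ref{th_dsep_latent_permitting} always delivers $\mathcal{C}_G \subseteq \mathcal{I}_G$, this inequality upgrades to the strict containment $\mathcal{C}_G \subsetneq \mathcal{I}_G$, which is the definition of $G$ being \interesting.

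There is no genuine obstacle at the level of this theorem itself; all the conceptual content lives inside Lemma~\ref{lem:skeleton_equivalence}, which the paper imports from Evans~\cite{Evans2015, Evans2012}. That lemma is in turn witnessed by the ``perfect correlation with everything else point-distributed'' construction underlying Proposition~\ref{prop:pairwise_adjacency}: such a distribution certifies that a pair of nodes adjacent in $G$ but not in $H$ gives rise to a distribution in $\mathcal{C}_G \setminus \mathcal{C}_H$, and vice versa. I would simply cite the lemma rather than re-derive it, and the theorem follows in a single paragraph.
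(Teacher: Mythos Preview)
Your proposal is correct and follows exactly the same chain of implications that the paper embeds directly into the theorem statement itself (the paper does not even supply a separate proof environment here, since the argument is entirely contained in the clause ``evidently $\mathcal{C}_G\neq \mathcal{C}_H$ per Lemma~\ref{lem:skeleton_equivalence}, and hence $\mathcal{C}_G\neq \mathcal{I}_G$''). Your additional remarks about the containment $\mathcal{C}_G\subseteq\mathcal{I}_G$ and the underlying witness distribution for Lemma~\ref{lem:skeleton_equivalence} are accurate elaborations but not strictly needed beyond what the paper already records.
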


	Theorem~\ref{thm:original_skel} is both correct and useful. Note, however, that it is \emph{superseded} by the Theorem~\ref{thm:elie_esep} presented in the main text. Suppose that there \emph{does} exists some latent-free DAG $H$ that agrees with the observed \(d\)-separation relations of $G$. Then, among other things, $G$ and $H$ evidently agree on \(d\)-unseparable observed node pairs. Since the skeleton of $H$ is defined by $H$'s \(d\)-unseparable node pairs~\cite[Prop. 3.19]{Richardson2002}, the only way for the skeletons of $G$ and $H$ to differ is if $G$ contains some pair of \emph{nonadjacent} \(d\)-unseparable observed nodes. Consequently, Theorem~\ref{thm:original_skel} can only certify a DAG as \interesting when the DAG's \interestingness is also certifiable by Theorem~\ref{thm:elie_esep}.

	Thus, \emph{both} of Pienaar's conditions for \interestingness are subsumed by Theorem~\ref{thm:elie_esep} here, at least after adjusting Pienaar's condition based on \(e\)-separation as per Appendix~\ref{app:pienaar}.

	We know that there are DAGs for which one cannot find any DAG $H$ with the required properties to make use of Theorem~\ref{thm:original_skel}; see Appendix~\ref{app_rapid_dsep} for examples. In light of Theorem~\ref{equivalence_evans} we would want to reformulate Theorem~\ref{thm:original_skel} in a manner that is clearly (strictly) more powerful, which removes the caveat about finding such an $H$.
	\begin{theorem}[Improved skeleton condition for \interestingness]\label{thm:improved_skel}
		Let $G$ be the DAG we need to check for \interestingness. Consider all latent-free graphs which share the same skeleton as that of $G$. If no latent-free graph within that set furthermore matches the observed \(d\)-separations of $G$, then $G$ is \interesting.
	\end{theorem}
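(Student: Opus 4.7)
The plan is a direct contrapositive argument leveraging the Evans equivalence (Theorem~\ref{equivalence_evans}) together with the two necessary conditions for observational equivalence already established in the paper, namely the skeleton condition (Lemma~\ref{lem:skeleton_equivalence}) and the \(d\)-separation condition (the implication $\mathcal{C}_G=\mathcal{C}_H \Rightarrow \mathcal{I}_G=\mathcal{I}_H$ recorded as Eq.~\eqref{eq_different_I}). Assume for contradiction that $G$ is \noninteresting. Then by Theorem~\ref{equivalence_evans} there exists some latent-free DAG $H$ with $\mathcal{C}_G=\mathcal{C}_H$.

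Next I would extract two consequences from this supposed equivalence. First, applying Lemma~\ref{lem:skeleton_equivalence} tells us that $G$ and $H$ must share the same skeleton, i.e. they agree on every observed adjacency. Second, from $\mathcal{C}_G=\mathcal{C}_H$ we get $\mathcal{I}_G=\mathcal{I}_H$; because $H$ is latent-free, Theorem~\ref{th_latent_free} ensures $\mathcal{I}_H$ is faithfully characterized by the \(d\)-separation relations of $H$, and Theorem~\ref{th_dsep_latent_permitting} does the same for $G$ at the level of observed nodes. Thus $H$ matches the observed \(d\)-separation relations of $G$ exactly.

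Combining these two consequences, $H$ is a latent-free DAG that simultaneously shares both the skeleton and the observed \(d\)-separations of $G$. This contradicts the hypothesis of the theorem, which asserts that no latent-free DAG in the class of graphs sharing the skeleton of $G$ also matches the observed \(d\)-separation relations of $G$. Hence $G$ cannot be classically observationally equivalent to any latent-free DAG, and by Theorem~\ref{equivalence_evans} $G$ is \interesting.

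There is no real technical obstacle here: the theorem is essentially a repackaging of two independently established necessary conditions for observational equivalence into a single joint condition, and then invoking Evans' characterization. The only care required is to make sure both necessary conditions --- matching skeletons and matching observed \(d\)-separations --- are applied to the \emph{same} hypothetical latent-free witness $H$, so that the negation of their conjunction in the hypothesis actually rules out a single unified counterexample rather than two separate ones.
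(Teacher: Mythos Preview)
Your proof is correct and is exactly the argument the paper intends: the paper's own proof is the one-line statement that Theorem~\ref{thm:improved_skel} \enquote{follows directly from combining Lemma~\ref{lem:skeleton_equivalence} and Eq.~\eqref{eq_different_I} with Theorem~\ref{equivalence_evans}}, and you have simply spelled out that combination via the contrapositive. The only (harmless) redundancy is your appeal to Theorems~\ref{th_latent_free} and~\ref{th_dsep_latent_permitting} to pass from $\mathcal{I}_G=\mathcal{I}_H$ to equality of observed \(d\)-separation relations; this is already implicit in the paper's discussion preceding Eq.~\eqref{eq_different_I}.
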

	\begin{proof}Theorem~\ref{thm:improved_skel} follows directly from combining Lemma~\ref{lem:skeleton_equivalence} and Eq.~\eqref{eq_different_I} with Theorem~\ref{equivalence_evans}.
	\end{proof}

	As it turns out, however, Theorem~\ref{thm:improved_skel} is \emph{equivalent} to the conjunction of Theorems~\ref{thm:elie_esep} and ~\ref{thm:dsep_for_interesting}. First, we can see that  Theorem~\ref{thm:elie_esep} is a special case of Theorem~\ref{thm:improved_skel}. If $G$ is nonmaximal, then it has at least one non-adjacent pair of observed nodes which is \(d\)-unseparable. This implies that all the latent-free DAGs which have the same skeleton (same adjacency structure) as $G$ have a different set of \(d\)-separable pairs of nodes than that of $G$, because all latent-free DAGs are maximal~\cite[Prop. 3.19]{Richardson2002}. This then implies that they have a different set of \(d\)-separation relations than those of $G$, so Theorem~\ref{thm:improved_skel} witnesses all nonmaximal DAGs as \interesting. Secondly, it is easy to see that  Theorem~\ref{thm:dsep_for_interesting} is also a special case of Theorem~\ref{thm:improved_skel}: if there are no latent-free DAGs that match the set of \(d\)-separation relations of $G$, then in particular there are no latent-free DAGs that match the the set of \(d\)-separation relations \emph{and} the skeleton of $G$.

	\sloppy Next, we prove the inverse, that Theorems~\ref{thm:elie_esep} and~\ref{thm:dsep_for_interesting} \emph{together} subsume Theorem~\ref{thm:improved_skel}. Let $G$ be a DAG which is shown \interesting by Theorem~\ref{thm:improved_skel}. If $G$ is nonmaximal, then it is also shown \interesting by Theorem~\ref{thm:elie_esep}. What remains, then, is to show that there are no \emph{maximal} DAGs which \emph{cannot} be proven \interesting via Theorem~\ref{thm:dsep_for_interesting} but \emph{can} be seen as \interesting via Theorem~\ref{thm:improved_skel}. But if the DAG $G$ is both maximal \emph{and} shares the same \(d\)-separation relations as some latent-free DAG $H$, it \emph{automatically follows} that $G$ and $H$ agree on their skeletons as well. After all, $G$ and $G$ are \emph{both} maximal, which means that their skeletons are dictated by their respective \(d\)-unseparable node pairs, which are identical.

	Note that Theorem~\ref{thm:improved_skel} is also subsumed by Theorem~\ref{new_e_sep}, since agreeing on \(e\)-separation relations implies agreeing on adjacencies (i.e., skeleton) as well as on \(d\)-separation relations.

	\section{Rapid \(d\)-separation test}
	\label{app_rapid_dsep}

	In Section~\ref{subsec:dsep}, we described a method to show the \interestingness of a DAG when its set of observed \(d\)-separation relations is \NLF, i.e. it does not match those of any latent-free DAG; this is encoded in Theorem~\ref{thm:dsep_for_interesting}. To apply this method in practice, we indicated that one can construct all of the latent-free DAGs that have the same number of observed nodes as the DAG in question, and then compare their sets of \(d\)-separation relations. However, when the DAG is \emph{maximal} (as per Definition~\ref{def_maximal}) there is a faster way to apply this method, which will be described now. When the DAG is \emph{not} maximal, its \interestingness is automatically attested by Theorem~\ref{thm:elie_esep}.

	When the DAG is maximal, to see whether its set of \(d\)-separation relations is \NLF one just needs to check whether it has one of the DAGs of Figure~\ref{fig:d_sep_patterns_compact} as a subgraph. This was recognized in Evans' own proof of Theorem~\ref{equivalence_evans}~\cite{Evans_2022}, as we argue in the proof of the following theorem:

	\begin{flushleft}
		\begin{figure}[h!]
			\centering
			\begin{tabular}{cccc}
				\begin{tikzpicture}[scale=0.55]
					\node[q1](L) at (2,0){\small $\Lambda$};
					\node[c1](S) at (0,-1.5){\small $X$};
					\node[c1](T) at (4,-1.5){\small $Y$};
					\node[c1](X) at (0,1.5){\small $A$}
					edge[e](L)
					edge[e](S);
					\node[c1](Y) at (4,1.5){\small $B$}
					edge[e](L)
					edge[e](T);
				\end{tikzpicture}
				&
				\begin{tikzpicture}[scale=0.55]
					\node[q1](O) at (-1,0){\small $\Omega$};
					\node[q1](L) at (2,0){\small $\Lambda$};
					\node[c1](S) at (0,-1.5){\small $X$}
					edge[e](O);
					\node[c1](T) at (4,-1.5){\small $Y$};
					\node[c1](X) at (0,1.5){\small $A$}
					edge[e](O)
					edge[e](L)
					edge[e](S);
					\node[c1](Y) at (4,1.5){\small $B$}
					edge[e](L)
					edge[e](T);
				\end{tikzpicture}
				&
				\begin{tikzpicture}[scale=0.55]
					\node[q1](O) at (-1,0){\small $\Omega$};
					\node[q1](L) at (2,0){\small $\Lambda$};
					\node[c1](S) at (0,-1.5){\small $X$}
					edge[e](O);
					\node[c1](T) at (4,-1.5){\small $Y$};
					\node[c1](X) at (0,1.5){\small $A$}
					edge[e](O)
					edge[e](L);
					\node[c1](Y) at (4,1.5){\small $B$}
					edge[e](L)
					edge[e](T);
				\end{tikzpicture}
				\\
				(a) & (b) & (c) \\
				\begin{tikzpicture}[scale=0.55]
					\node[q1](O) at (-1,0){\small $\Omega$};
					\node[q1](Z) at (5,0){\small $\Sigma$};
					\node[q1](L) at (2,0){\small $\Lambda$};
					\node[c1](S) at (0,-1.5){\small $X$}
					edge[e](O);
					\node[c1](T) at (4,-1.5){\small $Y$}
					edge[e](Z);
					\node[c1](X) at (0,1.5){\small $A$}
					edge[e](O)
					edge[e](L)
					edge[e](S);
					\node[c1](Y) at (4,1.5){\small $B$}
					edge[e](T)
					edge[e](Z)
					edge[e](L);
				\end{tikzpicture}
				&
				\begin{tikzpicture}[scale=0.55]
					\node[q1](O) at (-1,0){\small $\Omega$};
					\node[q1](Z) at (5,0){\small $\Sigma$};
					\node[q1](L) at (2,0){\small $\Lambda$};
					\node[c1](S) at (0,-1.5){\small $X$}
					edge[e](O);
					\node[c1](T) at (4,-1.5){\small $Y$}
					edge[e](Z);
					\node[c1](X) at (0,1.5){\small $A$}
					edge[e](O)
					edge[e](L);
					\node[c1](Y) at (4,1.5){\small $B$}
					edge[e](T)
					edge[e](Z)
					edge[e](L);
				\end{tikzpicture}
				&
				\begin{tikzpicture}[scale=0.55]
					\node[q1](O) at (-1,0){\small $\Omega$};
					\node[q1](Z) at (5,0){\small $\Sigma$};
					\node[q1](L) at (2,0){\small $\Lambda$};
					\node[c1](S) at (0,-1.5){\small $X$}
					edge[e](O);
					\node[c1](T) at (4,-1.5){\small $Y$}
					edge[e](Z);
					\node[c1](X) at (0,1.5){\small $A$}
					edge[e](O)
					edge[e](L);
					\node[c1](Y) at (4,1.5){\small $B$}
					edge[e](Z)
					edge[e](L);
				\end{tikzpicture}
				\\ (d) & (e) & (f) \\

				\begin{tikzpicture}[scale=0.55]                                        \node[q1](L) at (2,0){\small $\Lambda$};
					\node[c1](S) at (0,-1.5){\small $X$};
					\node[c1](T) at (4,-1.5){\small $Y$}
					edge[e](S);
					\node[c1](X) at (0,1.5){\small $A$}
					edge[e](L)
					edge[e](S);
					\node[c1](Y) at (4,1.5){\small $B$}
					edge[e](L)
					edge[e](T);
				\end{tikzpicture}
				&
				\begin{tikzpicture}[scale=0.55]
					\node[q1](O) at (-1,0){\small $\Omega$};
					\node[q1](L) at (2,0){\small $\Lambda$};
					\node[c1](S) at (0,-1.5){\small $X$}
					edge[e](O);
					\node[c1](T) at (4,-1.5){\small $Y$}
					edge[e](S);
					\node[c1](X) at (0,1.5){\small $A$}
					edge[e](L)
					edge[e](O)
					edge[e](S);
					\node[c1](Y) at (4,1.5){\small $B$}
					edge[e](L)
					edge[e](T);
				\end{tikzpicture}
				&
				\begin{tikzpicture}[scale=0.55]
					\node[q1](O) at (-1,0){\small $\Omega$};
					\node[q1](L) at (2,0){\small $\Lambda$};
					\node[c1](S) at (0,-1.5){\small $X$}
					edge[e](O);
					\node[c1](T) at (4,-1.5){\small $Y$}
					edge[e](S);
					\node[c1](X) at (0,1.5){\small $A$}
					edge[e](L)
					edge[e](O);
					\node[c1](Y) at (4,1.5){\small $B$}
					edge[e](L)
					edge[e](T);
				\end{tikzpicture}
				\\ (g) & (h) & (i) \\
				\begin{tikzpicture}[scale=0.55]
					\node[q1](G) at (2,-3){\small $\Gamma$};
					\node[q1](L) at (2,0){\small $\Lambda$};
					\node[c1](S) at (0,-1.5){\small $X$}
					edge[e](G);
					\node[c1](T) at (4,-1.5){\small $Y$}
					edge[e](G);
					\node[c1](X) at (0,1.5){\small $A$}
					edge[e](L)
					edge[e](S);
					\node[c1](Y) at (4,1.5){\small $B$}
					edge[e](L)
					edge[e](T);
				\end{tikzpicture}
				&
				\begin{tikzpicture}[scale=0.55]
					\node[q1](G) at (2,-3){\small $\Gamma$};
					\node[q1](L) at (2,0){\small $\Lambda$};
					\node[c1](S) at (0,-1.5){\small $X$}
					edge[e](G);
					\node[c1](T) at (4,-1.5){\small $Y$}
					edge[e](G)
					edge[e](S);
					\node[c1](X) at (0,1.5){\small $A$}
					edge[e](L)
					edge[e](S);
					\node[c1](Y) at (4,1.5){\small $B$}
					edge[e](L)
					edge[e](T);
				\end{tikzpicture}
				&

				\begin{tikzpicture}[scale=0.55]
					\node[q1](G) at (2,-3){\small $\Gamma$};
					\node[q1](O) at (-1,0){\small $\Omega$};
					\node[q1](Z) at (5,0){\small $\Sigma$};
					\node[q1](L) at (2,0){\small $\Lambda$};
					\node[c1](S) at (0,-1.5){\small $X$}
					edge[e](O)
					edge[e](G);
					\node[c1](T) at (4,-1.5){\small $Y$}
					edge[e](Z)
					edge[e](G);
					\node[c1](X) at (0,1.5){\small $A$}
					edge[e](O)
					edge[e](L);
					\node[c1](Y) at (4,1.5){\small $B$}
					edge[e](Z)
					edge[e](L);
				\end{tikzpicture}
				\\ (j) & (k) & (l) \\
				\begin{tikzpicture}[scale=0.55]
					\node[q1](Z) at (5,0){\small $\Sigma$};                 \node[q1](L) at (2,0){\small $\Lambda$};
					\node[c1](S) at (0,-1.5){\small $X$};
					\node[c1](T) at (4,-1.5){\small $Y$}
					edge[e](S)
					edge[e](Z);
					\node[c1](X) at (0,1.5){\small $A$}
					edge[e](L)
					edge[e](S);
					\node[c1](Y) at (4,1.5){\small $B$}
					edge[e](L)
					edge[e](Z);
				\end{tikzpicture}
				&
				\begin{tikzpicture}[scale=0.55]
					\node[q1](Z) at (5,0){\small $\Sigma$};
					\node[q1](O) at (-1,0){\small $\Omega$};
					\node[q1](L) at (2,0){\small $\Lambda$};
					\node[c1](S) at (0,-1.5){\small $X$}
					edge[e](O);
					\node[c1](T) at (4,-1.5){\small $Y$}
					edge[e](S)
					edge[e](Z);
					\node[c1](X) at (0,1.5){\small $A$}
					edge[e](L)
					edge[e](O)
					edge[e](S);
					\node[c1](Y) at (4,1.5){\small $B$}
					edge[e](L)
					edge[e](Z);
				\end{tikzpicture}
				&
				\begin{tikzpicture}[scale=0.55]
					\node[q1](Z) at (5,0){\small $\Sigma$};
					\node[q1](O) at (-1,0){\small $\Omega$};
					\node[q1](L) at (2,0){\small $\Lambda$};
					\node[c1](S) at (0,-1.5){\small $X$}
					edge[e](O);
					\node[c1](T) at (4,-1.5){\small $Y$}
					edge[e](S)
					edge[e](Z);
					\node[c1](X) at (0,1.5){\small $A$}
					edge[e](L)
					edge[e](O);
					\node[c1](Y) at (4,1.5){\small $B$}
					edge[e](L)
					edge[e](Z);
				\end{tikzpicture}

				\\ (m) & (n) & (o) \\
				\begin{tikzpicture}[scale=0.55]
					\node[q1] (X) at (2,  2) {\small $\Lambda$};
					\node[q1] (Y) at (2,  0) {\small $\Omega$};
					\node[c1] (A) at (0, -2){\small $X$};
					\node[c1] (B) at (0, 0){\small $A$}
					edge[e](A)
					edge[e](Y);
					\node[c1] (C) at (0, 2){\small $Y$}
					edge[e](B)
					edge[e](X);
					\node[c1] (D) at (4, 1){\small $B$}
					edge[e](X)
					edge[e](Y);
				\end{tikzpicture}
				&
				\begin{tikzpicture}[scale=0.55]
					\node[q1] (Z) at (-2,  -1) {\small $\Sigma$};
					\node[q1] (X) at (2,  2) {\small $\Lambda$};
					\node[q1] (Y) at (2,  0) {\small $\Omega$};
					\node[c1] (A) at (0, -2){\small $X$}
					edge[e](Z);
					\node[c1] (B) at (0, 0){\small $A$}
					edge[e](Z)
					edge[e](A)
					edge[e](Y);
					\node[c1] (C) at (0, 2){\small $Y$}
					edge[e](B)
					edge[e](X);
					\node[c1] (D) at (4, 1){\small $B$}
					edge[e](X)
					edge[e](Y);
				\end{tikzpicture}
				&
				\begin{tikzpicture}[scale=0.55]
					\node[q1] (Z) at (-2,  -1) {\small $\Sigma$};
					\node[q1] (X) at (2,  2) {\small $\Lambda$};
					\node[q1] (Y) at (2,  0) {\small $\Omega$};
					\node[c1] (A) at (0, -2){\small $X$}
					edge[e](Z);
					\node[c1] (B) at (0, 0){\small $A$}
					edge[e](Z)
					edge[e](Y);
					\node[c1] (C) at (0, 2){\small $Y$}
					edge[e](B)
					edge[e](X);
					\node[c1] (D) at (4, 1){\small $B$}
					edge[e](X)
					edge[e](Y);
				\end{tikzpicture}
				\\ (p) & (q) & (r)
			\end{tabular}
			\caption{A maximal DAG has a set of \(d\)-separation relations unmatched by any latent-free DAG (and is thus \interesting by Theorem~\ref{thm:dsep_for_interesting}) if and only if it has one of these eighteen patterns as a subgraph.}
			\label{fig:d_sep_patterns_compact}
		\end{figure}
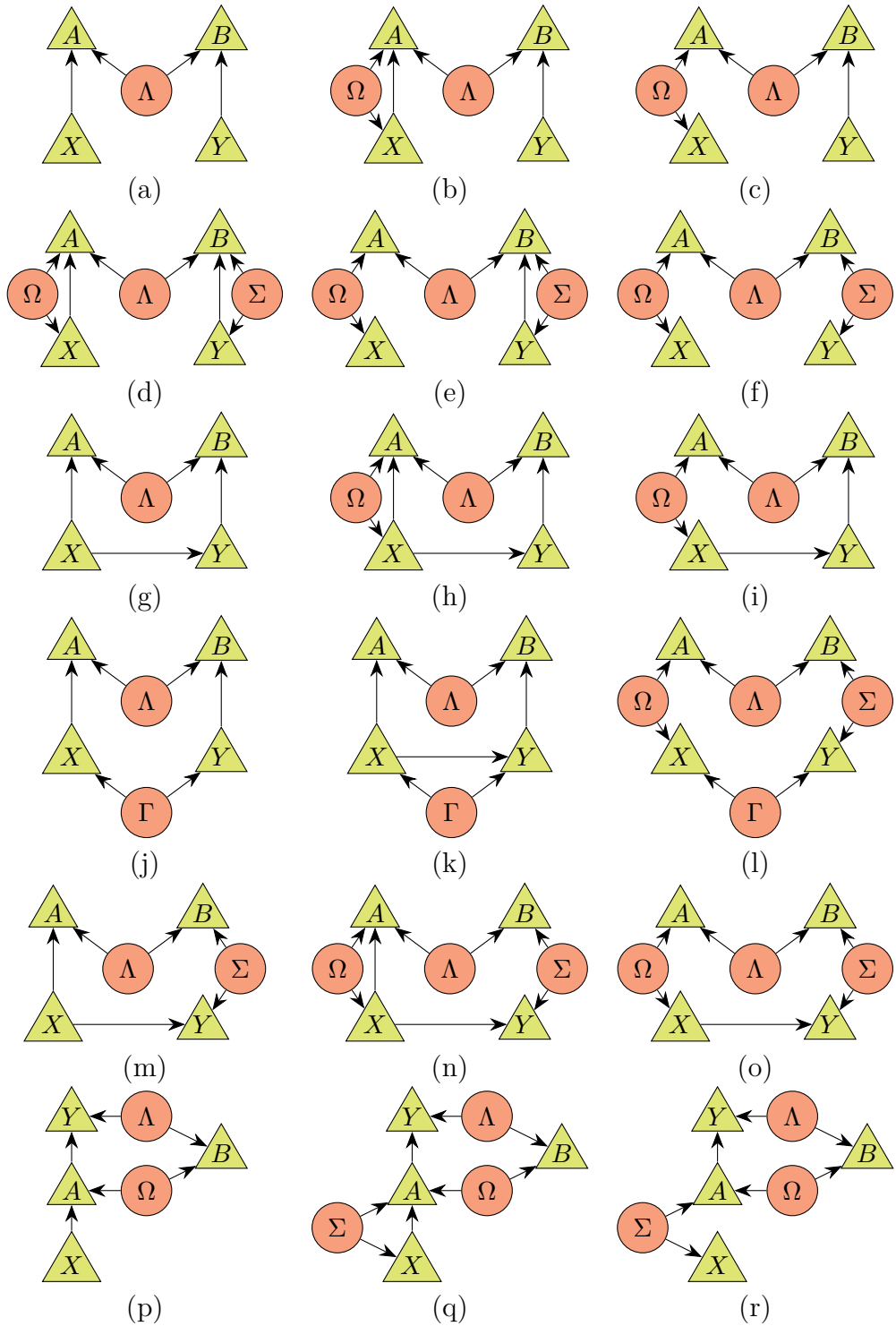
	\end{flushleft}

	\begin{theorem}[rapid \(d\)-separation condition for \interestingness]\label{thm:rapid_dsep_for_interesting}
		Let $G$ be an mDAG. If $G$ is not maximal, then it is \interesting by Theorem~\ref{thm:elie_esep}. If $G$ is maximal, then it has a set of observed \(d\)-separation relations unmatched by any latent-free DAG (\NLF) --- and is therefore \interesting pursuant to Theorem~\ref{thm:dsep_for_interesting} --- if and only if it contains one of the eighteen graph patterns (up to relabelling the nodes) presented in Figure~\ref{fig:d_sep_patterns_compact} as a subgraph.
	\end{theorem}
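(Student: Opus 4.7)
The plan splits the theorem into a trivial piece and a substantive biconditional. The first sentence is just Theorem~\ref{thm:elie_esep} restated, so no work is required there. The rest requires proving, for maximal mDAGs, that the set of observed \(d\)-separation relations is \NLF if and only if one of the eighteen patterns of Figure~\ref{fig:d_sep_patterns_compact} appears as a subgraph.

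For the ``if'' direction I would argue pattern by pattern. For each of the eighteen patterns $P$, I would exhibit at least one pair of nonadjacent observed nodes whose \(d\)-separation profile in $P$ forces adjacency in any latent-free DAG (since latent-free DAGs are maximal, per~\cite[Prop.~3.19]{Richardson2002}), and then verify that no latent-free orientation of the implied skeleton can simultaneously reproduce the remaining observed \(d\)-separations exhibited by $P$. Because adding edges or latent confounders to an mDAG can only remove observed \(d\)-separation relations, never create new ones, any supergraph $G$ containing $P$ as a subgraph inherits the obstruction, and the \(d\)-separation pattern of $G$ is likewise \NLF. This step is essentially checklist work, but has to be carried out for all eighteen patterns.

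The real work lies in the ``only if'' direction, and here I would lean on the argument Evans~\cite{Evans_2022} uses in his proof of Theorem~\ref{equivalence_evans}. That proof effectively gives a canonical construction of a latent-free DAG matching the \(d\)-separations of a maximal mDAG \emph{whenever} such a match is possible, and locates small local obstructions when it is not. The plan is to invert this: catalogue the minimal obstructions, verify by case analysis that each one reduces to the embedding of one of the eighteen patterns, and conclude that any maximal mDAG on which Evans' construction fails must contain one such pattern as a subgraph. Roughly, the case split is organized by the nonadjacent observed-node pair witnessing the failure, classified further by the structure of their latent ancestors and of the observed nodes that could act as conditioning sets; each branch of the classification yields one of the entries of Figure~\ref{fig:d_sep_patterns_compact}.

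Exhaustiveness of the enumeration is the main obstacle: one must be certain that no further minimal obstruction has been overlooked. I would supplement the structural argument with a computational check on all maximal mDAGs of up to five observed nodes, confirming that every \NLF instance indeed contains one of the eighteen patterns as a subgraph. This provides an independent guarantee that Figure~\ref{fig:d_sep_patterns_compact} is genuinely complete, and aligns with the computational enumerations already carried out for the paper's Table~\ref{tab:introsummary}.
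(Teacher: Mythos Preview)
Your inheritance step in the ``if'' direction does not work. You argue that because a supergraph $G$ of a pattern $P$ can only \emph{lose} observed \(d\)-separation relations among the four nodes of $P$, the obstruction to matching a latent-free DAG is inherited. But the implication goes the wrong way: a \emph{smaller} set of \(d\)-separation relations is \emph{easier} for a latent-free DAG to reproduce, not harder (in the extreme, if $G$ has no \(d\)-separations at all among those four nodes, any complete latent-free DAG matches). So merely containing $P$ as a non-induced subgraph tells you nothing about whether $G$'s \(d\)-separation pattern is \NLF. Relatedly, your pattern-by-pattern plan speaks of ``a pair of nonadjacent observed nodes whose \(d\)-separation profile forces adjacency in any latent-free DAG,'' but all eighteen patterns are themselves maximal, so every nonadjacent pair in them \emph{is} \(d\)-separable and hence stays nonadjacent in any latent-free candidate; that particular obstruction mechanism is not what is going on here.

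The paper avoids this pitfall by going through the PAG machinery in Evans~\cite{Evans_2022} rather than arguing pattern by pattern. Evans shows (Prop.~4.2 and Prop.~B.1 there) that a maximal mDAG has \NLF observed \(d\)-separations if and only if its associated PAG contains a locally unshielded collider path of length~3 or a discriminating path of length~3. For a \emph{maximal} mDAG, PAG adjacency coincides with graph adjacency, so this PAG condition translates directly into: some four observed nodes have exactly one of a short list of adjacency-plus-\(d\)-separation profiles. The eighteen mDAGs of Figure~\ref{fig:d_sep_patterns_compact} are then obtained by enumerating all 4-observed-node mDAGs realizing those profiles. Both directions of the biconditional come out of this single PAG characterization; no separate inheritance argument is needed. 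Your ``only if'' sketch is pointed at the right reference but does not isolate the specific PAG criterion (unshielded collider paths and discriminating paths) that actually carries the proof.
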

	\begin{proof}
		Evans'~\cite{Evans_2022} has shown that if a latent-permitting DAG $G$ has a \NLF set of observed \(d\)-separation relations, then either its associated PAG (partial ancestral graph) contains a so-called ``locally unshielded collider path" of length 3, and/or it contains a so-called ``discriminating path" of length 3.\footnote{The phrasing presented in the proof of Proposition 4.2 of Ref.~\cite{Evans_2022} says that either the associated PAG has a locally unshielded collider path of length exactly 3, all options of which are presented in Figure 4 of Ref.~\cite{Evans_2022}, or a discriminating path of length \emph{at least} 3. However, Proposition B.1 of~\cite{Evans_2022} further shows that all discriminating paths of length at least 3 either contain a locally unshielded collider path of length 3 and/or a discriminating path of length exactly 3, which is presented in Figure 5(i) of~\cite{Evans_2022}.} This implies that the PAG associated with $G$ will contain a sub-PAG matching one of the 4-node PAGs depicted within Figures 4 and 5(i) of~\cite{Evans_2022} whenever the observed \(d\)-separation relations of $G$ are \NLF.

		A PAG is an abstract graphical representation of a set of $d$-separation relations. In particular, two nodes are \emph{adjacent} in a PAG if and only the two nodes are \(d\)-unseparable in the original DAG. A ``path of length 3" in a PAG refers to a set of four nodes $\{X,A,B,Y\}$ such that $\{X,A\}$, $\{A,B\}$, $\{B,Y\}$ all constitute \(d\)-unseparable pairs.

		The adjacencies of a DAG $G$ and the adjacencies of the PAG associated with $G$ can, in general, differ. For \emph{maximal} DAGs, however, they must coincide: for those, adjacency is equivalent to \(d\)-unseparability. The PAG associated with a \emph{maximal} DAG $G$ will exhibit an unshielded collider path or a discriminating path if and only if in the original $G$ we can find four nodes $\{X,A,B,Y\}$ such that $\{X,A\}$, $\{A,B\}$, $\{B,Y\}$ represent adjacent pairs and such that the \(d\)-separation relations pertaining exclusively to $\{X,A,B,Y\}$ are of one of ``unshielded collider path" type or ``discriminating path" type. We do not define these two types for brevity, but we exhibit all such 4-node mDAGs in Figure~\ref{fig:d_sep_patterns_compact}.
	\end{proof}

	The set of \(d\)-separation relations of each one of the DAGs in Figure~\ref{fig:d_sep_patterns_compact} is:
	\begin{compactenum}
		\item $(A\perp Y|\emptyset)$  and $(A\perp Y|X)$ and $(B\perp X|\emptyset)$ and \\$(B\perp X|Y)$ and $(X\perp Y)$ and $(X\perp Y|A)$ and $(X\perp Y|B)$ \hfill[Figure~\ref{fig:d_sep_patterns_compact} (a)-(f)]
			\item $(A\perp Y|X)$ and $(B\perp X|Y)$ \hfill[Figure~\ref{fig:d_sep_patterns_compact} (g)-(k)]
			\item $(A\perp Y|\emptyset)$ and $(B\perp X|\emptyset)$ \hfill [Figure~\ref{fig:d_sep_patterns_compact} (l)]
			\item $(B\perp X|\emptyset)$ and $(A\perp Y|X)$\hfill [Figure~\ref{fig:d_sep_patterns_compact} (m)-(o)]
			\item $(B\perp X|\emptyset)$ and $(X\perp Y|A)$  \hfill [Figure~\ref{fig:d_sep_patterns_compact} (p)-(r)]
	\end{compactenum}

	By checking the \(d\)-separation of all the 4-observed-node mDAGs, we know that these are the only mDAGs that present these sets of \(d\)-separation relations. While searching for subgraphs is computationally easy, an alternative is to consider all 4-node subsets of a given large mDAG and ask if the \(d\)-separation relations \emph{pertaining exclusively to some four nodes} contains all and only one of the patterns listed above, up to relabelling. If yes, and if the large mDAG $G$ is maximal, then $G$ certainly contains one of the patterns of Figure~\ref{fig:d_sep_patterns_compact} as a subgraph.

	Figure~\ref{fig_nonmaximal} shows an example of a DAG that does not have any of the DAGs of Figure~\ref{fig:d_sep_patterns_compact} as a subgraph, but nevertheless has a \NLF set of \(d\)-separation relations. However, this is not a problem: this DAG is \emph{not} maximal ($X$ and $Y$ are not \(d\)-separable but are \(e\)-separable by the empty set), so its \interestingness follows from Theorem~\ref{thm:elie_esep}.

	\begin{figure}[h]
		\centering
		\begin{tikzpicture}[scale=0.8]
			\node[c](S) at (-1.5,2.5){$X$};
			\node[c](T) at (5.5,2.5){$Y$};
			\node[q](A) at (0,1){$\Lambda$};
			\node[q](B) at (4,1){$\Gamma$};
			\node[c](C) at (2,2){$C$}
			edge[e] (A)
			edge[e] (B);
			\node[c](E) at (4,4){$B$}
			edge[e](T)
			edge[e] (B)
			edge[e] (C);
			\node[c](D) at (0,4){$A$}
			edge[e](S)
			edge[e] (A)
			edge[e] (C);
		\end{tikzpicture}
		\caption{nonmaximal DAG. It does not have any of the DAGs of Figure~\ref{fig:d_sep_patterns_compact} as a subgraph, but it nevertheless has a \(d\)-separation pattern that does not correspond to any latent-free DAG.}
		\label{fig_nonmaximal}
	\end{figure}
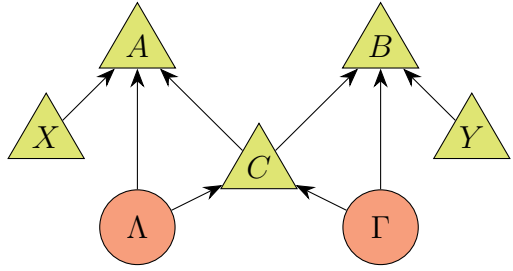

	Finally, we will show an explicit construction of a distribution which is in $\mathcal{I}_G$ but not in $\mathcal{C}_G$ for the maximal mDAGs which are shown \interesting by Theorem~\ref{thm:elie_esep}. First, we note that the Popescu-Rohrlich box support shown in Eq.~\eqref{eq_support_Bell} is not compatible with any of the mDAGs of Figure~\ref{fig:d_sep_patterns_compact}, as can be shown using Fraser's algorithm. This implies that the \emph{uniform} distribution over the events of the Popescu-Rohrlich box support is not classically compatible with any of the mDAGs of Figure~\ref{fig:d_sep_patterns_compact}, and is thus not an element of $\mathcal{C}_G$ for these mDAGs. On the other hand, it is well-known that said distribution obeys all of the conditional independence relations that come from the \(d\)-separation relations of the Bell DAG. As seen above, all the \(d\)-separation relations of the mDAGs of Figure~\ref{fig:d_sep_patterns_compact} are included in the \(d\)-separation relations of the Bell DAG, which implies that such distribution is an element of $\mathcal{I}_G$ for all the mDAGs of Figure~\ref{fig:d_sep_patterns_compact} .

	\section{Supports subsumes \(e\)-separation as a test of observational inequivalence}
	\label{appendix_proof_support_esep}
	Here we provide a full proof of Theorem~\ref{th_supp_subsume_inequivalence}, that says that if DAGs $G$ and $H$ can be shown inequivalent by virtue of having different sets of \(e\)-separation relations, then they can be shown inequivalent by virtue of having different sets of compatible supports at \emph{binary} variables. Before proving this more general result, we will prove the case for \(d\)-separation relations as an auxiliary lemma:

	\begin{lemma}[\(d\)-separation and compatible supports]
		\label{lemma:d-sep_supps}
		Let $G$ be an DAG with observed nodes $\boldsymbol{V}$, with $\boldsymbol{Z}\subseteq V$ being a subset of $\boldsymbol{V}$ and $X\in \boldsymbol{V}$ and $Y\in \boldsymbol{V}$ being two observed nodes. If the DAG $G$ \emph{does not} exhibit the \(d\)-separation relation $(X\perp Y|\boldsymbol{Z})$, then there is at least one support over binary variables which is compatible with $G$ but at the same time is in conflict with $(X\dbot Y|\boldsymbol{Z})$, i.e. is trivially incompatible with every other DAG for which $(X\perp Y|\boldsymbol{Z})$.
	\end{lemma}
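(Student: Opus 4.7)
The plan is to construct an explicit binary-valued distribution $P$ that is classically compatible with $G$ and whose support conflicts with $(X \dbot Y|\boldsymbol{Z})$ in the sense of Definition~\ref{def_supp_conflict}. Since $G$ does not exhibit the $d$-separation $(X\perp Y|\boldsymbol{Z})$, there is an active path $\pi$ connecting $X$ and $Y$ given $\boldsymbol{Z}$: every non-collider on $\pi$ lies outside $\boldsymbol{Z}$ and every collider on $\pi$ either belongs to $\boldsymbol{Z}$ or has some descendant in $\boldsymbol{Z}$. For each collider $C$ on $\pi$ not itself in $\boldsymbol{Z}$, fix one of its descendants $D_C \in \boldsymbol{Z}$ together with a directed path from $C$ to $D_C$. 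Let $T$ consist of all nodes appearing on $\pi$ or on any of these auxiliary descendant paths.

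Next I would define $P$ as follows. For each source of the induced subgraph on $T$ (a node of $T$ with no parent in $T$, including latent sources), assign a fresh uniformly random binary bit. For every other node $V$ of $T$, declare $V$ to be the $\mathbb{F}_2$-XOR of its parents that lie in $T$, ignoring parents outside $T$. For each node $V \notin T$, set $V$ to be the deterministic constant $0$. One verifies directly that $P$ factorises according to the Markov condition with respect to $G$, so $P \in \mathcal{C}_G$, and all observed variables are manifestly binary.

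Then, by the standard inductive analysis of $d$-connecting paths used in the classical completeness-of-$d$-separation proof, conditioning on $\boldsymbol{Z} = \boldsymbol{z}^\ast$ for the specific $\boldsymbol{z}^\ast$ that simultaneously ``opens'' every collider of $\pi$ (obtained by setting each $D_C$ so that the XOR cascade forces the collider to see both of its path-parents) relates $X$ and $Y$ by a fixed $\mathbb{F}_2$-linear equation, and hence perfectly correlates or anti-correlates them. Consequently, among the four pairs $(x,y)\in\{0,1\}^2$, exactly two appear in the support of $P$ restricted to $\boldsymbol{Z}=\boldsymbol{z}^\ast$. Picking $(x^\ast,y^\ast)$ to be one of the two missing pairs, one sees that $\{X=x^\ast,\boldsymbol{Z}=\boldsymbol{z}^\ast\}$ and $\{Y=y^\ast,\boldsymbol{Z}=\boldsymbol{z}^\ast\}$ both appear in the support (each paired with the other variable's complementary value), while $\{X=x^\ast,Y=y^\ast,\boldsymbol{Z}=\boldsymbol{z}^\ast\}$ does not. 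This is exactly the conflict condition of Definition~\ref{def_supp_conflict}, and hence by Lemma~\ref{lemma_trivial_support} this support is trivially incompatible with every DAG that does exhibit $(X\perp Y|\boldsymbol{Z})$.

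The main obstacle will be verifying that the XOR cascade really does deliver the promised perfect (anti-)correlation between $X$ and $Y$ in the presence of pathological interactions between $\pi$ and the auxiliary descendant paths, e.g.\ when $\pi$ revisits a node, when a collider opens via a descendant that already lies on $\pi$, or when two auxiliary paths share internal vertices. The standard remedy, which should transfer here without modification, is to refine the XOR rule so that each node $V\in T$ takes as functional input only a fixed distinguished subset of its $T$-parents, chosen once and for all based on the local orientation of $\pi$ at $V$ and on which auxiliary path $V$ lies on. This keeps $P$ Markov with respect to $G$, keeps all variables binary, and preserves the $\mathbb{F}_2$-linear relation between $X$ and $Y$ conditional on $\boldsymbol{Z}=\boldsymbol{z}^\ast$, so the argument above goes through.
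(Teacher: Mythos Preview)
Your proposal is correct and follows essentially the same route as the paper: both exploit the active path to build a binary distribution in which $X$ and $Y$ become perfectly (anti-)correlated upon conditioning on a suitable value of $\boldsymbol{Z}$, yielding a support that conflicts with $X\dbot Y|\boldsymbol{Z}$. The paper's construction assigns a uniform bit to forks, the identity to chain nodes, and XNOR to colliders (then conditions on $\boldsymbol{Z}=1$), whereas you use XOR throughout (and would condition colliders to $0$); these are equivalent up to relabelling. You are also somewhat more explicit than the paper about colliders that are activated only through a descendant in $\boldsymbol{Z}$ --- the paper effectively assumes the simpler case where every collider on the path lies in $\boldsymbol{Z}$ and defers the details to Appendix~E of Ref.~\cite{finkelstein2021entropic} --- so your inclusion of the auxiliary descendant paths and your caveat about choosing a distinguished subset of $T$-parents are appropriate refinements rather than a different strategy.
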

	\begin{proof}
		We explicitly construct a support such that the marginal events $\{X{=}0,\boldsymbol{Z}=1\}$ and  $\{Y{=}1,\boldsymbol{Z}=1\}$ both occur, but such that the event $\{X{=}0,Y{=}1,\boldsymbol{Z}=1\}$ does not occur. Plainly such a support is \emph{trivially incompatible} with any DAG wherein $(X\perp Y|\boldsymbol{Z})$ per Lemma~\ref{lemma_trivial_support}. Here we show that a distribution with this support \emph{can} arise in every DAG wherein $(X\not\perp Y|\boldsymbol{Z})$. We note that this construction is identical to that which appears in Appendix E of \cite{finkelstein2021entropic}.

		If $(X\not\perp Y|\boldsymbol{Z})$ in $G$, then there exists a path from $X$ to $Y$ in $G$ which is unblocked by $\boldsymbol{Z}$. Every node of the path has either no parents in the path (in which case it is the base of a fork), one parent in the path (if it is the middle node of a chain or an end node of the path) or two parents in the path (if it is a collider). The support is constructed by assigning the following functional dependencies to the nodes of the path:
		\begin{itemize}
			\item A node $F\in\boldsymbol{F}$ (``F" for ``Fork") has 0 parents in the path:
			      \begin{equation*}
				      F= \begin{cases}
					      0 \text{ with probability } 1/2 \\
					      1 \text{ with probability } 1/2
				      \end{cases}
			      \end{equation*}
			\item A node $M\in\boldsymbol{M}$ (``M" for ``Mediary") has 1 parent $P_M$ in the path:
			      \begin{equation*}
				      M=P_M \text{ with unity probability}
			      \end{equation*}
			\item Node $C\in\boldsymbol{C}$ (``C" for ``Collider") has 2 parents $P_{C,1}$ and $P_{C,2}$ in the path:
			      \begin{equation*}
				      C= \begin{cases}
					      0 \text{ if } P_{C,1}\neq P_{C,2} \\
					      1 \text{ if } P_{C,1}=P_{C,2}
				      \end{cases}
			      \end{equation*}
		\end{itemize}

		Since the path is unblocked $\boldsymbol{C}\in\boldsymbol{Z}$ while $\boldsymbol{M}\boldsymbol{F}\not\in\boldsymbol{Z}$.

		This construction leads to a probability distribution compatible with $G$ wherein $P(\boldsymbol{M}=\boldsymbol{F}=0|\boldsymbol{Z}{=}1)=1/2$ and $P(\boldsymbol{M}=\boldsymbol{F}=1|\boldsymbol{Z}{=}1)=1/2$. Since $\{X,Y\}\subset \boldsymbol{M}\boldsymbol{F}$, we confirm that the marginal event $\{X{=}0,\boldsymbol{Z}=1\}$ occurs, and such that the marginal event $\{Y{=}2,\boldsymbol{Z}=1\}$ occurs, yet that the event $\{X{=}0,Y{=}1,\boldsymbol{Z}=1\}$ does not occur.
	\end{proof}

	Using Lemma~\ref{lemma:d-sep_supps}, we can now proceed to proof Theorem~\ref{th_supp_subsume_inequivalence}:

	\Suppesep*
	\begin{proof}
		Suppose that $G$ has an \(e\)-separation relation $(A\eperp B|\boldsymbol{C})_{del_{\boldsymbol{D}}}$ that $H$ does not have, where $A,B\in V$ are observed nodes and $\boldsymbol{C},\boldsymbol{D}\subseteq V$ are sets of observed nodes.

		By the definition of \(e\)-separation, we know that $G_{\boldsymbol{V}\setminus \boldsymbol{D}}$, the DAG obtained by deleting the nodes $\boldsymbol{D}$ from $G$, exhibits the \(d\)-separation relation $A\perp B|\boldsymbol{C}$ while $H_{\boldsymbol{V}\setminus \boldsymbol{D}}$ does not. This implies that \emph{none} of the supports compatible with $G_{\boldsymbol{V}\setminus \boldsymbol{D}}$ are in conflict with $A\dbot B|\boldsymbol{C}$, while from Lemma~\ref{lemma:d-sep_supps} we know that there is at least one support over binary variables which is compatible with $H_{\boldsymbol{V}\setminus \boldsymbol{D}}$ that is in conflict with $A\dbot B|\boldsymbol{C}$. This means that it is possible to find a support $\mathcal{S}_{\boldsymbol{V}\setminus \boldsymbol{D}}$ over binary variables which is compatible with $H_{\boldsymbol{V}\setminus \boldsymbol{D}}$ but not compatible with $G_{\boldsymbol{V}\setminus \boldsymbol{D}}$.

		Let $\mathcal{S}_{\boldsymbol{V}}$ be the support over $\boldsymbol{V}$ which coincides with $\mathcal{S}_{\boldsymbol{V}\setminus \boldsymbol{D}}$ for the variables in ${\boldsymbol{V}\setminus \boldsymbol{D}}$ and where the variables in $\boldsymbol{D}$ are set to a point value. This support is incompatible with $G$, because setting the variables in $\boldsymbol{D}$ to a point value has the same effect on its children as deleting the respective nodes (and thus reaching $G_{\boldsymbol{V}\setminus \boldsymbol{D}}$). Furthermore, $\mathcal{S}_{\boldsymbol{V}}$ is \emph{compatible} with $H$: since $\mathcal{S}_{\boldsymbol{V}\setminus\boldsymbol{D}}$ is compatible with $H_{\boldsymbol{V}\setminus \boldsymbol{D}}$, we can obtain $\mathcal{S}_{\boldsymbol{V}}$ by functional models where the children of $\boldsymbol{D}$ in $G$ ignore the parents in $\boldsymbol{D}$, as indicated in Lemma 13 of~\cite{finkelstein2021entropic}.

		Therefore, $\mathcal{S}_{\boldsymbol{V}}$ is a support over binary variables that is compatible with $H$ but incompatible with $G$.
	\end{proof}

\section{Results for mDAGs of 5 observed nodes}\label{app:5nodes}

    When trying to certify the \interestingness of mDAGs with 5 or more observed nodes, a simple but important consideration should be taken into account. Namely,
    \begin{proposition}\label{prop:nonalgebraic_subgraphs}
        Consider a DAG $G$, as well as the subgraph $G_{\text{del }\boldsymbol{S}}$ of $G$, where $G_{\text{del }\boldsymbol{S}}$ is formed by deleting a strict subset of $G'$s observed nodes from $G$. If $G_{\text{del }\boldsymbol{S}}$ is \interesting, then $G$ is \interesting as well.
    \end{proposition}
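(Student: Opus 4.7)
The plan is to lift a witness distribution for the \interestingness of the subgraph $G_{\text{del }\boldsymbol{S}}$ into a witness for the \interestingness of the full DAG $G$ by point-distributing the deleted observed nodes $\boldsymbol{S}$. Concretely, since $G_{\text{del }\boldsymbol{S}}$ is \interesting, fix a distribution $P'\in \mathcal{I}_{G_{\text{del }\boldsymbol{S}}}\setminus \mathcal{C}_{G_{\text{del }\boldsymbol{S}}}$, and define $P$ on the observed variables of $G$ as the product of $P'$ with a delta distribution on the $\boldsymbol{S}$-coordinates, e.g.\ $P(\boldsymbol{V})=P'(\boldsymbol{V}\setminus \boldsymbol{S})\prod_{S\in \boldsymbol{S}}\delta_{S,0}$. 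The target is to show $P\in \mathcal{I}_G\setminus \mathcal{C}_G$, which by definition will certify that $G$ is \interesting.

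The incompatibility $P\notin \mathcal{C}_G$ is immediate from Lemma~\ref{lem:point_distribution_subgraph}: since $P'$ is exactly the marginal of $P$ on $\boldsymbol{V}\setminus \boldsymbol{S}$, that lemma gives $P\in \mathcal{C}_G$ if and only if $P'\in \mathcal{C}_{G_{\text{del }\boldsymbol{S}}}$, and the right-hand side fails by construction.

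The real content is to verify $P\in \mathcal{I}_G$, i.e.\ that $P$ satisfies every CI relation arising from an observed d-separation $(\boldsymbol{X}\perp \boldsymbol{Y}|\boldsymbol{Z})$ of $G$. If $\boldsymbol{X}$ or $\boldsymbol{Y}$ meets $\boldsymbol{S}$ the CI is trivially satisfied, because point-distributed coordinates are independent of anything. Otherwise, conditioning on the $\boldsymbol{S}$-coordinates of $\boldsymbol{Z}$ is vacuous under $P$, so the desired CI reduces to $(\boldsymbol{X}\dbot \boldsymbol{Y}|\boldsymbol{Z}\setminus \boldsymbol{S})$ holding for $P'$. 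Since $P'\in \mathcal{I}_{G_{\text{del }\boldsymbol{S}}}$, this will follow provided the d-separation $(\boldsymbol{X}\perp \boldsymbol{Y}|\boldsymbol{Z}\setminus \boldsymbol{S})$ holds in $G_{\text{del }\boldsymbol{S}}$, and this is the only place I expect to have to work.

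That d-separation claim is really just bookkeeping on paths, and I would handle it contrapositively. Any path from $\boldsymbol{X}$ to $\boldsymbol{Y}$ in $G_{\text{del }\boldsymbol{S}}$ is also a path in $G$ (it uses only non-$\boldsymbol{S}$ nodes), and I claim that if it is unblocked by $\boldsymbol{Z}\setminus \boldsymbol{S}$ in the subgraph then it is still unblocked by $\boldsymbol{Z}$ in $G$: for chains and forks the middle node lies outside $\boldsymbol{S}$ and so its blocking status is identical in both graphs, while for colliders any certifying descendant in $G_{\text{del }\boldsymbol{S}}$ remains a descendant in $G$ and still lies in $\boldsymbol{Z}\supseteq \boldsymbol{Z}\setminus \boldsymbol{S}$. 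The resulting open path in $G$ would contradict the assumed d-separation in $G$, closing the argument. The main obstacle is simply to be careful with this descendants-and-conditioning bookkeeping; once it is done, the proposition follows immediately.
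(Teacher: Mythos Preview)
Your proposal is correct and follows exactly the paper's approach: extend the witness $P'$ by point-distributing $\boldsymbol{S}$, invoke Lemma~\ref{lem:point_distribution_subgraph} for $P\notin\mathcal{C}_G$, and argue $P\in\mathcal{I}_G$ by showing that any observed \(d\)-separation of $G$ descends to one of $G_{\text{del }\boldsymbol{S}}$ (the paper states this last step as ``no new \(d\)-separation relations are induced by embedding'' without spelling out the path bookkeeping you supply). One minor imprecision: the CI is not automatically satisfied just because $\boldsymbol{X}$ or $\boldsymbol{Y}$ \emph{meets} $\boldsymbol{S}$ --- point-distributed coordinates can be dropped from $\boldsymbol{X}$ and $\boldsymbol{Y}$, but the residual sets $\boldsymbol{X}\setminus\boldsymbol{S}$ and $\boldsymbol{Y}\setminus\boldsymbol{S}$ still require exactly the argument you give in the ``otherwise'' case, so the case split is superfluous rather than wrong.
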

    \begin{proof}
        Proposition~\ref{prop:nonalgebraic_subgraphs} is an immediate consequence of Lemma~\ref{lem:point_distribution_subgraph} from the main text. That is, Lemma~\ref{lem:point_distribution_subgraph} ensure that $P(\boldsymbol{V}\setminus \boldsymbol{S})\not\in\mathcal{C}_{G_{\text{del }\boldsymbol{S}}}$ then $P(\boldsymbol{V})\not\in\mathcal{C}_G$ where ${{P(\boldsymbol{V})\coloneqq P(\boldsymbol{V}\setminus \boldsymbol{S})\delta_{\boldsymbol{S},0^{|\boldsymbol{S}|}}}}$. At the same time, if $P(\boldsymbol{V}\setminus \boldsymbol{S})\in\mathcal{I}_{G_{\text{del }\boldsymbol{S}}}$ then $P(\boldsymbol{V})\in\mathcal{I}_G$ where again ${{P(\boldsymbol{V})\coloneqq P(\boldsymbol{V}\setminus \boldsymbol{S})\delta_{\boldsymbol{S},0^{|\boldsymbol{S}|}}}}$. That $P(\boldsymbol{V})\in\mathcal{I}_G$ follows from the fact that no \emph{new} \(d\)-separation relations are induced on $\boldsymbol{V}\setminus \boldsymbol{S}$ by embedding the DAG $G_{\text{del }\boldsymbol{S}}$ as a subgraph of a larger DAG, namely $G$.
    \end{proof}

    The reader might ask why we did not employ Proposition~\ref{prop:nonalgebraic_subgraphs} in assessing the \interestingness of mDAGs with \emph{4} observed nodes. After all, there are five mDAGs with \emph{3} observed nodes. However, any 4-observed-nodes mDAGs which would be certifiable as \interesting via Proposition~\ref{prop:nonalgebraic_subgraphs} would already be picked up by Theorem~\ref{thm_setwise_nonmaximal}, since that theorem detects 100\% of the \interesting 3-observed-nodes mDAGs by itself.

    \begin{table}[h!]
        \centering
        \renewcommand{\arraystretch}{1.5}
        \begin{tabular}{|p{11cm}|c|}
            \hline
            \small Category                                                                             & \specialcell{\small mDAGs with \\\textbf{5} observed nodes} \\
            \hline
            {\small  Total Count}                                                                       & 1,718,596                      \\
            {\small remaining \# for which the \HLP criterion does not apply}                           & 1,009,961                      \\
            {\small remaining \# for which our nonmaximality condition does not apply}                  & 278,964                        \\
            {\small remaining \# for which our setwise nonmaximality condition does not apply}          & 118,278                        \\
            {\small remaining \# which do not contain an \interesting \mbox{4-observed-nodes} subgraph} & 12,834                         \\
            {\small remaining \# for which our \(d\)-separation condition does not apply}               & 12,834                         \\
            {\small remaining \# for which our \(e\)-separation condition does not apply}               & 12,834                         \\
            \hline
        \end{tabular}
        \caption{Results for mDAGs of 5 observed nodes.}\label{tab:append}
    \end{table}

	The application of the conditions for \interestingness presented here to mDAGs of 5 observed nodes gives the results shown in Table \ref{tab:append}. Although the application of Fraser's algorithm on the remaining 12,834 mDAGs was computationally infeasible, nevertheless, the application of all the other techniques provides a similar success percentage as compared to mDAGs of 4 and 3 observed nodes. Precisely, for mDAGs of 5 observed nodes all the other techniques (apart from supports) reduce the number of unresolved mDAGs of 5 observed nodes by 99.15\%, while this percentage is 99.89\% and 97.8\% for mDAGs of 4 and 3 observed nodes respectively. This result is consistent with the \HLP conjecture, though whether or not it can be considered \emph{evidence} in favor of the conjecture is debatable.

	Unsurprisingly, our  \(d\)-separation condition as articulated in Theorem~\ref{thm:dsep_for_interesting} is now effectively redundant to the conjuction of Proposition~\ref{prop:nonalgebraic_subgraphs} and Theorem~\ref{thm:elie_esep}, since a \emph{maximal} mDAG with 5+ observed nodes will have a set of observed \(d\)-separation relations inequivalent to any latent-free DAG only if the large mDAG contains one of 18 particular 4-observed-nodes mDAGs, as discussed extensively in Appendix~\ref{app_rapid_dsep}.

	The fact that the Theorem~\ref{new_e_sep} does not resolve any further mDAGs as \interesting is evidence in favor of a conjecture that Theorem~\ref{new_e_sep} is perhaps subsumed by the conjunction of Theorems~\ref{thm:elie_esep} and~\ref{thm:dsep_for_interesting}.

\end{appendices}
\end{document}